\documentclass[12pt,reqno]{article}%{article}
\usepackage{fullpage} % decreases margins
\usepackage{amssymb} % AMS-LaTeX + Theorems + Symbols
\usepackage{graphicx} % to include graphics files
\usepackage{amsmath} % AMS-LaTeX + Theorems + Symbols
\usepackage{amsthm,amssymb,latexsym} % AMS-LaTeX + Theorems + Symbols
\usepackage{amstext, amsfonts,amsbsy} % not sure why I need it
\usepackage{enumerate} % easy customization of enumeration
\usepackage{upgreek} % for \upalpha, etc.
\usepackage{color} % adds color
\usepackage{accents} % for \undertilde
\usepackage{mathtools} % for dcases, etc.
\usepackage{float} % forcing figure floats
\usepackage{tikz}
\usetikzlibrary{matrix,graphs,arrows,positioning,calc,decorations.markings,shapes.symbols}
\usepackage[pdftex,bookmarks,colorlinks,breaklinks]{hyperref}  % PDF hyperlinks, with coloured links
\definecolor{dullmagenta}{rgb}{0.4,0,0.4}   % #660066
\definecolor{darkblue}{rgb}{0,0,0.4}
\hypersetup{linkcolor=red,citecolor=blue,filecolor=dullmagenta,urlcolor=darkblue} % coloured links
\usepackage{eucal}

\newtheorem{theorem}{Theorem}%[section]

\newtheorem{lemma}[theorem]{Lemma}

\newcommand{\Pain}[1]{\text{P}_{\mathrm{#1}}}
\newcommand{\dPain}[1]{\text{P}\left(\mathrm{#1}\right)}

\theoremstyle{definition}

\theoremstyle{remark}
\newtheorem{remark}[theorem]{Remark}

\newtheorem*{notation*}{Notation} 

% \numberwithin{equation}{section}
% \numberwithin{theorem}{section}

\begin{document}

{\noindent\Large\bf Recurrence relations for the generalized Laguerre and Charlier orthogonal polynomials and discrete Painlev\'e equations on the $D_{6}^{(1)}$ 
Sakai surface}
\medskip
\begin{flushleft}

\textbf{Xing Li}\\
School of Mathematics and Statistics, Jiangsu Normal University,\\ Xuzhou 221116,  P.R.China\\
E-mail: \href{mailto:xing_li@jsnu.edu.cn}{\texttt{xing\_li@jsnu.edu.cn}}\\[5pt] 

\textbf{Anton Dzhamay}\\
Beijing Institute of Mathematical Sciences and Applications (BIMSA), No. 544,\\ Hefangkou Village, 
Huaibei Town, Huairou District, Beijing 101408, P.R.China\\
%  and \\
% Department of Mathematical Sciences, The University of Northern Colorado, Greeley, CO 80639, USA\\
E-mail: \href{mailto:adzham@bimsa.cn}{\texttt{adzham@bimsa.cn}}\\[5pt]

\textbf{Galina Filipuk}\\
Institute of Mathematics, %Faculty of Mathematics, Informatics and Mechanics, 
University of Warsaw, Banacha 2, Warsaw, 02-097, Poland\\
E-mail: \href{mailto:g.filipuk@uw.edu.pl}{\texttt{g.filipuk@uw.edu.pl}}\\[5pt] 

\textbf{Da-jun Zhang}\\
	Department of Mathematics, Shanghai  University and Newtouch Center for Mathematics, \\ 
Shanghai University, Shanghai 200444,  P.R.China\\
E-mail: \href{djzhang@staff.shu.edu.cn}{\texttt{djzhang@staff.shu.edu.cn}}\\[5pt] 

\emph{Keywords}: orthogonal polynomials, Painlev\'e equations, difference equations,
birational transformations.\\[3pt]

\emph{MSC2010}: 333C47, 34M55, 39A99, 42C05, 3D45, 34M55, 34M56, 14E07, 39A13

\end{flushleft}

\date{\today}

\begin{abstract}
	
This paper concerns the discrete version of the \emph{Painlev\'e identification problem}, i.e., how to recognize a certain 
recurrence relation as a discrete Painlev\'e equation. Often some clues can be seen from the setting of the problem, e.g., when
the recurrence is connected with some differential Painlev\'e equation, or from the geometry of the configuration of 
indeterminate points of the equation. 
The main message of our paper is that, in fact, this only allows us to identify the \emph{configuration space}  of the dynamic system, 
but not the dynamics themselves. The  \emph{refined version} of the identification problem lies in determining, up to the conjugation, 
the translation direction of the dynamics, which in turn requires the full power of the geometric theory of 
Painlev\'e equations. 
% We are also interested in possible constraints on the parameters within the point configuration.

To illustrate this point, in this paper we consider two examples of such recurrences that appear in the theory of 
orthogonal polynomials. We choose these examples because they get regularized on the same family of Sakai surfaces, 
but at the same time are not equivalent, since they result in non-equivalent translation directions. In addition, 
we show the effectiveness of a recently proposed identification procedure for discrete Painlev\'e equations 
using Sakai's geometric approach for answering such questions. In particular, this approach requires no a priori knowledge of 
a possible type of the equation.

% In this paper we consider two examples of certain recurrence relations, or nonlinear discrete dynamical systems,
% that appear in the theory of orthogonal polynomials, from the point of view of Sakai's geometric theory of Painlev\'e equations.
% On one hand, this gives us new examples of the appearance of discrete Painlev\'e equations
% in the theory of orthogonal polynomials. On the other hand, it serves as a good illustration of the effectiveness
% of a recently proposed procedure on how to reduce such recurrences to some canonical discrete Painlev\'e
% equations. Of particular interest is the fact that both recurrences are regularized on the same family
% of rational algebraic surfaces, but at the same time their dynamics are non-equivalent.

\end{abstract}

\section{Introduction} % (fold)
\label{sec:introduction}

It is well known that Painlev\'e equations, both differential and discrete, often appear in the study of various orthogonal
polynomial ensembles, see a monograph of Walter Van Assche, \cite{Van:2018:OPPE}, as well as a more recent survey 
\cite{Van:2022:OPTLPE}. For example, 
discrete Painlev\'e equations describe the dependence of the coefficients of the \emph{three-term recurrence relations} and 
\emph{ladder operators} as
functions of the degree variable $n$, whereas differential Painlev\'e equations sometimes describe their dependence on 
some continuous weight parameters. This relationship is due to a connection between orthogonal polynomial ensembles and
certain isomonodromy problems, 
 which goes back to the seminal paper \cite{FIK:1991:CMP}, where the notion of a discrete Painlev\'e
equation explicitly appeared for the first time, see also \cite{PapNijGraRam:1992:IDPDAPE}. In the discrete case, an important contribution was made
by Alexei Borodin, who developed the \emph{Discrete Riemann-Hilbert Problem} formalism to 
study correlation functions of integrable operators \cite{Bor:2003:DPDPE}, extending the Riemann-Hilbert approach 
developed by Percy Deift in the Random Matrix Theory context, see \cite{BorDei:2002:FDJTRT}. In the series of further works
Borodin and his collaborators rephrased that in terms of isomonodromic transformations of $d$-connections, which 
gave a further geometric explanation of appearance of discrete Painlev\'e equations in a large class of models in 
integrable probability, \cite{BorBoy:2003:DFPDOPE,AriBor:2006:MSDDPE,AriBor:2009:TDITP}.
However, in practice it is often quite difficult
to recognize exactly which Painlev\'e equations appear. And even when we have some problem-specific information that suggests
the type of the corresponding Painlev\'e equations, finding explicit reduction to some canonical form is a highly non-obvious
task. In a recent paper \cite{DzhFilSto:2020:RCDOPWHWDPE} two of the authors, together with Alexander Stokes,
suggested an explicit step-by-step procedure of reducing discrete Painlev\'e equations that appear in applied problems to 
their standard form by finding some appropriate change of coordinates. This procedure is based on the geometric 
theory of discrete Painlev\'e equations created by Hidetaka Sakai  \cite{Sak:2001:RSAWARSGPE}, see also an important recent survey 
paper \cite{KajNouYam:2017:GAPE} and many references therein.  In general, Sakai's geometric theory seems to be a very natural framework for this kind of problems. 

Another interesting aspect of this problem is that, in addition to the geometric classification of discrete Painlev\'e equations by the type of 
the rational algebraic surface on which the equation is regularized, there is a finer classification by the conjugacy class of the translation 
element in the extended affine Weyl symmetry group of the surface, where this translation element defined the actual dynamic. And even within
the same translation class it may be possible to have additional constraints on point configurations resulting in new properties of the resulting
dynamic. Thus, among this manifold of examples, it is interesting to see which ones appear in applications, such as the 
study of orthogonal polynomials. In that context, it is also interesting to understand the relationship between equations that correspond to 
different weights, as well as consider various limits in the weight parameters. 
The geometric approach is particularly well-suited for answering such questions. To better understand the refined version of the discrete
case of the Painlev\'e identification problem, we need to have a large library of concrete examples, where for each recurrence we find not
only the conjugation class of the dynamic, but also the relationship between the  parameters appearing in an applied problem, such as
the weight parameters, and the canonical parameters for the surface, the so-called \emph{root variables}. 
Thus, the present paper can also be considered as one of the steps towards constructing a large library of such algebro-geometric data.

% And to better understand the relationship between various weights for orthogonal polynomials and discrete Painlev\'e equations
% it is important to have a large collection of explicit examples.
%
%
% Thus, there are infinitely many non-equivalent discrete Painlev\'e equations,
% and it is interesting to see which of those equations appear in applied problems, especially among the natural simple equations corresponding to
% short translations in the symmetry sub-lattice.

The two examples we consider in this paper are from the theory of orthogonal polynomials. 
The first recurrence describes the coefficients of ladder operators for a certain generalization 
of Laguerre orthogonal polynomials, and the second describes the coefficients of the three-term recurrence 
relation for generalized Charlier polynomials. Both recurrences get regularized on a rational algebraic surface of type $D_{6}^{(1)}$, but the resulting 
translations are \emph{non-equivalent}. In fact, the first recurrence corresponds to discrete Painlev\'e equation given by 
equation (8.29) in Section 8.1.20 in \cite{KajNouYam:2017:GAPE} (equation \eqref{eq:dP-KNY} in Appendix~\ref{sec:dP-2A1-std}), 
and the second to the so-called alt.~d-$\Pain{II}$ equation in \cite{Sak:2001:RSAWARSGPE} (equation \eqref{eq:dP-Sakai-map} in Appendix~\ref{sec:dP-2A1-std}).
Based on the translations these equations define, we denote them by $[1 \overline{1} \overline{1}1]$  and by $[00 \overline{1}1]$ 
respectively.

% Thus, we see the importance of understanding not just the type of the surface, but also the actual translation element. Further, the geometric approach
% allows us to quickly and completely understand each dynamic, decompose it as a composition of elementary B\"acklund transformations, and also
% match the parameters of the weight with the canonical Painlev\'e parameters.

The paper is organized as follows. In Section~\ref{sec:Laguerre-weight} we give a rather detailed step-by-step illustration on how to use 
the geometric tools of the Sakai theory to solve the Painlev\'e identification problem for the generalized Laguerre weight example. 
In Section~\ref{sec:Charlier-weight} we give a very brief summary of the 
identification for the generalized Charlier example. We collect some of the standard algebro-geometric data for the $D_{6}^{(1)}$ surface 
in Appendix~\ref{sec:dP-2A1-std}.

% section introduction (end)

\section{Generalized Laguerre Weight and the MacDonald Hierarchy} % (fold)
\label{sec:Laguerre-weight}

In this section we consider an example of a recurrence relation on the coefficients of ladder operators that appeared in a 
paper \cite{CheIts:2010:PSLSHRME} by Yang Chen and Alexander Its, who studied a certain linear statistics
for the \emph{unitary Laguerre ensemble}. These statistics can in turn be interpreted as a singular perturbation of the 
standard  Laguerre weight $x^{\alpha} e^{-x}$ by a multiplicative factor $e^{-s/x}$ that induces an 
infinitely strong zero at the origin. This perturbed Laguerre weight was further considered in 
\cite{CheChe:2015:SLSLUEPIDSA,CheFilReb:2019:NDEMLWLEA,XuDaiZha:2015:PAHDSPLW} (see also the references therein).

The setup is as follows. Consider a collection of monic polynomials
$\{P_{n}(x) = x^{n} + p_{1}(n) x^{n-1} + \cdots + P_{n}(0)\}$ 
orthogonal with respect to the weight
$w(x) = w(x; \alpha,s)$, depending on two parameters $\alpha>0, s>0$ 
(we often omit the explicit dependence of parameters or include only the relevant one, usually $s$, in the formulae below):
\begin{equation}
	\int_{0}^{\infty} P_{j}(x) P_{k}(x) w(x; \alpha,s)\, dx = h_{j}(\alpha,s) \delta_{jk},\qquad 
	\begin{aligned}
		&w(x; \alpha,s) = x^{\alpha} e^{-x} e^{-s/x},\\
		&0\leq x < \infty,\  \alpha>0, s>0.
	\end{aligned}
\end{equation}
Orthogonal polynomials satisfy a three-term recurrence relation 
\begin{equation}\label{eq:three-term}
x P_{n}(x) = P_{n+1}(x) + \alpha_{n} P_{n}(x) + \beta_{n} P_{n-1}(x),%\qquad \text{where }\alpha_{n}\in \mathbb{R}\text{ and } \beta_{n} = \frac{h_{n}}{h_{n-1}},
\end{equation}
where $\alpha_{n}\in \mathbb{R}$ and $\beta_{n} = h_{n}/h_{n-1}$.
One can also define the so-called \emph{ladder operators}
\begin{align*}
	\left( \frac{d}{dx} + B_{n}(x) \right) P_{n}(x) &= \beta_{n} A_{n}(x) P_{n-1}(x),\\
	\left( \frac{d}{dx} - B_{n}(x) - v'(x) \right) P_{n-1}(x) &= - A_{n-1}(x) P_{n}(x),\qquad v(x) := - \ln w(x),
\end{align*}
satisfying the fundamental compatibility conditions 
\begin{align*}
	B_{n+1}(x) + B_{n}(x) &= (x - \alpha_{n}) A_{n}(x) - v'(x),\\
	1 + (x-\alpha_{n})(B_{n+1}(x) - B_{n}(x)) &= \beta_{n+1} A_{n+1}(x) - \beta_{n} A_{n-1}(x).
\end{align*}
In this particular case the ladder operators can be parameterized by some variables $a_{n}$ and $b_{n}$ as follows:
\begin{alignat*}{2}
	A_{n}(x) &= \frac{1}{x} + \frac{a_{n}}{x^{2}}, \qquad & 
	a_{n}(s) &= \frac{s}{h_{n}} \int_{0}^{\infty} \frac{P_{n}^{2}(x)}{x} w(x)\, dx,\\
	B_{n}(x) &= -\frac{n}{x} + \frac{b_{n}}{x^{2}}, \qquad & 
	b_{n}(s) &= \frac{s}{h_{n-1}} \int_{0}^{\infty} \frac{P_{n}(x)P_{n-1}(x)}{x} w(x)\, dx.
\end{alignat*}
These variables satisfy the recurrence relation
\begin{equation}
\begin{aligned}\label{eq:recurr}
b_n+b_{n+1}&=s-(2n+1+\alpha+a_{n})a_{n},\\
(b_{n}^2-s b_{n})(a_{n}+a_{n-1})&=[n s-(2n+\alpha)b_{n}]a_{n} a_{n-1},
\end{aligned}
\end{equation}
which is the recurrence relation that we study. For the orthogonal polynomial ensemble under 
consideration the initial conditions for this recurrence are given by
\begin{equation*}
a_0(s)=\sqrt{s}\frac{K_\alpha(2\sqrt{s})}{K_{\alpha+1}(2\sqrt{s})},\qquad b_0(s)=0,
\end{equation*}
where $K_{\alpha}(z)$ is the MacDonald function (the modified Bessel function) of the second kind, and for this 
reason this recurrence is called the MacDonald hierarchy in \cite{CheIts:2010:PSLSHRME}. Note that it is also possible
to explicitly express the coefficients $\alpha_{n}$ and $\beta_{n}$ of the three-term recurrence relation in terms of 
variables $a_{n}$ and $b_{n}$,
\begin{equation*}
	\alpha_{n} = 2n + 1 + \alpha + a_{n},\qquad \beta_{n} a_{n}^{2} = [ns - (2n + \alpha)b_{n}]a_{n} - (b_{n}^{2} - s b_{n}).
\end{equation*}

Another important objects associated with the weight $w(x;\alpha,s)$ is the determinant $D_{n}(s)$ of its moment matrix, also known as the 
\emph{Hankel determinant}, and its logarithmic derivative $H_{n}(s)$,
% Through the definition of Hankel determinant
\begin{equation}
D_n(s)=\det\left[\int_0^\infty x^{j+k}x^\alpha e^{-x-s/x}\right]_{j,k=0}^{n-1} = \prod_{j=0}^{n-1} h_{j}(\alpha,s),\qquad 
H_{n}(s):= s \frac{d}{ds}\ln D_{n}(s).
\end{equation}
As shown in \cite{CheIts:2010:PSLSHRME}, the quantities $a_{n}$ and $b_{n}$ (and hence   $\alpha_{n}$ and $\beta_{n}$) can be written in terms of this 
logarithmic derivative as follows:
\begin{equation}
	\begin{aligned}
		a_{n}&=H_{n}-H_{n+1},\\
		b_{n}&=\frac{ns+\delta^{2}H_{n}[H_{n}-n(n+\alpha)]}{2n+\alpha+\delta^2H_{n}},\quad \text{where }\delta^{2}H_{n} := H_{n-1}-H_{n+1} = a_{n}+a_{n-1}.
	\end{aligned}
\end{equation}
Substituting the above into \eqref{eq:recurr} results 
\begin{equation}\label{eq:Hn}
	\begin{aligned}
		&\{[H_n-n(n+\alpha)]\delta^2H_n+ns\}\{[H_n-n(n+\alpha)-s]\delta^2H_n-(n+\alpha)s\} \\
		&\qquad  =(2n+\alpha+\delta^2H_n)\{ns+(2n+\alpha)[n(n+\alpha)-H_n]\}(H_n-H_{n+1})(H_{n-1}-H_n).			
	\end{aligned}
\end{equation}
It is not difficult to show that conversely, equation \eqref{eq:Hn} is equivalent to recurrence relations \eqref{eq:recurr}.

In \cite{CheIts:2010:PSLSHRME} Chen and Its gave the Lax Pair for equation \eqref{eq:Hn}, and so this is an 
integrable discrete equation. Moreover, in \cite{CheIts:2010:PSLSHRME,CheFilReb:2019:NDEMLWLEA} it has been argued that it can be obtained as some
composition of elementary B\"acklund-Schlesinger transformation of the (generic) third Painlev\'e equation, and hence it should be
one of discrete Painlev\'e equations on the $D_{6}^{(1)}$ Sakai surface. However, an explicit identification with one of the 
standard discrete Painlev\'e equations was not provided. In this section we show that 
\eqref{eq:recurr} is indeed equivalent to the standard  d-$\dPain{2A_1^{(1)}/D_6^{(1)}}$ equation 
as written in \cite{KajNouYam:2017:GAPE}. However, we first change the notation from $a_{n}$ 
and $b_{n}$ to $x_{n}$ and $y_{n}$ to avoid clashing with the standard notation for root variables in the 
geometric Painlev\'e theory. Thus, our main object of study is the following discrete dynamical system:
\begin{equation}\label{eq:Laguerre-xy-rec}
\left\{\begin{aligned}
(x_{n}+x_{n-1}) (y_{n}^2-s y_{n}) &=\big(n s-(2n+\alpha)y_{n}\big)x_{n} x_{n-1},\\
y_n+y_{n+1}&=s-(2n+1+\alpha+x_{n})x_{n}.
\end{aligned}\right.
\end{equation}
Our main result is the explicit change of variables and parameter identification establishing the
above equivalence that is given in the following Theorem.

\begin{theorem}\label{thm:Laguerre-coord-change}
	Recurrence \eqref{eq:Laguerre-xy-rec} is equivalent to the standard d-P$(2A_1^{(1)}/D_6^{(1)})$ equation \eqref{eq:dP-KNY},
	which we reproduce here for the convenience of the reader,
	\begin{equation*}
	% \overline{q}+q=-\frac{a_2}{p}-\frac{a_1}{p-1},\qquad p+ \underline{p} = 1+\frac{1-a_{2}-a_{1}}{q}-\frac{t}{q^2},
	q_{n+1}+q_{n}=-\frac{a_2}{p_{n}}-\frac{a_1}{p_{n}-1},\qquad p_{n}+ p_{n-1} = 1+\frac{1-a_{2}-a_{1}}{q_{n}}-\frac{t}{q_{n}^2},
	\end{equation*}	
	via the following explicit change of coordinates and parameter identification (here $a_{i}$ denote the \emph{root variables},
	or the canonical parameters for a discrete Painlev\'e equation,
	see Section~\ref{sub:Laguerre-KNY-pars})
	\begin{equation}\label{eq:Laguerre-KNY-coord-change}
		\left\{
		\begin{aligned}
			x_{n}(q,p) &= \frac{t}{q_{n}},\\
			y_{n}(q,p) &= -p_{n}t - \frac{t(a_{2}-a_{0})}{q_{n}} - \frac{t^{2}}{q_{n}^{2}} \\ &= t(p_{n-1} -1 ),\\
			n&=a_{1} - 1=-a_{0},\\  \alpha &= a_{2} - a_{1},\ s=-t,			
		\end{aligned}
		\right. 
		\quad 
		\left\{
		\begin{aligned}
		q_{n}(x,y) &= -\frac{s}{x_{n}},\\
		p_{n}(x,y) &= \frac{x_{n}^{2} +x_{n}(1 + 2n + \alpha)+ y_{n}}{s} \\ &= 1 - \frac{y_{n+1}}{s},\\
		a_0&=-n,\ a_{1}=1+n,\\ 
		a_2&=1 + n+\alpha,\ a_{3}=-n-\alpha,\  t=-s.			
		\end{aligned}
		\right.
	\end{equation}
	% The change of coordinates generated by reflection $w_1w_2$ is:
\end{theorem}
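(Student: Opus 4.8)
The plan is to establish the equivalence geometrically, following the procedure of \cite{DzhFilSto:2020:RCDOPHWDPE}, and then to confirm the result by a direct substitution. The change of coordinates and parameter matching recorded in \eqref{eq:Laguerre-KNY-coord-change} is the \emph{output} of that procedure, so the proof really consists of (i) deriving these formulas and (ii) checking that they do what the Theorem claims. I will first reinterpret \eqref{eq:Laguerre-xy-rec} as a non-autonomous birational map: solving the first equation for $x_{n-1}$ and the second for $y_{n+1}$ shows that, after choosing a suitable pair among the variables as coordinates on $\mathbb{P}^{1}\times\mathbb{P}^{1}$, the recurrence defines a birational map $\varphi_{n}$ whose coefficients depend on $n$ (and on $\alpha,s$).

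Next I would regularize $\varphi_{n}$. Computing the indeterminacy locus of $\varphi_{n}$ and of $\varphi_{n}^{-1}$ and resolving it by blowing up $\mathbb{P}^{1}\times\mathbb{P}^{1}$ at eight points — some of which are infinitely near, so the order of the blow-ups matters — yields a family of rational surfaces $\mathcal{X}$ on which $\varphi_{n}$ lifts to an isomorphism $\mathcal{X}_{n}\to\mathcal{X}_{n+1}$. Inspecting the configuration of $(-2)$-curves in the anti-canonical divisor on $\mathcal{X}_{n}$ identifies the surface type as $D_{6}^{(1)}$ and the symmetry type as $2A_{1}^{(1)}$, which is the answer anticipated in \cite{CheIts:2010:PSLSHRME,CheFilReb:2019:NDEMLWLEA}.

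Then I would compute the induced action $\varphi_{n}^{*}$ on $\mathrm{Pic}(\mathcal{X}_{n})$ in a basis adapted to the $D_{6}^{(1)}$ configuration, express it through the simple reflections and the Dynkin-diagram automorphisms generating the extended affine Weyl group $\widetilde{W}(2A_{1}^{(1)})$, and recognize it as the translation that \cite{KajNouYam:2017:GAOPE} realizes by equation (8.29), i.e. the one we denote $[1\overline{1}\overline{1}1]$; tracking its action on the root variables (being careful with the half-integer bookkeeping) pins down the parameter correspondence $n=-a_{0}=1-a_{1}$, $\alpha=a_{2}-a_{1}$, $s=-t$. Matching the eight-point configuration on $\mathcal{X}_{n}$ with the standard one used for d-P$(2A_{1}^{(1)}/D_{6}^{(1)})$ in Appendix~\ref{sec:dP-2A1-std} produces an explicit isomorphism of surfaces which, in the affine charts, is exactly the birational map \eqref{eq:Laguerre-KNY-coord-change}.

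To finish rigorously I would verify \eqref{eq:Laguerre-KNY-coord-change} directly: a short computation, using $a_{2}-a_{0}=1+2n+\alpha$ and $s=-t$, shows that $(q,p)\mapsto(x,y)$ and $(x,y)\mapsto(q,p)$ are mutually inverse; then substituting $x=t/q$, the corresponding formula for $y$, and the root-variable shift implementing $n\mapsto n+1$ into the standard equation \eqref{eq:dP-KNY} and simplifying must reproduce \eqref{eq:Laguerre-xy-rec}. The genuinely non-routine part of the whole argument is the regularization together with the identification of the translation element: one has to locate all base points, including the infinitely near ones, blow them up in the correct order, and then carry out the somewhat delicate comparison of the resulting surface with the standard $D_{6}^{(1)}$ model — once that comparison is in hand, the explicit formulas and the final verification are mechanical.
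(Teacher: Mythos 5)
Your plan follows essentially the same route as the paper's own proof: regularization by the eight blowups, identification of the $D_{6}^{(1)}$ surface and of the translation $[1\overline{1}\overline{1}1]$ on the $2A_{1}^{(1)}$ symmetry lattice, determination of the root variables (the paper does this via the period map with $\omega=(k/x^{2})\,dx\wedge dy$, $k=-1$), and extraction of the M\"obius change of coordinates from the divisor correspondence, as in Sections~\ref{sub:the_singularity_structure_of_the_recurrence_relation}--\ref{sub:Laguerre-to-KNY}. The only detail your "careful bookkeeping" glosses over, and which the paper makes explicit, is that the naive root-basis identification gives the \emph{inverse} translation $\langle -1,1,1,-1\rangle\delta$ and must be corrected by the diagram automorphism $(\alpha_{0}\alpha_{1})(\alpha_{2}\alpha_{3})$ before the coordinate matching; otherwise the argument is the same.
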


\begin{remark} This result can, after the fact,  be established by a direct observation, see \cite{Van:2022:OPTLPE}. However, that requires
	knowing the type of the surface, as well as the equation that we are expected to match. The advantage of the geometric approach is that
	it requires no prior knowledge of the resulting equation. And the knowledge of just the surface type of the equation is not sufficient,
	as we show in the second example below.
\end{remark}

We now give a detailed proof of this theorem following the algorithmic approach of \cite{DzhFilSto:2020:RCDOPWHWDPE}.

\subsection{The singularity structure of the recurrence relation} % (fold)
\label{sub:the_singularity_structure_of_the_recurrence_relation}

Recurrence \eqref{eq:Laguerre-xy-rec} defines two half-step mappings,  
the \emph{forward half-mapping} 
\begin{equation}\label{eq:Lag-ph1}
	\tilde{\varphi}_{1}^{n}:(x_{n},y_{n})\mapsto (x_{n},y_{n+1}) = \left(x_{n},s - y_{n} - x_{n}( 1 + 2 n + x_{n} + \alpha)\right),
\end{equation}
and the \emph{backward half-mapping}
\begin{equation}\label{eq:Lag-ph2}
	\tilde{\varphi}_{2}^{n}:(x_{n},y_{n})\mapsto (x_{n-1},y_{n}) = 
		\left( \frac{x_{n}y_{n}(y_{n} - s)}{n x_{n} (s - 2 y_{n}) - y_{n}(\alpha x_{n} + y_{n} - s)} ,y_{n}\right).
\end{equation}

The first step of the process is to extend the mapping from $\mathbb{C} \times \mathbb{C}$ to $\mathbb{P}^{1} \times \mathbb{P}^{1}$ by 
introducing three additional charts $(X,y)$, $(x,Y)$, and $(X,Y)$, where $X = 1/x$, $Y = 1/y$, and 
then to find the \emph{base points} of the mapping where the numerator and the denominator of rational functions defining the mapping 
simultaneously vanish. It is also convenient to omit the index $n$ by introducing the standard notation 
$x:= x_{n}$, $\overline{x}:=x_{n+1}$, $\underline{x}:=x_{n-1}$, and similarly for $y$.

Looking at the backward half-mapping in the $(x,y)$-chart,
\begin{equation*}
	\tilde{\varphi}_{2}:(x,y)\mapsto (\underline{x},y) = 
		\left( \frac{xy(y - s)}{n x (s - 2 y) - y(\alpha x+y - s)} ,y\right),
\end{equation*}
we immediately see two base points, $q_{1}(0,0)$ and $q_{3}(0,s)$ (the reason for the choice for indices will be clear once we find all of the base points)
for the mapping $\underline{x} = \underline{x}(x,y)$.
To resolve the indeterminacy of the mapping, for each base point $q_{i}(x_{i},y_{i})$ we introduce two new local 
coordinate charts $(u_{i},v_{i})$ and $(U_{i},V_{i})$ by
\begin{equation*}
	x = x_{i} + u_{i} = x_{i} + U_{i} V_{i},\quad y = y_{i} + u_{i} v_{i} = y_{i} + V_{i}. 
\end{equation*}
This is known as the \emph{blowup procedure} in algebraic geometry, see standard textbooks \cite{Sha:2013:BAG1,GriHar:1978:POAG}
(or \cite{DzhFilSto:2020:RCDOPWHWDPE} for a brief summary sufficient for our purposes). This procedure modifies the geometry of the domain (and hence the 
range) of the mapping by gluing an \emph{exceptional divisor} $F_{i}\simeq\mathbb{P}^{1}$ in place of $q_{i}$. Locally, in $(u_{1},v_{1})$  (resp.~$(U_{1},V_{1})$)
chart $F_{i}$ is given by the equation $u_{i}=0$ (resp.~$V_{i}=0$) and is parameterized by $v_{i}$ (resp.~$U_{i}$) with $U_{i} = 1/v_{i}$. We then need to extend 
the mapping to $F_{i}$ via coordinate substitution and then check to see whether there are any new base points on $F_{i}$.

In particular, after the substitution 
$(x,y)=(u_{1},u_{1}v_{1})$ the expression for $\underline{x}$ becomes
\begin{equation*}
	\underline{x}(u_{1},v_{1})= \frac{u_1 v_1 \left(s-u_1 v_1\right)}{v_1 \left(u_1 \left(\alpha +2 n+v_1\right)-s\right)-n s},
\end{equation*}
and so we see a new base point $q_{2}(u_{1}=0,v_{1}=-n)$. 
Performing a similar computation in the $(U_{1},V_{1})$ chart and then in new $(u_{2},v_{2})$ and $(U_{2},V_{2})$ charts gives no new base points for 
the whole backward half-mapping $\tilde{\varphi}_{2}$ and the mapping 
in these charts is now well-defined everywhere. We call this configuration of base points 
\begin{equation*}
q_{1}(x=0,y=0)\leftarrow q_{2}(u_{1}=0,v_{1}=-n)	
\end{equation*}
a \emph{degeneration cascade}
resolving the base point $q_{1}$. We get a similar degeneration cascade resolving the point $q_{3}$
\begin{equation*}
	q_{3}(x=0,y=s)\leftarrow q_{4}(u_{3}=0,v_{3}=-n-\alpha).
\end{equation*}
% $q_{3}(x=0,y=s)\leftarrow q_{4}(u_{3}=0,v_{3}=-n-\alpha)$. 
There are no more singular points
for the backward half-mapping, but for the forward half-mapping $\tilde{\varphi}_{1}$, when considered in the $(X,Y)$ chart, we get another cascade of base points,
\begin{align*}
	q_5\left(X=0,Y=0\right)&\leftarrow 
	q_6\left(u_5=X=0,v_5=xY=0\right)\leftarrow 
	q_7\left(u_6=X,v_6=x^{2}Y=-1\right)\\
           &\leftarrow q_8(u_7=X,v_7=x(1 + x^{2}Y)=1+2n+\alpha).
\end{align*}

Blowing up all $8$ points $q_{i}$ gives us a (family of) rational algebraic surfaces parameterized by $n,\alpha,s$ (coordinates of the 
base points); $\mathcal{X} = \mathcal{X}_{n,\alpha,s}$. 
An important object associated with this family is the \emph{Picard lattice} 
\begin{equation*}
	\operatorname{Pic}(\mathcal{X}) = \operatorname{Span}_{\mathbb{Z}}\{\mathcal{H}_{x}, \mathcal{H}_{y},\mathcal{F}_{1},\ldots \mathcal{F}_{8}\}\simeq 
	\operatorname{Cl}(\mathcal{X})
\end{equation*}
generated by the classes $\mathcal{H}_{x,y}=[H_{x,y=k}]$ of coordinate lines and classes $\mathcal{F}_{i} = [F_{i}]$ of the exceptional divisors. 
$\operatorname{Pic}(\mathcal{X})$ is equipped with the symmetric bilinear \emph{intersection form} given by 
\begin{equation}\label{eq:int-form}
\mathcal{H}_{x}\bullet \mathcal{H}_{x} = \mathcal{H}_{y}\bullet \mathcal{H}_{y} = \mathcal{H}_{x}\bullet \mathcal{F}_{i} = 
\mathcal{H}_{y}\bullet \mathcal{F}_{j} = 0,\qquad \mathcal{H}_{x}\bullet \mathcal{H}_{y} = 1,\qquad  \mathcal{F}_{i}\bullet \mathcal{F}_{j} = - \delta_{ij}	
\end{equation}
on the generators, and then extended by linearity.

This configuration of the base points and the rational algebraic surface $\mathcal{X}$ obtained after we do all eight blowups
are shown on Figure~\ref{fig:Laguerre-pt-conf}. We also mark on the surface $\mathcal{X}$ a number of special curves, such as the exceptional divisors 
$F_{i}$ and some other curves with the self-intersection $-1$ and $-2$.

\begin{figure}[ht]
	\begin{center}		
	\begin{tikzpicture}[>=stealth,basept/.style={circle, draw=red!100, fill=red!100, thick, inner sep=0pt,minimum size=1.2mm}]
	\begin{scope}[xshift=0cm,yshift=0cm]
	\draw [black, line width = 1pt] (-0.2,0) -- (3.2,0)	node [pos=0,left] {\small $H_{y}$} node [pos=1,right] {\small $y=0$};
	\draw [black, line width = 1pt] (-0.2,3) -- (3.2,3) node [pos=0,left] {\small $H_{y}$} node [pos=1,right] {\small $y=\infty$};
	\draw [black, line width = 1pt] (0,-0.2) -- (0,3.2) node [pos=0,below] {\small $H_{x}$} node [pos=1,xshift = -7pt, yshift=5pt] {\small $x=0$};
	\draw [black, line width = 1pt] (3,-0.2) -- (3,3.2) node [pos=0,below] {\small $H_{x}$} node [pos=1,xshift = 7pt, yshift=5pt] {\small $x=\infty$};
	\node (q1) at (0,0) [basept,label={[xshift = 10pt, yshift=-3pt] \small $q_{1}$}] {};
	\node (q2) at (-0.5,0.5) [basept,label={[xshift = -7pt, yshift=-7pt] \small $q_{2}$}] {};
	\node (q3) at (0,1) [basept,label={[xshift = 10pt, yshift=-3pt] \small $q_{3}$}] {};
	\node (q4) at (-0.5,1.5) [basept,label={[xshift = -7pt, yshift=-7pt] \small $q_{4}$}] {};
	\node (q5) at (3,3) [basept,label={[xshift = -7pt, yshift=-15pt] \small $q_{5}$}] {};
	\node (q6) at (2.3,3) [basept,label={[xshift=3pt,yshift=0pt] \small $q_{6}$}] {};
	\node (q7) at (1.8,3.5) [basept,label={[xshift=3pt,yshift=-3pt] \small $q_{7}$}] {};
	\node (q8) at (1.3,3.5) [basept,label={[yshift=-3pt] \small $q_{8}$}] {};
	\draw [red, line width = 0.8pt, ->] (q2) -- (q1);
	\draw [red, line width = 0.8pt, ->] (q4) -- (q3);
	\draw [red, line width = 0.8pt, ->] (q6) -- (q5);
	\draw [red, line width = 0.8pt, ->] (q7) -- (q6);
	\draw [red, line width = 0.8pt, ->] (q8) -- (q7);
	\end{scope}
	\draw [->] (6,1.5)--(4,1.5) node[pos=0.5, below] {$\operatorname{Bl}_{q_{1}\cdots q_{8}}$};
	\begin{scope}[xshift=8.5cm,yshift=0cm]
	\draw[red, line width = 1pt] (0.3,0) -- (3.7,0) node [pos=0.5,above] {\small $H_{y}-F_{1}$};
	\draw[red, line width = 1pt] (3.2,-0.2) -- (4.2,0.8) node [pos=0,below] {\small $H_{x}-F_{5}$};
	\draw[blue, line width = 1pt] (-0.2,0.8) -- (0.8,-0.2) node [pos=1,below] {\small $F_{1}-F_{2}$};
	\draw[red, line width = 1pt] (0.1,0.1) -- (1.1,1.1) node [pos=1,right] {\small $F_{2}$};
	\draw[blue, line width = 1pt] (0,0.3) -- (0,3.2) node [pos=1,above] {\small $H_{x}-F_{13}$};
	\draw[blue, line width = 1pt] (-0.8,1.5) -- (0.5,1.5) node [pos=0,left] {\small $F_{3}-F_{4}$};
	\draw[red, line width = 1pt] (-0.5,1) -- (-0.5,2) node [pos=1,above] {\small $F_{4}$};
	\draw[blue, line width = 1pt] (4,0.3) -- (4,2.7) node [pos=0.5,right] {\small $F_{5}-F_{6}$};
	\draw[blue, line width = 1pt] (4.2,2.2) -- (3.2,3.2) node [pos=1,above] {\small $F_{6}-F_{7}$};
	\draw[blue, line width = 1pt] (-0.2,3) -- (3.7,3) node [pos=0,left] {\small $H_{y}-F_{56}$};
	\draw[blue, line width = 1pt] (3.9,2.9) -- (2.9,1.9) node [pos=1,left] {\small $F_{7}-F_{8}$};
	\draw[red, line width = 1pt] (2.8,2.8) -- (3.8,1.8) node [pos=0, yshift=-3pt, left] {\small $F_{8}$};
	\end{scope}
	\end{tikzpicture}
	\end{center}
	\caption{The Sakai Surface for the generalized Laguerre recurrence} 
	\label{fig:Laguerre-pt-conf}
\end{figure}	

Note that this configuration of the blowup points lies on a (reducible) bi-quadratic curve given by the equation 
$x^{2}=0$ in the affine $(x,y)$-chart. This curve is the polar divisor of a symplectic form $\omega = (k/x^{2}) dx \wedge dy$,
where $k$ is an arbitrary constant,
and the class of the divisor of this symplectic form $[\omega]$ in the \emph{Picard lattice} is called the \emph{canonical} 
divisor class, $\mathcal{K}_{\mathbb{P}^{1}\times \mathbb{P}^{1}}$. 
We are interested in the dual, or \emph{anti-canonical}, divisor class of the lift of $\omega$ to the surface $\mathcal{X}$:
\begin{equation}
	-\mathcal{K}_{\mathcal{X}} = -[K_\mathcal{X}] = 2 \mathcal{H}_{x} + 2 \mathcal{H}_{y} - \mathcal{F}_{1} - \cdots - \mathcal{F}_{8}
	= \delta_{0} + \delta_{1} + 2 \delta_{2} + 2 \delta_{3} + 2 \delta_{4} + \delta_{5} + \delta_{6},
\end{equation}
where $\delta_{i} = [d_{i}]$ are classes of the \emph{irreducible components} of the anti-canonical divisor $K_{\mathcal{X}}$
shown on Figure~\ref{fig:Laguerre-surface-roots}. The irreducible components $d_{i}$ are $-2$ curves that are shown on Figure~\ref{fig:Laguerre-pt-conf}
on the right. Note that sometimes it is convenient to abbreviate 
$\mathcal{F}_{i_{1}}+\mathcal{F}_{i_{2}}+\cdots +\mathcal{F}_{i_{k}}$ as $\mathcal{F}_{i_{1}i_{2}\cdots i_{k}}$.
\begin{figure}[ht]
	\begin{equation}\label{eq:Laguerre-surface-rb-prelim}
		\raisebox{-32.1pt}{\begin{tikzpicture}[
			elt/.style={circle,draw=black!100,thick, inner sep=0pt,minimum size=2mm}]
			\path
			(-1,1)
			node (d0) [elt, label={[xshift=-10pt, yshift = -10 pt] $\delta_{0}$} ] {}
			(-1,-1)
			node (d1) [elt, label={[xshift=-10pt, yshift = -10 pt] $\delta_{1}$} ] {}
			( 0,0)
			node   (d2) [elt, label={[xshift=0pt, yshift = -25 pt] $\delta_{2}$} ] {}
			( 1,0)
			node   (d3) [elt, label={[xshift=0pt, yshift = -25 pt] $\delta_{3}$} ] {}
			( 2,0)
			node   (d4) [elt, label={[xshift=0pt, yshift = -25 pt] $\delta_{4}$} ] {}
			( 3,1)
			node   (d6) [elt, label={[xshift=10pt, yshift = -10 pt] $\delta_{6}$} ] {}
			( 3,-1) node (d5) [elt, label={[xshift=10pt, yshift = -10 pt] $\delta_{5}$} ] {};
			\draw [black,line width=1pt ] (d0) -- (d2) -- (d1)  (d2) -- (d3) --(d4) (d6) -- (d4) -- (d5);
			\end{tikzpicture}} \qquad
		\begin{alignedat}{2}
			\delta_{0} &=  \mathcal{F}_{1}-\mathcal{F}_{2}, &\qquad  \delta_{4} &=\mathcal{F}_{6}-\mathcal{F}_{7},\\
			\delta_{1} &= \mathcal{F}_{3}-\mathcal{F}_{4}, &\qquad  \delta_{5} &= \mathcal{F}_{5}-\mathcal{F}_{6},\\
			\delta_{2} &= \mathcal{H}_{x} - \mathcal{F}_{13}, &\qquad  \delta_{6} &= \mathcal{F}_{7}-\mathcal{F}_{8}.\\
			\delta_{3} &= \mathcal{H}_{y} - \mathcal{F}_{56}, &\qquad
		\end{alignedat}
	\end{equation}
	\caption{The (preliminary) Surface Root Basis for the generalized Laguerre recurrence}
	\label{fig:Laguerre-surface-roots}
\end{figure}
Elements $\delta_{i}\in \operatorname{Pic}(\mathcal{X})$ constitute the \emph{surface root basis} of the surface family $\mathcal{X}_{n,\alpha,s}$ and their
intersection configuration is encoded in an affine Dynkin diagram shown on Figure~\ref{fig:Laguerre-surface-roots}. This diagram has the type
$D_{6}^{(1)}$ and so we expect that our recurrence is given by a discrete Painlev\'e equation in this family. In 
Appendix~\ref{sec:dP-2A1-std} we collect some essential algebro-geometric facts about discrete Painlev\'e equations of surface type $D_{6}^{(1)}$.
It now remains to determine which  discrete Painlev\'e equation our dynamics corresponds to. 
For that, we need to look at the induced mapping on the 
Picard lattice, and associated action on the surface and symmetry root bases, we do that next. 
% subsection the_singularity_structure_of_the_recurrence_relation (end)

\subsection{Induced linear mappings on the Picard lattice} % (fold)
\label{sub:Laguerre-Picard}
We now compute the linearized action on the Picard lattice for the half-maps $\tilde{\varphi}_{1}$, $\tilde{\varphi}_{2}$, as well as the full \emph{forward map}
$\tilde{\varphi}_{n} = (\tilde{\varphi}_{2}^{n+1})^{-1}\circ\tilde{\varphi}_{1}^{n}: (x_{n},y_{n})\to (x_{n+1},y_{n+1})$ (that we 
write simply as $\tilde{\varphi}:(x,y)\to (\overline{x},\overline{y})$) and the full \emph{backward map} 
$\tilde{\varphi}^{-1}:(x,y)\to (\underline{x},\underline{y})$. This computation is standard and is explained in detail, for example, in \cite{DzhFilSto:2020:RCDOPWHWDPE,DzhTak:2018:SASGTDPE},
so here we only state the result. 

\begin{lemma}\label{lem:Laguerre-Pic-action}
	\qquad 
	
	\begin{enumerate}[(a)]
		\item The action of the half-step forward dynamic $(\tilde{\varphi}_{1})_{*}:\operatorname{Pic}(\mathcal{X}) \to \operatorname{Pic}(\overline{\mathcal{X}})$
		is given by 
		\begin{align*}
			\mathcal{H}_{x}&\mapsto \overline{\mathcal{H}}_{x}, &\quad 
			\mathcal{F}_{4}&\mapsto \overline{\mathcal{F}}_{2},
			\\
			\mathcal{H}_{y}&\mapsto 2\overline{\mathcal{H}}_{x} + \overline{\mathcal{H}}_{y} - \overline{\mathcal{F}}_{5678},&\quad 			 
			\mathcal{F}_{5}&\mapsto \overline{\mathcal{H}}_{x} - \overline{\mathcal{F}}_{8},
			\\
			\mathcal{F}_{1}&\mapsto \overline{\mathcal{F}}_{3},&\quad 			 
			\mathcal{F}_{6}&\mapsto \overline{\mathcal{H}}_{x} - \overline{\mathcal{F}}_{7},
			\\
			\mathcal{F}_{2}&\mapsto \overline{\mathcal{F}}_{4},&\quad 			 
			\mathcal{F}_{7}&\mapsto \overline{\mathcal{H}}_{x} - \overline{\mathcal{F}}_{6},
			\\
			\mathcal{F}_{3}&\mapsto \overline{\mathcal{F}}_{1},&\quad 			 
			\mathcal{F}_{8}&\mapsto \overline{\mathcal{H}}_{x} - \overline{\mathcal{F}}_{5}.
		\end{align*}
		Under this mapping points $q_{2}$ and $q_{4}$ evolve as  $n\mapsto n+1$ and as a result 
		we get $\overline{q}_{2}(0,-n-1)$, $\overline{q}_{4}(0,-n-1-\alpha)$, but 
		all of the other points do not evolve.

		\item The action of the half-step backward dynamic $(\tilde{\varphi}_{2})_{*}:\operatorname{Pic}(\mathcal{X}) \to \operatorname{Pic}(\underline{\mathcal{X}})$
		is given by 
		\begin{align*}
			\mathcal{H}_{x}&\mapsto \underline{\mathcal{H}}_{x} + 2 \underline{\mathcal{H}}_{y} - \underline{\mathcal{F}}_{1234}, &\quad 
			\mathcal{F}_{4}&\mapsto \underline{\mathcal{H}}_{y} - \underline{\mathcal{F}}_{3},
			\\
			\mathcal{H}_{y}&\mapsto \underline{\mathcal{H}}_{y},&\quad 			 
			\mathcal{F}_{5}&\mapsto \underline{\mathcal{F}}_{5},
			\\
			\mathcal{F}_{1}&\mapsto \underline{\mathcal{H}}_{y} - \underline{\mathcal{F}}_{2},&\quad 			 
			\mathcal{F}_{6}&\mapsto \underline{\mathcal{F}}_{6},
			\\
			\mathcal{F}_{2}&\mapsto \underline{\mathcal{H}}_{y} - \underline{\mathcal{F}}_{1},&\quad 			 
			\mathcal{F}_{7}&\mapsto \underline{\mathcal{F}}_{7},
			\\
			\mathcal{F}_{3}&\mapsto \underline{\mathcal{H}}_{y} - \underline{\mathcal{F}}_{4},&\quad 			 
			\mathcal{F}_{8}&\mapsto \underline{\mathcal{F}}_{8}.
		\end{align*}
		Under this mapping the point $q_{8}$ evolves as  $n\mapsto n-1$ and we get $\underline{q}_{8}(0,-1 + 2n + \alpha)$. All the other points do not evolve.

		\item The action of the full forward dynamic $(\tilde{\varphi})_{*}:\operatorname{Pic}(\mathcal{X}) \to \operatorname{Pic}(\overline{\mathcal{X}})$
		is given by 
		\begin{align*}
			\mathcal{H}_{x}&\mapsto \overline{\mathcal{H}}_{x} + 2\overline{\mathcal{H}}_{y} - \overline{\mathcal{F}}_{1234}, &\quad 
			\mathcal{F}_{4}&\mapsto \overline{\mathcal{H}}_{y} - \overline{\mathcal{F}}_{1},
			\\
			\mathcal{H}_{y}&\mapsto 2\overline{\mathcal{H}}_{x} + 5 \overline{\mathcal{H}}_{y} - 2\overline{\mathcal{F}}_{1234} - \overline{\mathcal{F}}_{5678},&\quad 			 
			\mathcal{F}_{5}&\mapsto \overline{\mathcal{H}}_{x} + 2\overline{\mathcal{H}}_{y} - \overline{\mathcal{F}}_{12348},
			\\
			\mathcal{F}_{1}&\mapsto \overline{\mathcal{H}}_{y} - \overline{\mathcal{F}}_{4},&\quad 			 
			\mathcal{F}_{6}&\mapsto \overline{\mathcal{H}}_{x} + 2\overline{\mathcal{H}}_{y} - \overline{\mathcal{F}}_{12347},
			\\
			\mathcal{F}_{2}&\mapsto \overline{\mathcal{H}}_{y} - \overline{\mathcal{F}}_{3},&\quad 			 
			\mathcal{F}_{7}&\mapsto \overline{\mathcal{H}}_{x} + 2\overline{\mathcal{H}}_{y} - \overline{\mathcal{F}}_{12346},
			\\
			\mathcal{F}_{3}&\mapsto \overline{\mathcal{H}}_{y} - \overline{\mathcal{F}}_{2},&\quad 			 
			\mathcal{F}_{8}&\mapsto \overline{\mathcal{H}}_{x} + 2\overline{\mathcal{H}}_{y} - \overline{\mathcal{F}}_{12345}.
		\end{align*}
		Under this mapping we get the expected evolution of the discrete parameter $n\mapsto n+1$, so we get $\overline{q}_{2}(0,-n-1)$, 
		$\overline{q}_{4}(0,-n-1-\alpha)$, and $\overline{q}_{8}(0,3 + 2n + \alpha)$.

		\item The action of the full backward dynamic 
		$(\tilde{\varphi})^{-1}_{*}:\operatorname{Pic}(\mathcal{X}) \to \operatorname{Pic}(\underline{\mathcal{X}})$ is given by 
		\begin{align*}
			\mathcal{H}_{x}&\mapsto 5\underline{\mathcal{H}}_{x} + 2\underline{\mathcal{H}}_{y} - \underline{\mathcal{F}}_{1234} 
			-2\underline{\mathcal{F}}_{5678}, &\quad 
			\mathcal{F}_{4}&\mapsto 2\underline{\mathcal{H}}_{x} + \underline{\mathcal{H}}_{y} - \underline{\mathcal{F}}_{15678},
			\\
			\mathcal{H}_{y}&\mapsto 2\underline{\mathcal{H}}_{x} + \underline{\mathcal{H}}_{y} - \underline{\mathcal{F}}_{5678},&\quad 			 
			\mathcal{F}_{5}&\mapsto \underline{\mathcal{H}}_{x}  - \underline{\mathcal{F}}_{8},
			\\
			\mathcal{F}_{1}&\mapsto 2\underline{\mathcal{H}}_{x} + \underline{\mathcal{H}}_{y} - \underline{\mathcal{F}}_{45678},&\quad 			 
			\mathcal{F}_{6}&\mapsto \underline{\mathcal{H}}_{x}  - \underline{\mathcal{F}}_{7},
			\\
			\mathcal{F}_{2}&\mapsto 2\underline{\mathcal{H}}_{x} + \underline{\mathcal{H}}_{y} - \underline{\mathcal{F}}_{35678},&\quad 			 
			\mathcal{F}_{7}&\mapsto \underline{\mathcal{H}}_{x}  - \underline{\mathcal{F}}_{6},
			\\
			\mathcal{F}_{3}&\mapsto 2\underline{\mathcal{H}}_{x} + \underline{\mathcal{H}}_{y} - \underline{\mathcal{F}}_{25678},&\quad 			 
			\mathcal{F}_{8}&\mapsto \underline{\mathcal{H}}_{x}  - \underline{\mathcal{F}}_{5}.
		\end{align*}
		Under this mapping we get the expected evolution of the discrete parameter $n\mapsto n-1$, so we get $\underline{q}_{2}(0,-n+1)$, 
		$\underline{q}_{4}(0,-n+1-\alpha)$, and $\underline{q}_{8}(0,-1 + 2n + \alpha)$.

	\end{enumerate}
\end{lemma}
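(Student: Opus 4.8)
The plan is to compute each of the four linear maps by tracking how the corresponding birational map acts on divisor classes. By the resolution carried out in Section~\ref{sub:the_singularity_structure_of_the_recurrence_relation}, both half-maps $\tilde{\varphi}_{1}$ and $\tilde{\varphi}_{2}$ lift to biregular isomorphisms between the blown-up surfaces, so their pushforwards are well-defined isometries of the Picard lattices, and it suffices to find the images of the generators $\mathcal{H}_{x},\mathcal{H}_{y},\mathcal{F}_{1},\dots,\mathcal{F}_{8}$. For $\mathcal{H}_{x}$ and $\mathcal{H}_{y}$ one takes a generic coordinate line, computes its image as an explicit curve in one of the four charts of $\mathbb{P}^{1}\times\mathbb{P}^{1}$, reads off its bidegree to get the total-transform class, and subtracts $\overline{\mathcal{F}}_{i}$ (resp.\ $\underline{\mathcal{F}}_{i}$) for each base point through which the image passes, with the multiplicity determined by a local analysis at that point and its infinitely near points. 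For an exceptional class $\mathcal{F}_{i}$ over a point that is \emph{not} an indeterminacy point of the half-map one simply follows the point under the map; for $\mathcal{F}_{i}$ over an indeterminacy point one resolves the map in the successive blowup charts and records the curve onto which the exceptional line is sent.

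I would first treat the forward half-map $\tilde{\varphi}_{1}\colon(x,y)\mapsto(x,\bar{y})$ with $\bar{y}=s-y-x(1+2n+x+\alpha)$. In the affine chart this is a polynomial automorphism fixing $x$, so a generic vertical line maps to a generic vertical line, giving $\mathcal{H}_{x}\mapsto\overline{\mathcal{H}}_{x}$, while a generic horizontal line $y=c$ maps to the parabola $\bar{y}=\mathrm{const}-x(1+2n+\alpha)-x^{2}$, of bidegree $(2,1)$, hence of total-transform class $2\overline{\mathcal{H}}_{x}+\overline{\mathcal{H}}_{y}$; a local computation near $\bar{x}=\bar{y}=\infty$ shows it passes once through each point of the cascade $\overline{q}_{5}\leftarrow\overline{q}_{6}\leftarrow\overline{q}_{7}\leftarrow\overline{q}_{8}$, so $\mathcal{H}_{y}\mapsto 2\overline{\mathcal{H}}_{x}+\overline{\mathcal{H}}_{y}-\overline{\mathcal{F}}_{5678}$. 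The points $q_{1}(0,0)$ and $q_{3}(0,s)$ are not indeterminacy points of $\tilde{\varphi}_{1}$, and $\tilde{\varphi}_{1}$ interchanges them; a local computation in the blowup charts shows it also interchanges the infinitely near points $q_{2}$ and $q_{4}$, sending $v_{1}=-n$ to $\bar{v}_{3}=-n-1-\alpha$ and $v_{3}=-n-\alpha$ to $\bar{v}_{1}=-n-1$, which records the evolution $n\mapsto n+1$ of these two base points on the target surface; hence $\mathcal{F}_{1}\mapsto\overline{\mathcal{F}}_{3}$, $\mathcal{F}_{3}\mapsto\overline{\mathcal{F}}_{1}$, $\mathcal{F}_{2}\mapsto\overline{\mathcal{F}}_{4}$, $\mathcal{F}_{4}\mapsto\overline{\mathcal{F}}_{2}$. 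Finally, resolving the indeterminacy cascade $q_{5}\leftarrow q_{6}\leftarrow q_{7}\leftarrow q_{8}$ of $\tilde{\varphi}_{1}$ in the successive charts gives $\mathcal{F}_{5}\mapsto\overline{\mathcal{H}}_{x}-\overline{\mathcal{F}}_{8}$, $\mathcal{F}_{6}\mapsto\overline{\mathcal{H}}_{x}-\overline{\mathcal{F}}_{7}$, $\mathcal{F}_{7}\mapsto\overline{\mathcal{H}}_{x}-\overline{\mathcal{F}}_{6}$, $\mathcal{F}_{8}\mapsto\overline{\mathcal{H}}_{x}-\overline{\mathcal{F}}_{5}$, which completes part (a).

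The backward half-map $\tilde{\varphi}_{2}\colon(x,y)\mapsto(\underline{x},y)$ is treated identically with the roles of the two rulings interchanged: $y$ is fixed, so $\mathcal{H}_{y}\mapsto\underline{\mathcal{H}}_{y}$; the image of a generic vertical line is a curve of bidegree $(1,2)$ passing once through each point of the cascades $\underline{q}_{1}\leftarrow\underline{q}_{2}$ and $\underline{q}_{3}\leftarrow\underline{q}_{4}$, giving $\mathcal{H}_{x}\mapsto\underline{\mathcal{H}}_{x}+2\underline{\mathcal{H}}_{y}-\underline{\mathcal{F}}_{1234}$; the exceptional classes over $q_{5},\dots,q_{8}$ (not indeterminacy points of $\tilde{\varphi}_{2}$) are fixed, those over $q_{1},\dots,q_{4}$ transform with a ``$\underline{\mathcal{H}}_{y}-$'' shift, and a local computation shows that only $q_{8}$ evolves, as $n\mapsto n-1$; this is part (b). For the full maps one uses $\tilde{\varphi}=(\tilde{\varphi}_{2}^{\,n+1})^{-1}\circ\tilde{\varphi}_{1}^{\,n}$, so that $\tilde{\varphi}_{*}=\bigl((\tilde{\varphi}_{2})_{*}\bigr)^{-1}\circ(\tilde{\varphi}_{1})_{*}$ once the eight exceptional classes on the common intermediate surface (the range of $\tilde{\varphi}_{1}^{\,n}$, which equals the range of $\tilde{\varphi}_{2}^{\,n+1}$) are matched; composing the two tables above and inverting yields part (c), and inverting part (c) (or composing in the opposite order) yields part (d), with the evolution $n\mapsto n\pm1$ of $q_{2},q_{4},q_{8}$ obtained by combining the half-map evolutions.

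The conceptual content here is light; the real work — and essentially the only place where an error could creep in — is the local resolution of the two length-four indeterminacy cascades and the consistent labelling of the eight exceptional divisors on each of the intermediate and target surfaces when composing the half-maps. I would check each computed table against the rigid constraints that it preserves the intersection form \eqref{eq:int-form}, sends effective classes to effective classes, and maps $-\mathcal{K}_{\mathcal{X}}$ to $-\mathcal{K}_{\overline{\mathcal{X}}}$ (resp.\ $-\mathcal{K}_{\underline{\mathcal{X}}}$); together with the tracking of the base point coordinates these constraints pin each map down uniquely and immediately expose any index or sign slip.
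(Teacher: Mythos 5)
Your proposal is correct and is essentially the argument the paper has in mind: the paper only states the result, deferring to the standard pushforward computation (tracking images of the coordinate-line classes and the exceptional classes through the blowup charts, then composing the half-map actions) explained in \cite{DzhFilSto:2020:RCDOPHWDPE,DzhTak:2018:OSAOSGTODPE}, which is exactly what you carry out, including the correct matching of the intermediate surfaces and the local computations showing how $q_{2}$, $q_{4}$, $q_{8}$ evolve. Your consistency checks (preservation of the intersection form and of $-\mathcal{K}$) are a sensible addition but not a departure from the paper's route.
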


\begin{remark} It may seem strange that for half-maps in (a) and (b) some base points evolve as  $n$ evolves, but some do not, this does not happen for full maps. 
	The reason for that is that for half-maps the evolution on the Picard lattice is not a translation, and so root variables (and hence the parameters of the base
	points) evolve in a more complicated fashion. We briefly explain that in Section~\ref{sub:Laguerre-half-maps}.	
\end{remark}

\subsection{Identifying the dynamics on the level of the Picard lattice} % (fold)
\label{sub:Laguerre-Pic-match}
We are now in the position to start identifying the dynamics. For that, we first need to find some change of basis of the Picard lattice that matches 
the corresponding surface and symmetry root bases. As usual, we begin with the preliminary change of basis that identifies the symmetry roots.
It is given in the next Lemma.

\begin{lemma}\label{lem:Laguerre2KNY-prelim-pic}
	The change of basis that identifies the surface roots corresponding to the irreducible components of the anti-canonical divisors for the standard 
	surface on Figure~\ref{fig:KNY-pt-conf} and the Sakai surface  obtained by regularizing the generalized Laguerre recurrence 
	on Figure~\ref{fig:Laguerre-pt-conf} is  given by
 $$\mathcal{H}_{x} = \mathcal{H}_{q}, \quad \mathcal{H}_{y} = \mathcal{H}_{p}, \quad
	\mathcal{F}_{i} = \mathcal{E}_{i}, \quad i=1\cdots 8.$$
\end{lemma}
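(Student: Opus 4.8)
The plan is to establish that the very simple correspondence $\mathcal{H}_{x}=\mathcal{H}_{q}$, $\mathcal{H}_{y}=\mathcal{H}_{p}$, $\mathcal{F}_{i}=\mathcal{E}_{i}$ is in fact the correct identification of the two Picard lattices by checking that it carries one surface root basis to the other. First I would write down explicitly, in both lattices, the $-2$-curve classes $\delta_{0},\dots,\delta_{6}$ (for our Laguerre surface) and $\delta_{0}^{\mathrm{KNY}},\dots,\delta_{6}^{\mathrm{KNY}}$ (for the standard surface of Appendix~\ref{sec:dP-2A1-std}, Figure~\ref{fig:KNY-pt-conf}) in terms of the $\mathcal{H}$'s and $\mathcal{F}$'s, respectively the $\mathcal{H}_{q},\mathcal{H}_{p},\mathcal{E}_{i}$'s. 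For our surface these are listed in \eqref{eq:Laguerre-surface-rb-prelim}; for the standard surface they are recorded in the Appendix. Then the claim reduces to the purely mechanical verification that, after substituting $\mathcal{H}_{x}\to\mathcal{H}_{q}$, $\mathcal{H}_{y}\to\mathcal{H}_{p}$, $\mathcal{F}_{i}\to\mathcal{E}_{i}$, each $\delta_{j}$ maps to $\delta_{j}^{\mathrm{KNY}}$ (possibly after a relabeling of the node indices of the $D_{6}^{(1)}$ diagram, which one fixes once and for all by matching the two Dynkin diagrams as decorated graphs).

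The key steps, in order, are: (i) fix the bijection between the nodes of the two $D_{6}^{(1)}$ diagrams — here there is a mild subtlety because $D_{6}^{(1)}$ has a nontrivial diagram automorphism, so one must check which of the possible node identifications is the one that is also compatible with the blow-up structure (i.e.\ which exceptional divisors $\mathcal{F}_{i}$ sit inside which $\delta_{j}$); (ii) verify that the proposed linear map is an isometry of the intersection form \eqref{eq:int-form}, which is immediate since it sends the orthogonal basis $\{\mathcal{H}_{x},\mathcal{H}_{y},\mathcal{F}_{i}\}$ to the orthogonal basis $\{\mathcal{H}_{q},\mathcal{H}_{p},\mathcal{E}_{i}\}$ with the same Gram matrix; (iii) check that it preserves the anti-canonical class, i.e.\ it sends $2\mathcal{H}_{x}+2\mathcal{H}_{y}-\mathcal{F}_{1}-\cdots-\mathcal{F}_{8}$ to $2\mathcal{H}_{q}+2\mathcal{H}_{p}-\mathcal{E}_{1}-\cdots-\mathcal{E}_{8}$, which again is immediate; and finally (iv) confirm component-by-component that the decomposition $-\mathcal{K}_{\mathcal{X}}=\delta_{0}+\delta_{1}+2\delta_{2}+2\delta_{3}+2\delta_{4}+\delta_{5}+\delta_{6}$ is respected, so the images really are the irreducible anti-canonical components of the standard surface and not merely some other $-2$-classes.

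I expect the only genuine obstacle to be step (i): because the surface root lattice $D_{6}^{(1)}$ has an outer automorphism, there is a priori more than one candidate identification of the $\delta_{j}$ with the $\delta_{j}^{\mathrm{KNY}}$, and one must be careful to pick the one induced by $\mathcal{F}_{i}\mapsto\mathcal{E}_{i}$ rather than some twisted version — otherwise the subsequent parameter and dynamics matching in Section~\ref{sub:Laguerre-KNY-pars} would come out wrong. Once the node labeling is pinned down, everything else is a short direct computation: substitute and compare the seven expressions. The word ``natural'' in the statement is precisely the observation that no genuine reshuffling of the $\mathcal{F}_{i}$ is needed — the identity on exceptional-divisor labels already does the job — and the proof amounts to exhibiting that fact by writing out the seven equalities $\delta_{j}=\delta_{j}^{\mathrm{KNY}}$ under the substitution.
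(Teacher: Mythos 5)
Your proposal is correct and follows essentially the same route as the paper: the paper's proof simply juxtaposes the two surface root bases (Figures~\ref{fig:Laguerre-surface-roots} and~\ref{fig:KNY-surface-roots}) and observes that under $\mathcal{H}_{x}\mapsto\mathcal{H}_{q}$, $\mathcal{H}_{y}\mapsto\mathcal{H}_{p}$, $\mathcal{F}_{i}\mapsto\mathcal{E}_{i}$ each $\delta_{j}$ is carried to the corresponding $\delta_{j}$ with no relabeling needed, which is exactly your component-by-component check. Your additional remarks (isometry of the intersection form, preservation of $-\mathcal{K}_{\mathcal{X}}$, the caution about the $D_{6}^{(1)}$ diagram automorphism) are sound but not needed beyond the direct comparison.
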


\begin{proof} This change of basis follows immediately when we compare the root bases on Figures~\ref{fig:Laguerre-surface-roots} and 
	\ref{fig:KNY-surface-roots}.
	% \begin{equation*}
	% 		\raisebox{-32.1pt}{\begin{tikzpicture}[
	% 			elt/.style={circle,draw=black!100,thick, inner sep=0pt,minimum size=2mm}]
	% 			\path
	% 			(-1,1)
	% 			node (d0) [elt, label={[xshift=-10pt, yshift = -10 pt] $\delta_{0}$} ] {}
	% 			(-1,-1)
	% 			node (d1) [elt, label={[xshift=-10pt, yshift = -10 pt] $\delta_{1}$} ] {}
	% 			( 0,0)
	% 			node   (d2) [elt, label={[xshift=0pt, yshift = -25 pt] $\delta_{2}$} ] {}
	% 			( 1,0)
	% 			node   (d3) [elt, label={[xshift=0pt, yshift = -25 pt] $\delta_{3}$} ] {}
	% 			( 2,0)
	% 			node   (d4) [elt, label={[xshift=0pt, yshift = -25 pt] $\delta_{4}$} ] {}
	% 			( 3,1)
	% 			node   (d6) [elt, label={[xshift=10pt, yshift = -10 pt] $\delta_{6}$} ] {}
	% 			( 3,-1) node (d5) [elt, label={[xshift=10pt, yshift = -10 pt] $\delta_{5}$} ] {};
	% 			\draw [black,line width=1pt ] (d0) -- (d2) -- (d1)  (d2) -- (d3) --(d4) (d6) -- (d4) -- (d5);
	% 			\end{tikzpicture}} \qquad
	% 		\begin{alignedat}{2}
	% 			\delta_{0} &= \mathcal{E}_1-\mathcal{E}_2 = \mathcal{F}_1-\mathcal{F}_2,
	% 			&\qquad  \delta_{4} &=\mathcal{E}_6-\mathcal{E}_7 = \mathcal{F}_6-\mathcal{F}_7,\\
	% 			\delta_{1} &= \mathcal{E}_3-\mathcal{E}_4 = \mathcal{F}_3-\mathcal{F}_4,
	% 			&\qquad  \delta_{5} &= \mathcal{E}_5-\mathcal{E}_6 = \mathcal{F}_5-\mathcal{F}_6,\\
	% 			\delta_{2} &= \mathcal{H}_q-\mathcal{E}_{13} = \mathcal{H}_x-\mathcal{F}_{13}, &\qquad
	% 			\delta_{6} &= \mathcal{E}_{7} -\mathcal{E}_{8} =\mathcal{F}_{7} - \mathcal{F}_{8}.\\
	% 			\delta_{3} &= \mathcal{H}_p-\mathcal{E}_{56} = \mathcal{H}_y-\mathcal{F}_{56}, &\qquad
	% 		\end{alignedat}
	% 	\end{equation*}
\end{proof}

With this identification and root bases depicted in Figure~\ref{fig:standard-symmetry-roots}, we get the  preliminary symmetry root basis for the Laguerre recurrence to be 
	\begin{equation*}
		\alpha_{0} = 2\mathcal{H}_x+\mathcal{H}_y-\mathcal{F}_{345678},\quad \alpha_{1}= \mathcal{H}_y-\mathcal{F}_{12},\quad
		\alpha_{2} = \mathcal{H}_y-\mathcal{F}_{34}, \quad 
		\alpha_{3}= 2\mathcal{H}_x+\mathcal{H}_y-\mathcal{F}_{125678},			
	\end{equation*}
	and the full forward evolution $(\tilde{\varphi})_{*}$ from Lemma~\ref{lem:Laguerre-Pic-action}(c) acts on the symmetry roots as 
	\begin{equation*}
	\tilde{\varphi}_*:\upalpha=\langle\alpha_0,\alpha_1,\alpha_2,\alpha_3\rangle\mapsto
	\tilde{\varphi}_*(\upalpha)=\upalpha+ \langle -1,1,1,-1 \rangle\delta,
	\end{equation*}
	which is the \emph{opposite} of the standard translation \eqref{eq:dP-KNY-trans}. To \emph{reverse} the translation direction we can act 
	on the symmetry roots by the Dynkin diagram automorphisms $\sigma_{i}$ described in Section~\ref{sub:the_extended_affine_weyl_symmetry_group} as 
	follows:
	\begin{equation*}
		\sigma_{2}\circ \sigma_{1}\circ \sigma_{2}\circ \sigma_{1} = 
		(\alpha_{0}\alpha_{1})(\alpha_{0}\alpha_{3})(\alpha_{1}\alpha_{2})(\alpha_{0}\alpha_{1})(\alpha_{0}\alpha_{3})(\alpha_{1}\alpha_{2})
		= (\alpha_{0}\alpha_{1})(\alpha_{2}\alpha_{3}).
	\end{equation*} 

This action then adjusts our change of basis and this results in the correct change of basis given in the following Lemma.
\begin{lemma}\label{lem:Laguerre2KNY-pic}
	The change of basis of the Picard lattice identifying both the geometry and the standard dynamics between the Laguerre recursion and the 
	standard discrete Painlev\'e dynamics \eqref{eq:dP-KNY-map} is given by:

		\begin{equation}\label{eq:Laguerre-Sakai-Pic}
			\begin{aligned}
					\mathcal{H}_{x} &=\mathcal{H}_{q}, &\qquad
					\mathcal{H}_{q} &= \mathcal{H}_{x}, \\
					\mathcal{H}_{y} &= 2\mathcal{H}_{q} + \mathcal{H}_{p} - \mathcal{E}_{5678}, &\qquad
					\mathcal{H}_{p} &= 2\mathcal{H}_{x} + \mathcal{H}_{y} - \mathcal{F}_{5678}, \\
					\mathcal{F}_{1} &= \mathcal{E}_{3}, &\qquad
					\mathcal{E}_{1} &= \mathcal{F}_{3},\\
					\mathcal{F}_{2} &= \mathcal{E}_{4}, &\qquad
					\mathcal{E}_{2} &= \mathcal{F}_{4},\\
					\mathcal{F}_{3} &= \mathcal{E}_{1}, &\qquad
					\mathcal{E}_{3} &= \mathcal{F}_{1},\\
					\mathcal{F}_{4} &= \mathcal{E}_{2}, &\qquad
					\mathcal{E}_{4} &= \mathcal{F}_{2},\\
					\mathcal{F}_{5} &= \mathcal{H}_{q} - \mathcal{E}_{8}, &\qquad
					\mathcal{E}_{5} &= \mathcal{H}_{x} - \mathcal{F}_{8},\\
					\mathcal{F}_{6} &= \mathcal{H}_{q} - \mathcal{E}_{7}, &\qquad
					\mathcal{E}_{6} &= \mathcal{H}_{x} - \mathcal{F}_{7},\\
					\mathcal{F}_{7} &= \mathcal{H}_{q} - \mathcal{E}_{6}, &\qquad
					\mathcal{E}_{7} &= \mathcal{H}_{x} - \mathcal{F}_{6},\\
					\mathcal{F}_{8} &= \mathcal{H}_{q} - \mathcal{E}_{5}, &\qquad
					\mathcal{E}_{8} &= \mathcal{H}_{x} - \mathcal{F}_{5}.
			\end{aligned}
		\end{equation}

\end{lemma}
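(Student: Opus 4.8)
The plan is to build the claimed change of basis as the composition of the two pieces that have already been isolated in the text: the preliminary identification $\mathcal{H}_{x}=\mathcal{H}_{q}$, $\mathcal{H}_{y}=\mathcal{H}_{p}$, $\mathcal{F}_{i}=\mathcal{E}_{i}$ of Lemma~\ref{lem:Laguerre2KNY-prelim-pic}, which already matches the $D_{6}^{(1)}$ surface root bases, with the Cremona isometry $\tau=\sigma_{2}\circ\sigma_{1}\circ\sigma_{2}\circ\sigma_{1}=(\alpha_{0}\alpha_{1})(\alpha_{2}\alpha_{3})$ of the standard surface, which was singled out precisely because it negates the translation vector on the symmetry sublattice. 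First I would use the description of the Dynkin-diagram automorphisms $\sigma_{1},\sigma_{2}$ from Section~\ref{sub:the_extended_affine_weyl_symmetry_group} and Appendix~\ref{sec:dP-2A1-std} to write down the linear action of each $\sigma_{i}$ on all ten generators $\mathcal{H}_{q},\mathcal{H}_{p},\mathcal{E}_{1},\dots,\mathcal{E}_{8}$ of the standard Picard lattice, not just on the four roots $\alpha_{j}$, and then multiply the four of them out to obtain the explicit action of $\tau$. Applying $\tau$ to the preliminary identification then yields exactly the left column of \eqref{eq:Laguerre-Sakai-Pic}; solving the resulting linear system (equivalently, using $\tau^{-1}$) gives the right column.

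Once the formulas are written down, the proof reduces to three verifications. First, \eqref{eq:Laguerre-Sakai-Pic} is a lattice isomorphism: substitute the images of $\mathcal{H}_{x},\mathcal{H}_{y},\mathcal{F}_{i}$ into the intersection form \eqref{eq:int-form} and check that all pairings agree with the standard intersection numbers, a mechanical but lengthy computation, and likewise for the inverse column. Second, the surface root bases remain matched: the images of the $\delta_{i}$ from Figure~\ref{fig:Laguerre-surface-roots} under \eqref{eq:Laguerre-Sakai-Pic} are again the standard surface roots, up to a symmetry of the $D_{6}^{(1)}$ Dynkin diagram (concretely one finds $\delta_{0}\leftrightarrow\delta_{1}$, $\delta_{5}\leftrightarrow\delta_{6}$, with $\delta_{2},\delta_{3},\delta_{4}$ fixed), which is exactly what it means to identify the two surfaces. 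Third, and most importantly, the dynamics now agree: feeding the full forward action $(\tilde{\varphi})_{*}$ from Lemma~\ref{lem:Laguerre-Pic-action}(c) through \eqref{eq:Laguerre-Sakai-Pic} and comparing with the standard translation \eqref{eq:dP-KNY-trans}--\eqref{eq:dP-KNY-map} from the appendix, one sees that the composite conjugates $(\tilde{\varphi})_{*}$ to the standard translation rather than to its inverse; this is the whole point of inserting $\tau$, since conjugating a translation by the permutation $(\alpha_{0}\alpha_{1})(\alpha_{2}\alpha_{3})$ of the symmetry roots reverses its direction.

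The main obstacle is bookkeeping rather than anything conceptual: one must lift the $\sigma_{i}$ from the symmetry root data to the full rank-ten Picard lattice, compose four such isometries while tracking all ten generators, and then confirm that the single resulting map simultaneously respects the intersection form, carries the surface roots to the standard ones, and intertwines the evolutions in the correct direction. A secondary point worth making explicit is that the isometry must be \emph{effective}, i.e.\ it must send the classes of the actual exceptional curves $F_{i}$ and of the special $-1$ and $-2$ curves drawn on Figure~\ref{fig:Laguerre-pt-conf} to the corresponding classes on the standard surface of Figure~\ref{fig:KNY-pt-conf}; this is automatic because $\tau$ is realized by honest birational maps, but it is what pins down \eqref{eq:Laguerre-Sakai-Pic} uniquely among all lattice isometries with the stated algebraic properties.
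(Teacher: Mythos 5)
Your proposal is correct and follows essentially the same route as the paper: the text obtains \eqref{eq:Laguerre-Sakai-Pic} precisely by adjusting the preliminary identification of Lemma~\ref{lem:Laguerre2KNY-prelim-pic} by the automorphism $\sigma_{2}\circ\sigma_{1}\circ\sigma_{2}\circ\sigma_{1}=(\alpha_{0}\alpha_{1})(\alpha_{2}\alpha_{3})$, chosen exactly because it turns the translation $\langle -1,1,1,-1\rangle\delta$ into the standard $\langle 1,-1,-1,1\rangle\delta$ of \eqref{eq:dP-KNY-trans}. Your additional verifications (isometry of the intersection form, matching of the surface roots up to a $D_{6}^{(1)}$ diagram symmetry, and intertwining of $(\tilde{\varphi})_{*}$ with the standard translation) are the same checks the paper records via \eqref{eq:Laguerre-symmetry-rb}, \eqref{eq:Laguerre-translation}, and \eqref{eq:Laguerre-surface-rb}.
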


	As expected, this change of bases results in switching the symmetry roots and the new symmetry root basis for the Laguerre recurrence is 
	\begin{equation}\label{eq:Laguerre-symmetry-rb}
		\begin{aligned}
			\alpha_{0} &= \mathcal{H}_{y}- \mathcal{F}_{12}, &\quad 
			\alpha_{2} &= 2\mathcal{H}_{x} + \mathcal{H}_{y} - \mathcal{F}_{125678},\\ 
			\alpha_{1} &= 2\mathcal{H}_{x} + \mathcal{H}_{y} - \mathcal{F}_{345678}, &\quad 
			\alpha_{3} &= \mathcal{H}_{y}- \mathcal{F}_{34},
		\end{aligned}
	\end{equation}
	and on this root basis the translation dynamics becomes standard, 
	\begin{equation}\label{eq:Laguerre-translation}
	\tilde{\varphi}_*(\upalpha)=\upalpha+ \langle 1,-1,-1,1 \rangle\delta.	
	\end{equation}
	
	Note that this change of basis also acts on the surface root basis by the Dynkin diagram automorphisms, so our surface root basis
	permutes as
	\begin{equation}\label{eq:Laguerre-surface-rb}
		\begin{aligned}
			\delta_{0} &=  \mathcal{F}_{3}-\mathcal{F}_{4}, &\quad  
			\delta_{2} &=  \mathcal{H}_{x} - \mathcal{F}_{13}, &\quad
			\delta_{4} &=  \mathcal{F}_{6}-\mathcal{F}_{7}, &\quad
			\delta_{6} &=  \mathcal{F}_{5}-\mathcal{F}_{6}.
			\\
			\delta_{1} &= \mathcal{F}_{1}-\mathcal{F}_{2}, &\quad  
			\delta_{3} &= \mathcal{H}_{y} - \mathcal{F}_{56}, &\quad  
			\delta_{5} &= \mathcal{F}_{7}-\mathcal{F}_{8}, &\qquad  
		\end{aligned}
	\end{equation}

% subsection sub:Laguerre-Pic-match (end)

\subsection{Root variables and  the parameter matching} % (fold)
\label{sub:Laguerre-KNY-pars}
	Before finding the change of coordinates that identifies the two dynamics, we need to match the Laguerre weight parameters $\alpha$, $s$ and the 
	recurrence step $n$ with the standard parameters, or the \emph{root variables} $a_{i}$.  Recall that the root variables are given 
	by the \emph{Period Map} $\chi: Q=\operatorname{Span}_{\mathbb{Z}}{\alpha_i} \rightarrow \mathbb{C}$, $a_{i} = \chi(\alpha_{i})$. 
	To define the
	period map we need to find the symplectic form $\omega$ such that $-[\omega] = -\mathcal{K}_{\mathcal{X}}$, i.e., so that the polar divisor 
	of $\omega$ is a $(2,2)$-curve on $\mathbb{P}^{1} \times \mathbb{P}^{1}$ passing through all the base points $q_{1},\ldots, q_{8}$,
	which, in our case, is $\omega = (k/x^{2}) dx \wedge dy$, when written in the  affine chart $(x,y)$. 
	The period map computation is then based on the following facts \cite{Sak:2001:RSAWARSGPE}:
	
	% Direct computation shows that, in the affine chart $(x,y)$,  $\omega = (k/x^{2}) dx \wedge dy$. The period map computation is then based
	% on the following facts \cite{Sak:2001:RSAWARSGPE}:
	\begin{itemize}
		\item each symmetry root $\alpha_{i}$ can be represented (in a non-unique way) as a difference of classes of two effective divisors, 
		$\alpha_i=[C_i^1]-[C_i^0]$;
		\item for each such representation there exists a unique irreducible component $d_k$ of the anti-canonical divisor $-K_{\mathcal{X}}$ 
		such that $d_k \bullet C_i^1=d_k \bullet C_i^0=1$; 
		put $P_i=d_k\cap C_i^0$ and $Q_i=d_k\cap C_i^1$.
	\end{itemize}
	Then 
	\begin{equation}\label{eq:period-map}
		\chi(\alpha_i)=\chi([C_i^1]-[C_i^0])=\int_{P_i}^{Q_i}\frac{1}{2\pi i}\oint_{d_k}\omega=\int_{P_i}^{Q_i}\operatorname{res}_{d_k}\omega.
	\end{equation}
	In our case, the period map is given by the following Lemma. 
	
	\begin{lemma}\label{lem:period_map-GenLaguerre} 
		\qquad
		\begin{enumerate}[(i)]
			\item The residues of the symplectic form $\omega$ along the irreducible components $d_{i}$ of the polar divisor corresponding to the 
			surface roots $\delta_{i} = [d_{i}]$, given in \eqref{eq:Laguerre-surface-rb},  are given by
			\begin{align*}
				\operatorname{res}_{d_{0}} \omega &= k\, dv_{3}, &\qquad 
					 \operatorname{res}_{d_{2}} \omega &=0, &\qquad   \operatorname{res}_{d_{4}} \omega &= 0 ,&\qquad  
				\operatorname{res}_{d_{6}} \omega &= \frac{k}{v_{5}^{2}}\, dv_{5}.\\
				\operatorname{res}_{d_{1}} \omega &= k\, dv_{1}, &\qquad  \operatorname{res}_{d_{3}} \omega &= 0,&\qquad   
				\operatorname{res}_{d_{5}} \omega &=  k\, dv_{7},
			\end{align*}
			\item For the standard root variable normalization $\chi(\delta) = a_{0}+a_{1} = a_{1}+a_{2} =1$ we need to take 
			$k=-1$ and the root variables $a_{i}$ are then given by 
			\begin{equation}\label{eq:Laguerre-root-vars}
				a_{0} = -n,\quad a_{1} = n+1,\quad a_{2} = 1 + n + \alpha,\quad a_{3} = -(n+\alpha).
			\end{equation}			
		\end{enumerate}
	\end{lemma}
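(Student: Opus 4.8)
The plan is to run the period-map recipe \eqref{eq:period-map} twice over: first to obtain the residues in part (i), then to integrate the resulting $1$-forms against the symmetry roots in part (ii) and thereby pin down the normalizing constant $k$.

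For part (i) I would start by reading off the geometric meaning of each surface root from its expression in \eqref{eq:Laguerre-surface-rb}: $d_{0}$ and $d_{1}$ are the proper transforms of the first exceptional curves $F_{3}$, $F_{1}$ in the cascades over $q_{3}$, $q_{1}$, coordinatized by $v_{3}$ and $v_{1}$; $d_{2}$ and $d_{3}$ are the proper transforms of the coordinate lines $\{x=0\}$ and $\{y=\infty\}$; and $d_{4}$, $d_{5}$, $d_{6}$ are the proper transforms of $F_{6}$, $F_{7}$, $F_{5}$ inside the cascade over $q_{5}$, with $d_{5}$ coordinatized by $v_{7}$ and $d_{6}$ by $v_{5}$. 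I would then feed the blowup charts from Section~\ref{sub:the_singularity_structure_of_the_recurrence_relation} into $\omega = (k/x^{2})\,dx\wedge dy$ and extract, along each $d_{i}$, the coefficient of the logarithmic part. For the four components $d_{0},d_{1},d_{5},d_{6}$ appearing in $-\mathcal{K}_{\mathcal{X}}$ with multiplicity one, $\omega$ has a simple pole and this coefficient is the claimed nonzero $1$-form --- e.g.\ $(x,y)=(u_{1},u_{1}v_{1})$ turns $\omega$ into $(k/u_{1})\,du_{1}\wedge dv_{1}$, with residue $k\,dv_{1}$ along $d_{1}=\{u_{1}=0\}$, and chasing $\omega$ through the blowups of $q_{5}$, $q_{6}$, $q_{7}$ turns it into $\frac{k}{u_{7}(u_{7}v_{7}-1)^{2}}\,du_{7}\wedge dv_{7}$, with residue $k\,dv_{7}$ along $d_{5}$. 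For the three components $d_{2},d_{3},d_{4}$ appearing with multiplicity two, the same substitutions produce a form with a \emph{pure} pole of order two and no $dz/z$-term along $d_{i}$ --- e.g.\ in the $(U_{1},V_{1})$ chart $\omega = \frac{k}{U_{1}^{2}V_{1}}\,dU_{1}\wedge dV_{1}$ while $d_{2}=\{U_{1}=0\}$ --- so the residue vanishes. This establishes (i).

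For part (ii) I would evaluate $a_{i}=\chi(\alpha_{i})$ for the symmetry roots \eqref{eq:Laguerre-symmetry-rb} one root at a time. The two roots supported on the cascades over $q_{1}$ and $q_{3}$ are immediate. Writing $\alpha_{0}=\mathcal{H}_{y}-\mathcal{F}_{12}=[L]-[F_{2}]$ with $L$ the proper transform of $\{y=0\}$, the unique anti-canonical component with $d_{k}\bullet L = d_{k}\bullet F_{2}=1$ is $d_{1}$, the marked points are $P_{0}=d_{1}\cap F_{2}$ at $v_{1}=-n$ and $Q_{0}=d_{1}\cap L$ at $v_{1}=0$, and \eqref{eq:period-map} gives $a_{0}=\int_{-n}^{0}k\,dv_{1}=kn$; the identical argument with $\{y=s\}$, $F_{4}$, $d_{0}$ and the base-point value $v_{3}=-n-\alpha$ gives $a_{3}=\int_{-n-\alpha}^{0}k\,dv_{3}=k(n+\alpha)$.

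The heart of the matter is $a_{1}$ and $a_{2}$. For $\alpha_{1}=2\mathcal{H}_{x}+\mathcal{H}_{y}-\mathcal{F}_{345678}$ I would use the representation $\alpha_{1}=[C^{1}]-[F_{8}]$, with $C^{1}$ the unique effective member of the class $2\mathcal{H}_{x}+\mathcal{H}_{y}-\mathcal{F}_{34567}$ --- a rational $(-1)$-curve, concretely the member of the linear system $|2\mathcal{H}_{x}+\mathcal{H}_{y}|$ through $q_{3},q_{4}$ and through $q_{5},q_{6},q_{7}$. The associated anti-canonical component is $d_{5}=\widetilde{F}_{7}$, the point $P_{1}=d_{5}\cap F_{8}$ lies at $v_{7}=1+2n+\alpha$, and $a_{1}=\int_{1+2n+\alpha}^{v_{7}^{*}}k\,dv_{7}=k\,(v_{7}^{*}-1-2n-\alpha)$, where $v_{7}^{*}$ is the $v_{7}$-coordinate of $Q_{1}=d_{5}\cap C^{1}$; the parallel computation for $\alpha_{2}=2\mathcal{H}_{x}+\mathcal{H}_{y}-\mathcal{F}_{125678}$ uses the member of $|2\mathcal{H}_{x}+\mathcal{H}_{y}|$ through $q_{1},q_{2},q_{5},q_{6},q_{7}$. \emph{This last step is the main obstacle}: one has to solve explicitly for these two auxiliary curves and then carry their proper transforms through the three successive blowups $q_{5}\leftarrow q_{6}\leftarrow q_{7}$ in order to locate their intersection with $\widetilde{F}_{7}$ in the $v_{7}$-chart; for the stated values to come out one needs $v_{7}^{*}=n+\alpha$ (so $a_{1}=-k(n+1)$) and the analogous value $n$ in the $\alpha_{2}$ computation (so $a_{2}=-k(n+1+\alpha)$). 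Granting this, I would close the argument by imposing the standard normalization $\chi(\delta)=1$ for the null root $\delta=\alpha_{0}+\alpha_{1}=\alpha_{2}+\alpha_{3}=-\mathcal{K}_{\mathcal{X}}$: since $a_{0}+a_{1}=-k$, this forces $k=-1$, and back-substitution gives $a_{0}=-n$, $a_{1}=n+1$, $a_{2}=1+n+\alpha$, $a_{3}=-(n+\alpha)$ as in \eqref{eq:Laguerre-root-vars}, the identity $a_{2}+a_{3}=-k=1$ providing an independent check. Alternatively, one can bypass the explicit auxiliary curves by computing $\chi(\delta)$ directly from the residue data on the anti-canonical divisor, which also fixes $k$.
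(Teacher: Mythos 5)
You set up the period-map computation exactly as the paper does, and most of what you write is correct and matches the paper: the chart-by-chart residues in part (i), the identification of the relevant anticanonical components for each root, and the two easy periods $a_{0}=\chi(\mathcal{H}_{y}-\mathcal{F}_{12})=kn$ and $a_{3}=\chi(\mathcal{H}_{y}-\mathcal{F}_{34})=k(n+\alpha)$. The genuine gap is at the step you yourself call the main obstacle: for $\alpha_{1}=2\mathcal{H}_{x}+\mathcal{H}_{y}-\mathcal{F}_{345678}$ and $\alpha_{2}=2\mathcal{H}_{x}+\mathcal{H}_{y}-\mathcal{F}_{125678}$ you never locate the points $Q_{i}=C_{i}^{1}\cap d_{5}$, but instead read off from the desired answer that one ``needs'' $v_{7}^{*}=n+\alpha$ (resp.\ $n$) and proceed ``granting this''. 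Those two numbers are precisely the nontrivial content of part (ii), and they are exactly what the paper's proof computes: the unique curve in the class $2\mathcal{H}_{x}+\mathcal{H}_{y}-\mathcal{F}_{34567}$ is $x(x+n+\alpha)+y-s=0$; in the $(u_{7},v_{7})$-chart its proper transform is $v_{7}-(n+\alpha)+u_{7}\bigl(s+(n+\alpha)v_{7}-su_{7}v_{7}\bigr)=0$, which meets $d_{5}=\{u_{7}=0\}$ at $v_{7}=n+\alpha$, giving $a_{1}=\int_{1+2n+\alpha}^{\,n+\alpha}k\,dv_{7}=-k(n+1)$; the analogous curve through $q_{1},q_{2},q_{5},q_{6},q_{7}$ (it is $x(x+n)+y=0$) meets $d_{5}$ at $v_{7}=n$ and gives $a_{2}=-k(1+n+\alpha)$. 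Without these two explicit curve computations the values of $a_{1},a_{2}$, and hence the conclusion $k=-1$, are assumed rather than proved.

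The fallback you offer at the end does not repair this. The surface is of additive type: the residue one-forms you computed, $k\,dv_{1}$, $k\,dv_{3}$, $k\,dv_{7}$ and $(k/v_{5}^{2})\,dv_{5}$, have only residue-free double poles at the nodes of the anticanonical cycle, so there is no purely local ``residue of a residue'' that evaluates $\chi(\delta)$; and since $\chi(\delta)=a_{0}+a_{1}=kn+a_{1}$, imposing the normalization $\chi(\delta)=1$ can only fix $k$ once you know $a_{1}$ (or $a_{2}$, or $\chi(\delta)$ as a function of $k$) from an honest period integral --- which is exactly the computation you deferred. So your proposal is the paper's argument in outline, with its decisive calculation left as an assumption.
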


	\begin{proof} Detailed examples of such computations can be found in \cite{DzhTak:2018:SASGTDPE,DzhFilSto:2020:RCDOPWHWDPE}, so here we only explain 
		one such computation. Consider the root $\alpha_{1}$ and represent it as a difference of two effective classes:
		\begin{equation*}
			\alpha_{1} = 2\mathcal{H}_{x} + \mathcal{H}_{y} - \mathcal{F}_{3} - \mathcal{F}_{4}- \mathcal{F}_{5}- \mathcal{F}_{6}- \mathcal{F}_{7} - \mathcal{F}_{8}
			= [2H_{x} + H_{y} - F_{3} -F_{4} -F_{5} -F_{6}-F_{7}] - [F_{8}].
		\end{equation*}
		The first class is a class of a proper transform of a $(2,1)$-curve passing through the points $q_{3},\ldots, q_{7}$ and a 
		direct computation shows that its equation in the affine 
		$(x,y)$-chart is $x(x+n+\alpha)+y-s=0$. The second class is the class of the exceptional divisor $F_{8}$, so we need to consider these curves 
		in the $(u_{7},v_{7})$-chart. In this chart the proper transform $2H_{x} + H_{y} - F_{34567}$ is given by the equation 
		$v_{7}-(n+\alpha) +u_{7}(s + (n+\alpha)v_{7} - s u_{7}v_{7}) = 0$, which intersects with the irreducible divisor $d_{5} = F_{7}-F_{8}$,
		given by the equation $u_{7} = 0$, at the point $(u_{7},v_{7}) = (0,n+\alpha)$. The exceptional divisor $F_{8}$ intersects with $d_{5}$ at the point 
		$q_{8}(0,1 + 2n + \alpha)$. Computing the symplectic form $\omega$ in the $(u_{7},v_{7})$-chart,
		\begin{equation*}
			\omega = k \frac{dx \wedge dy}{x^{2}} = k \frac{dX \wedge dY}{Y^{2}} =  k \frac{du_{5} \wedge dv_{5}}{u_{5} v_{5}^{2}} = 
			k \frac{du_{6} \wedge dv_{6}}{u_{6}^{2} v_{6}^{2}} = k \frac{du_{7} \wedge dv_{7}}{u_{7} (1-u_{7}v_{7})^{2}},
		\end{equation*}
		we see that 
		\begin{equation*}
			\operatorname{res}_{d_{5}} \omega =  \operatorname{res}_{u_{7}=0} k \frac{du_{7} \wedge dv_{7}}{u_{7} (1-u_{7}v_{7})^{2}} =  k \, dv_{7},
			\qquad \chi(\alpha_{1}) = \int_{v_{7} = 1 + 2n + \alpha}^{v_{7} = n + \alpha} k \, dv_{7} = -k(n+1).
		\end{equation*}
		Other computations are similar and we get 
		\begin{equation*}
			a_{0} = kn,\quad a_{1} =-k (n+1),\quad a_{2} = -k(1 + n + \alpha),\quad a_{3} = k(n+\alpha).
		\end{equation*}
		The normalization $\chi(\delta) = a_{0}+a_{1} = a_{1}+a_{2} = - k = 1$ then tells us that $k=-1$.
	\end{proof}
	
	\begin{remark} Note that the root variable evolution under the discrete step $n\mapsto n+1$ is given by 
		\begin{equation*}
		\overline{a}_{0} = a_{0} - 1,\quad \overline{a}_{1} = a_{1} + 1,\quad \overline{a}_{2} = a_{2} + 1,\quad \overline{a}_{3} = a_{3}-1,	
		\end{equation*}
		which corresponds to the correct (dual) translation on the root basis given by \eqref{eq:Laguerre-translation}.		
	\end{remark}

\subsection{The change of coordinates} % (fold)
\label{sub:Laguerre-to-KNY}
We are finally in the position to prove Theorem~\ref{thm:Laguerre-coord-change} that gives the explicit change of coordinates matching 
the dynamics \eqref{eq:Laguerre-xy-rec} and \eqref{eq:dP-KNY-map}. For that, we use the change of basis \eqref{eq:Laguerre-Sakai-Pic}
in Lemma~\ref{lem:Laguerre2KNY-pic}. This computation is standard, see \cite{DzhTak:2018:SASGTDPE,DzhFilSto:2020:RCDOPWHWDPE} for 
detailed examples, so here we only outline it. From the change of basis on the Picard lattice for the coordinate classes, 
\begin{equation*}
	\mathcal{H}_{q} = \mathcal{H}_{x}, \qquad
	\mathcal{H}_{p} = 2\mathcal{H}_{x} + \mathcal{H}_{y} - \mathcal{F}_{5} - \mathcal{F}_{6}- \mathcal{F}_{7} - \mathcal{F}_{8}, 	
\end{equation*}
we see that, up to M\"obius transformations, $q\sim x$, and $p$ is a (projective) coordinate on a pencil 
$|H_{p}| = |2H_{x} + H_{y} - F_{5678}|$ of $(2,1)$-curves passing through the points $q_{5},q_{6},q_{7},q_{8}$, and so 
$p\sim x^{2} + x(1 + 2n + \alpha) + y$. Thus, we take the change of coordinates to be 
\begin{equation*}
	q(x,y) = \frac{A x + B}{C x + D},\qquad p(x,y) = \frac{K(x^{2} + x(1 + 2n + \alpha) + y) + L}{M(x^{2} + x(1 + 2n + \alpha) + y) + N},
\end{equation*}
where $A,\ldots,N$ are parameters of M\"obius transformations that can be found from the knowledge of the divisor mapping, 
see \cite{CarDzhTak:2017:FDIMDPE}. For example, from the correspondence $F_{1}-F_{2} = E_{3} - E_{4}$ we see that under the change of coordinates
the point $q_{1}(0,0)$ should map to the point $p_{3}(\infty,0)$, i.e.,
\begin{equation*}
	Q(0,0) = \frac{C\cdot 0 + D}{A\cdot 0 + B} = \frac{D}{B} = 0,\quad p(0,0) = \frac{L}{N} = 0\implies D = L = 0,
\end{equation*}
and then we can take $C=K=1$ to get 
\begin{equation*}
	q(x,y) = A + \frac{B}{x},\qquad p(x,y) = \frac{x^{2} + x(1 + 2n + \alpha) + y}{M(x^{2} + x(1 + 2n + \alpha) + y) + N}.
\end{equation*}
From the correspondence $F_{7}-F_{8}=E_{5}-E_{6}$ we see that the change of coordinates mapping, when written in the charts $(u_{7},v_{7})\to (q,P)$
and restricted to $u_{7}=0$, should map to $(q=0, P=0)$:
\begin{align*}
	q(u_{7},v_{7}) &= A + B u_{7},\\
	P(u_{7},v_{7}) &= - \frac{u_{7} \left(v_{7} \left(M (\alpha +2 n+1)+N u_{7}\right)-N\right)-M \left(\alpha +2 n-v_{7}+1\right)
	}{\alpha -v_7 \left(u_7 (\alpha +2 n+1)+1\right)+2 n+1},
\end{align*}
so $q(0,v_{7})=A=0$ and $P(0,v_{7}) = M=0$. Thus, 
\begin{equation*}
	q(x,y) = \frac{B}{x},\qquad p(x,y) = \frac{x^{2} + x(1 + 2n + \alpha) + y}{N}.
\end{equation*}
Proceeding in the same way, from the correspondence $F_{3}-F_{4} = E_{1} - E_{2}$ we deduce that $N=s$, from the correspondence $F_{4} = E_{2}$ 
we get $B(1+n)=-sa_{1}$, and since we know, from the period map computation, that $a_{1} = n+1$, we get $B=-s$ which gives us
the required change of variables \eqref{eq:Laguerre-KNY-coord-change}. 
Finally, from the correspondence $F_{5} - F_{6} = E_{7} - E_{8}$ we get the relationship between the weight parameter $s$
and the Painlev\'e parameter $t$, $s = -t$. The inverse change of variables can now be either obtained directly, or computed in the similar way. This concludes
the proof of Theorem~\ref{thm:Laguerre-coord-change}.

% subsection Laguerre-to-KNY (end)	

\subsection{Half-maps and the point evolution} % (fold)
\label{sub:Laguerre-half-maps}
Consider now the individual maps $\tilde{\varphi}_{1}:(x,y)\to (x,\overline{y})$ and $\tilde{\varphi}_{2}:(x,y)\mapsto (\underline{x},y)$. Using the 
action of the half-maps on the Picard lattice described in Lemma~\ref{lem:Laguerre-Pic-action}, we see that the corresponding action on the symmetry roots 
$\alpha_{i}$, given by \eqref{eq:Laguerre-symmetry-rb}, is
\begin{align*}
	(\tilde{\varphi}_{1})_{*}&:\langle \alpha_{0},\alpha_{1},\alpha_{2},\alpha_{3} \rangle \mapsto \langle \alpha_{1},\alpha_{0},\alpha_{3},\alpha_{2} \rangle,\\
	(\tilde{\varphi}_{2})_{*}&:\langle \alpha_{0},\alpha_{1},\alpha_{2},\alpha_{3} \rangle \mapsto \langle -\alpha_{0},2\alpha_{0} + \alpha_{1},
	\alpha_{2} + 2\alpha_{3},-\alpha_{3} \rangle.
\end{align*}
The standard techniques (see, e.g., \cite{DzhTak:2015:GARFSTDPE}) then allow us to decompose these mappings in terms of generators of the extended affine Weyl 
symmetry group $\widetilde{W}(2A_{1}^{(1)})$ described in Section~\ref{sub:the_extended_affine_weyl_symmetry_group}. We get 
\begin{equation}
	\tilde{\varphi}_{1} \sim \sigma_{2}\sigma_{1}\sigma_{2}\sigma_{1}:(x,y)\to (\tilde{x},\tilde{y}),\qquad 
	\tilde{\varphi}_{2} \sim w_{0}w_{3}:(x,y)\to (\undertilde{x},\undertilde{y}).
\end{equation}
However, these mappings differ from the mappings $\tilde{\varphi}_{i}$ by some small gauge transformations. Specifically, $\tilde{x} = - x$, $\tilde{y}=\overline{y}$, 
$\undertilde{x} = - \underline{x}$, and $\undertilde{y} = y$.  For the full maps these gauge transformations cancel each other and the decomposition of, for example, 
the forward map is, as expected,
the same as the decomposition of the standard discrete Painlev\'e equation ${[1 \overline{1} \overline{1} 1]}$ given in \eqref{eq:dP-KNY-decomp},
\begin{align*}
	\tilde{\varphi} &= \tilde{\varphi}_{2}^{-1}\circ \tilde{\varphi}_{1} = w_{3}w_{0}\sigma_{2}\sigma_{1}\sigma_{2}\sigma_{1} = 
	w_{3}\sigma_{2}\sigma_{1}\sigma_{2}\sigma_{1} w_{1} \\
	&= \sigma_{2}\sigma_{1}\sigma_{2}\sigma_{1} w_{2}w_{1} 
	= \sigma_{1}\sigma_{2}\sigma_{1}\sigma_{2} w_{2}w_{1}:(x,y)\mapsto (\overline{x},\overline{y}), 
\end{align*}
where we used various commutation relations in the extended affine Weyl group $\widetilde{W}(2A_{1}^{(1)})$. However, these gauge transformations actually explain 
the evolution of points under half-maps given in Lemma~\ref{lem:Laguerre-Pic-action} parts (a) and (b). For example, under the forward half-map $\tilde{\varphi}_{1}$
the points $q_{2}$, $q_{4}$, and $q_{8}$ evolve as 
\begin{align*}
	&q_{2}(u_{1}=0,v_{1}=-n=a_{0})\mapsto \tilde{q}_{2}\left(\tilde{u}_{1}=-\overline{u}_{1}=0,
		\tilde{v}_{1} = - \overline{v}_{1} = - \overline{a}_{0} = -a_{1} = -(n+1)\right),\\
	&q_{4}(u_{3}=0,v_{3}=-(n+\alpha)=a_{3})\mapsto \tilde{q}_{4}\left(\tilde{u}_{3}=0, 
		\tilde{v}_{3} = - \overline{v}_{3} = - \overline{a}_{3} = -a_{2} = -(n+1 + \alpha)\right),\\
	&q_{8}(u_{7}=0,v_{7}= 1+ 2n+\alpha=a_{2}-a_{0})\mapsto 
	\tilde{q}_{8}\left(\tilde{u}_{7}=0,\right.\\ 
	&\hskip3in\left.\tilde{v}_{7}= - \overline{v}_{7} = - \overline{a}_{2} + \overline{a}_{0} = a_{1} - a_{3} = 1 + 2n + \alpha \right),
\end{align*}
and so it looks like points $q_{2}$ and $q_{4}$ evolve with the parameter $n$, and the point $q_{8}$ remains stationary. The situation with the backward 
half-map $\tilde{\varphi}_{2}$ is similar. This somewhat delicate point is explained at length in \cite[Section2.9]{DzhFilSto:2020:RCDOPWHWDPE}.

% subsection Laguerre-half-maps (end)

% subsection induced_linear_mapping_on_the_picard_lattice (end)
% section Laguerre-weight (end)

\section{Generalized Charlier Polynomials and the alt.~d-$\Pain{II}$ Equation} % (fold)
\label{sec:Charlier-weight}

In this section we consider another example of a recurrence relation that appeared in the study of orthogonal polynomials, this time
the polynomials with the generalized Charlier weight. 

Recall that, given some parameter $a>0$, the classical Charlier polynomials $C_{n}(x;a)$ (\cite[Chapter VI]{Chi:1978:IOP}, 
\cite[9.14]{Koekoek:2010aa}) 
are polynomials that are orthogonal on the lattice $\mathbb{N}=\{0,1,2,3,\ldots\}$ with
respect to the Poisson distribution: 
\begin{equation}
	\sum_{k=0}^{\infty}C_n(k;a)C_m(k;a)\frac{a^k}{k!}=a^{-n}\exp(a)\, n!\,\delta_{n,m},\qquad a>0.%
\end{equation}
The three term recurrence relation for the Charlier polynomials is
\begin{equation*}
-x C_n(x;a)=a C_{n+1}(x;a) - (n+a) C_n(x;a) + n C_{n-1}(x;a),
\end{equation*}
and the corresponding normalized recurrence relation for the \emph{monic} polynomials $P_{n}(x;a)$ is
\begin{equation*}
	x P_{n}(x;a) = P_{n+1}(x;a) + b_{n} P_{n}(x;a) + a_{n}^{2} P_{n-1}(x;a),\qquad C_{n}(x;a) = \left(-\frac{1}{a}\right)^{n} P_{n}(x;a),
\end{equation*}
where the recurrence coefficients are given explicitly  by $a_{n}^{2}=n a$ and $b_{n}=n+a$ 
for $n\in\mathbb{N}$. In \cite{SmeVan:2012:OPB} the classical Charlier weight was 
generalized by adding one more parameter $\beta>0$ to the weight function. The resulting function 
$w(x)=\frac{\Gamma(\beta)a^x}{\Gamma(\beta+x)\Gamma(x+1)}$, when restricted to the lattice $\mathbb{N}$, becomes
$w(k)=\frac{a^k}{(\beta)_k k!}$, where $(\cdot)_{k}$ is the usual Pochhammer symbol. Corresponding
monic orthogonal polynomials $P_{n}(x;a,\beta)$  satisfy 
\begin{equation}
	\label{eq:Charlier-ortho}
	\begin{aligned}
			&\sum_{k=0}^{\infty}P_n(k;a,\beta)P_m(k;a,\beta)\frac{a^k}{(\beta)_k k! }=0,\  n\neq m;\\
			&x P_n(x;a,\beta) = P_{n+1}(x;a,\beta) + b_{n} P_n(x;a,\beta) + a_{n}^{2} P_{n-1}(x;a,\beta).
	\end{aligned}
\end{equation} 
Recurrence coefficients $a_{n}^{2}$ and $b_{n}$ are described by the following Theorem.

\begin{theorem}{\rm \cite[Th.~2.1]{SmeVan:2012:OPB}}
	\label{thm:Charlier-rec-discrete}
	The recurrence coefficients for the orthogonal
polynomials defined by (\ref{eq:Charlier-ortho})  on the lattice $\mathbb{N}$  satisfy
\begin{equation}\label{eq:Charlier-rec-original}
	\begin{aligned}
		b_n+b_{n-1}-n+\beta&=\frac{an}{a_n^2},\\
		(a_{n+1}^2-a)(a_n^2-a)&=a(b_n-n)(b_n-n+\beta-1),
	\end{aligned}
\end{equation}
with the initial conditions
\begin{equation}\label{eq:Charlier-rec-initial}
a_0^2=0,\;\;b_0=\frac{\sqrt{a}
I_{\beta}(2\sqrt{a})}{I_{\beta-1}(2\sqrt{a})},
\end{equation} where
$I_{\beta}$ is the modified Bessel function of the first kind.
\end{theorem}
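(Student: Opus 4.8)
The plan is to derive the two identities in \eqref{eq:Charlier-rec-original} by the classical Laguerre--Freud technique, using the discrete Pearson equation of the generalized Charlier weight $w$ together with orthogonality and the three-term recurrence in \eqref{eq:Charlier-ortho}, and then to read off the initial data \eqref{eq:Charlier-rec-initial} from the first two moments of $w$. The first step is the Pearson equation: from $w(k)=a^{k}/\big((\beta)_{k}\,k!\big)$ one has $w(k+1)/w(k)=a/\big((k+1)(k+\beta)\big)$, i.e. $\sigma(x+1)\,w(x+1)=a\,w(x)$ with the quadratic $\sigma(x):=x(x+\beta-1)$. The crucial structural feature is $\sigma(0)=0$: because of it, and of the super-exponential decay of $w$, all boundary terms vanish in summation by parts, which yields the shift rule $\mathcal L[\,f(x+1)\,]=\tfrac1a\,\mathcal L[\,\sigma(x)\,f(x)\,]$ for every polynomial $f$, where $\mathcal L[\,p\,]:=\sum_{k\ge0}p(k)\,w(k)$. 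Throughout I write $h_{n}:=\mathcal L[P_{n}^{2}]$, so that $a_{n}^{2}=h_{n}/h_{n-1}$, and $\Delta p(x):=p(x+1)-p(x)$.

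The first, linear relation is obtained by evaluating $\mathcal L[\Delta P_{n}\cdot P_{n-1}]$ in two ways (for $n\ge1$). Since $\Delta P_{n}$ equals $n$ times a monic polynomial of degree $n-1$, orthogonality gives $\mathcal L[\Delta P_{n}\cdot P_{n-1}]=n\,h_{n-1}$. On the other hand $\mathcal L[\Delta P_{n}\cdot P_{n-1}]=\mathcal L[P_{n}(x+1)\,P_{n-1}(x)]=\tfrac1a\,\mathcal L[\sigma(x)\,P_{n}(x)\,P_{n-1}(x-1)]$ by the shift rule (the $k=0$ term being killed by $\sigma(0)=0$). Expanding $P_{n-1}(x-1)=P_{n-1}(x)-(n-1)P_{n-2}(x)+(\text{lower degree})$ in the orthogonal basis and using the three-term recurrence to expand $\sigma(x)P_{n-1}(x)$, only the $P_{n-1}$- and $P_{n-2}$-components survive pairing against $P_{n}$, and one gets $a\,n\,h_{n-1}=(b_{n}+b_{n-1}+\beta-n)\,h_{n}$, which is precisely the first equation of \eqref{eq:Charlier-rec-original}.

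The second, nonlinear relation requires combining two further two-way evaluations and then an elimination. First, $\mathcal L[\Delta P_{n}\cdot P_{n}]=0$ by degree; rewriting the left-hand side via the shift rule as $\tfrac1a\,\mathcal L[\sigma(x)\,P_{n}(x)\,P_{n}(x-1)]-h_{n}$ and expanding $P_{n}(x-1)$ to second subleading order produces an expression for the subleading coefficient $s_{n}:=-\sum_{j=0}^{n-1}b_{j}$ of $P_{n}$ in terms of $a_{n}^{2},a_{n+1}^{2}$ and $b_{n}$. Second, evaluating $\mathcal L[\Delta P_{n+1}\cdot P_{n-1}]$ two ways gives on one side the coefficient of $P_{n-1}$ in $\Delta P_{n+1}$ times $h_{n-1}$ --- which a short Taylor computation identifies with $\big(\tfrac{n(n+1)}{2}-n\,b_{n}-s_{n}\big)h_{n-1}$ --- and on the other side $\tfrac1a\,\mathcal L[\sigma(x)\,P_{n+1}(x)\,P_{n-1}(x-1)]=\tfrac1a\,h_{n+1}$, so that $\tfrac{n(n+1)}{2}-n\,b_{n}-s_{n}=a_{n}^{2}a_{n+1}^{2}/a$. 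Eliminating $s_{n}$ between these two identities, the apparently nonlocal sums cancel and the resulting quadratic relation factors exactly as $(a_{n+1}^{2}-a)(a_{n}^{2}-a)=a\,(b_{n}-n)(b_{n}-n+\beta-1)$, the second equation of \eqref{eq:Charlier-rec-original}. I expect this to be the main obstacle: one must track the subleading coefficients of the $P_{n}$, which individually are nonlocal functions of the $b_{j}$, and then recognize that after elimination the left side collapses precisely to the product $(a_{n+1}^{2}-a)(a_{n}^{2}-a)$.

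Finally, the initial data \eqref{eq:Charlier-rec-initial}: $a_{0}^{2}=0$ because $P_{-1}\equiv0$, and, since $P_{0}\equiv1$, $b_{0}=\mathcal L[x\,P_{0}^{2}]/\mathcal L[P_{0}^{2}]=\mathcal L[x]/\mathcal L[1]$. Using $(\beta)_{k}=\Gamma(\beta+k)/\Gamma(\beta)$ and the series $I_{\nu}(z)=\sum_{k\ge0}(z/2)^{2k+\nu}/\big(k!\,\Gamma(k+\nu+1)\big)$, one obtains $\mathcal L[1]=\Gamma(\beta)\,a^{(1-\beta)/2}\,I_{\beta-1}(2\sqrt{a})$ and, after reindexing $k\mapsto k+1$, $\mathcal L[x]=a\,\Gamma(\beta)\,a^{-\beta/2}\,I_{\beta}(2\sqrt{a})$; the quotient is $b_{0}=\sqrt{a}\,I_{\beta}(2\sqrt{a})/I_{\beta-1}(2\sqrt{a})$, as stated. (The monic orthogonal polynomials exist, i.e.\ the Hankel determinants do not vanish, because $w>0$ on the infinite lattice $\mathbb{N}$ and all moments of $w$ are finite.)
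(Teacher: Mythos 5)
Your proposal is correct: I checked the key intermediate identities (the Pearson relation $\sigma(x+1)w(x+1)=a\,w(x)$ with $\sigma(x)=x(x+\beta-1)$, the linear identity $an\,h_{n-1}=(b_n+b_{n-1}+\beta-n)h_n$, the relation $\tfrac{n(n+1)}{2}-nb_n-s_n=a_n^2a_{n+1}^2/a$, and the elimination of $s_n$, which does collapse to $(a_{n+1}^2-a)(a_n^2-a)=a(b_n-n)(b_n-n+\beta-1)$), as well as the Bessel-series evaluation of $b_0$. Note, however, that the paper itself gives no proof of this statement — it is quoted verbatim from Smet--Van Assche \cite[Th.~2.1]{SmeVan:2012:OPB} — and your Laguerre--Freud argument based on the discrete Pearson equation is essentially the derivation used in that cited source, so there is nothing genuinely different to compare; the only nitpick is a wording slip where you say you expand $\sigma(x)P_{n-1}(x)$, whereas the pairing actually needed is $\sigma(x)P_n(x)$ against $P_{n-1}(x-1)$ (or equivalently $\sigma(x)P_{n-1}(x-1)$ against $P_n$).
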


These generalized Charlier polynomials were further studied in \cite{FilVan:2013:RCGCPFPE}, 
where it was shown that the coefficients of the three-term recurrence relation
are related to solutions of the
(differential) fifth Painlev\'e equation $\Pain{V}$ 
\begin{equation}\label{eq:P5}
y''  =
\left(\frac{1}{2y}+\frac{1}{y-1}\right)(y')^2
-
\frac{y'}{t}+\frac{(y-1)^2}{t^2}\left(A
y + \frac{B}{y}\right)+\frac{C
y}{t}+\,\,\frac{D y (y+1)}{y-1},
\end{equation}
where $y=y(t)$ and $A,B,C,D$ are arbitrary complex parameters. In \cite{FilVan:2013:RCGCPFPE}, 
for the generalized Charlier case, these parameters were shown to be 
\begin{equation}\label{eq:P5-par}
A=\frac{(\beta-1)^2}{2},\;\;B=-\frac{(n+1)^2}{2},\;\;C=2k_1,\;\;D=0,
\end{equation}
where $t$ is related to the weight parameter $a$ as $a=e^{t}$ and $k_{1}$ is some scaling parameter. 
Moreover, it was shown that the recurrence \eqref{eq:Charlier-rec-original} was related to B\"acklund transformations of 
$\Pain{V}$. However, it is well-known, and it was also pointed out in \cite{FilVan:2013:RCGCPFPE}, that for 
$D=0$ the fifth Painlev\'e equation $\Pain{V}$ is in fact equivalent to the third Painlev\'e equation $\Pain{III}$,
\begin{equation}\label{eq:P3}
	u''= \frac{(u')^{2}}{u} - \frac{u'}{z} + \frac{\tilde{\alpha} u^{2} + \tilde{\beta}}{z} + \tilde{\gamma} u^{3} - \frac{\tilde{\delta}}{u},
\end{equation}
here $u=u(z)$ and $\tilde{\alpha},\tilde{\beta},\tilde{\gamma},\tilde{\delta}$ are again arbitrary complex constants. Thus, it is more natural to study the relationship 
between the recurrence \eqref{eq:Charlier-rec-original} and the B\"acklund transformations of $\Pain{III}$.

In this section we do just that. By studying \eqref{eq:Charlier-rec-original} from the geometric point of view, we show that it is
equivalent to a particular composition of elementary B\"acklund transformations, known as the alt.~d-$\Pain{II}$ equation. We
also show explicitly how to represent that recurrence as a composition of such elementary transformations, and also give an 
explicit change of coordinates reducing this recurrence to the standard form. 

As before, we first change the notation to $x_{n}:=a_{n}^{2}$, $y_{n}:=b_{n}$ and $\alpha:=a$.
Thus, our main object of study is the following discrete dynamical system:
\begin{equation}\label{eq:Charlier-xy-rec}
\left\{\begin{aligned}
(x_{n+1}-\alpha)(x_{n}-\alpha) &= \alpha(y_{n}-n)(y_{n}+\beta-n-1),\\
y_{n-1} + y_{n} +\beta - n &= \frac{\alpha n}{x_{n}}.
 \end{aligned}\right.
\end{equation}
The main result of this section is the following Theorem.

\begin{theorem}\label{thm:Charlier2Sakai-coord-change}
	Recurrence \eqref{eq:Charlier-xy-rec} is equivalent to Sakai's alt.~d-$\Pain{II}$ equation \eqref{eq:dP-Sakai-map}	
	\begin{equation*}
	% \psi\left(\begin{matrix}
	% a_0 & a_1\\
	% b_0 & b_1
	% \end{matrix};\ s;\ f,g\right)=\left(\begin{matrix}
	% a_0 & a_1\\
	% b_{0}+1 & b_{1}-1
	% \end{matrix};\ s;
	{f}_{n+1}=b_{0} - a_{0} - f_{n} - g_{n} - \frac{s}{g_{n}},\qquad 
	{g}_{n+1} = \frac{s}{g_{n}} - \frac{b_{0}s}{g_{n}f_{n+1}},
	% \right),
	\end{equation*}
	via the following explicit change of coordinates and parameter identification:
	\begin{equation}\label{eq:Charlier2Sakai-coord}
		\left\{
		\begin{aligned}
			x_{n}(f,g) &=\frac{s(f_{n} + 1 - b_{0})}{g_{n}},\\
			y_{n}(f,g) &=f_{n} + g_{n},\\
			n&=b_{0}-1,\  \alpha = -s,\ \beta=a_{0},			
		\end{aligned}
		\right. 
		\qquad 
		\left\{
		\begin{aligned}
		f_{n}(x,y) &= \frac{x_{n}y_{n} - \alpha n}{x_{n} - \alpha},\\
		g_{n}(x,y) &= \frac{\alpha(n-y_{n})}{x_{n}-\alpha},\\
		a_0&=\beta,  \ a_1=1-\beta, \\
        b_0&=1+n,  \  b_{1}=-n,\  s=-\alpha.			
		\end{aligned}
		\right.
	\end{equation}
	Note that the evolution $n\to n+1$ corresponds to the correct root variables evolution $\overline{a}_{0} = a_{0}$, $\overline{a}_{1}=a_{1}$, 
	$\overline{b}_{0}=b_{0}+1$, $\overline{b}_{1} = b_{1} - 1$ (recall that $a_{0}+a_{1} = b_{0} + b_{1} = 1$).
\end{theorem}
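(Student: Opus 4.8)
The plan is to follow the same algorithmic strategy of \cite{DzhFilSto:2020:RCDOPHWDPE} that was used for the Laguerre example in Section~\ref{sec:Laguerre-weight}, but now for the recurrence \eqref{eq:Charlier-xy-rec}. First I would split \eqref{eq:Charlier-xy-rec} into the two half-step maps: the backward half-map $\tilde\psi_1:(x_n,y_n)\mapsto (x_n,y_{n-1})$, obtained by solving the second equation for $y_{n-1}$, and the forward half-map $\tilde\psi_2:(x_n,y_n)\mapsto(x_{n+1},y_n)$, obtained by solving the first equation for $x_{n+1}$. Both are birational maps of $\mathbb{P}^1\times\mathbb{P}^1$, and the next step is to compactify and compute their base points by introducing the charts $(X,y),(x,Y),(X,Y)$ with $X=1/x$, $Y=1/y$ and locating where numerators and denominators vanish simultaneously. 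I expect to find eight base points $q_1,\dots,q_8$, organized into cascades attached to the components of the $(2,2)$ polar divisor of the symplectic form $\omega=(k/(x-\alpha))\,dx\wedge dy$ (or a similar form dictated by the singularity pattern), so that blowing them up produces a surface $\mathcal{X}=\mathcal{X}_{n,\alpha,\beta}$ with anti-canonical divisor of type $D_6^{(1)}$, matching the surface already used for the Laguerre case.

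Next I would compute the induced linear action of the full forward map $\tilde\psi=\tilde\psi_1^{-1}\circ\tilde\psi_2$ (and its inverse) on $\operatorname{Pic}(\mathcal{X})=\operatorname{Span}_{\mathbb{Z}}\{\mathcal{H}_x,\mathcal{H}_y,\mathcal{F}_1,\dots,\mathcal{F}_8\}$, exactly as in Lemma~\ref{lem:Laguerre-Pic-action}. Then I would identify the surface root basis with the standard $D_6^{(1)}$ basis from Appendix~\ref{sec:dP-2A1-std} (this should again be a natural identification $\mathcal{H}_x=\mathcal{H}_f$, $\mathcal{H}_y=\mathcal{H}_g$, $\mathcal{F}_i=\mathcal{E}_i$, possibly after relabeling), read off the induced action on the symmetry root basis, and check that it equals the translation $[00\bar11]$ defining Sakai's alt.~d-$\Pain{II}$ equation \eqref{eq:dP-Sakai-map} --- here the key point, and the conceptual novelty relative to the Laguerre case, is that although the surface is the same, the translation vector on the symmetry sublattice is a \emph{different} one, so one must be careful to match the correct conjugacy class (and, if the raw computation gives the opposite or a conjugate translation, conjugate by the appropriate Dynkin-diagram automorphisms $\sigma_i$ as was done in Section~\ref{sub:Laguerre-Pic-match}).

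Having fixed the change of basis on $\operatorname{Pic}(\mathcal{X})$, I would then run the period-map computation of \cite{Sak:2001:RSAWARSGPE}: write each symmetry root $\alpha_i,\beta_i$ as a difference of effective classes, find the relevant component $d_k$ of the anti-canonical divisor, compute $\operatorname{res}_{d_k}\omega$, and integrate to get the root variables $a_0,a_1,b_0,b_1$ in terms of $n,\alpha,\beta$; this should produce $a_0=\beta$, $b_1=-n$ (and $a_1=1-\beta$, $b_0=1+n$), with $s=-\alpha$ arising from matching the base-point coordinates, which is precisely the parameter identification claimed in \eqref{eq:Charlier2Sakai-coord}. Finally, from the change of basis on the coordinate classes $\mathcal{H}_f,\mathcal{H}_g$ one reads off, up to M\"obius transformations, that $f$ and $g$ are (projective) coordinates on certain pencils of curves through subsets of the $q_i$; writing $f(x,y)$ and $g(x,y)$ as such M\"obius-normalized rational functions with undetermined constants and pinning the constants down from a few divisor correspondences $\mathcal{F}_i-\mathcal{F}_j=\mathcal{E}_k-\mathcal{E}_\ell$ (as in Section~\ref{sub:Laguerre-to-KNY}) yields the explicit formulas in \eqref{eq:Charlier2Sakai-coord}; one then verifies directly that this substitution carries \eqref{eq:Charlier-xy-rec} to \eqref{eq:dP-Sakai-map} and that $n\mapsto n+1$ induces $\overline{b}_0=b_0+1,\ \overline{b}_1=b_1-1$ with $a_0,a_1$ fixed, as asserted.

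The main obstacle I anticipate is \emph{not} any single hard calculation but rather the bookkeeping at the point where the surface is the same $D_6^{(1)}$ family as before yet the dynamics must come out as the \emph{inequivalent} translation $[00\bar11]$: one has to be careful about which labeling of the eight exceptional curves and which Dynkin-diagram automorphism composition reproduces Sakai's standard alt.~d-$\Pain{II}$ map \eqref{eq:dP-Sakai-map} exactly (including orientation of the translation and any residual gauge factors in the half-maps, analogous to the $\tilde x=-x$ type corrections in Section~\ref{sub:Laguerre-half-maps}), rather than merely landing in the right surface type. Once the correct translation and basis change are identified, the period-map and coordinate-change steps are routine specializations of the computations already illustrated in detail for the Laguerre recurrence.
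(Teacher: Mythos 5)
Your proposal follows essentially the same route as the paper's proof: split \eqref{eq:Charlier-xy-rec} into half-maps, blow up the eight base points to obtain a $D_{6}^{(1)}$ Sakai surface, compute the induced action on $\operatorname{Pic}(\mathcal{X})$ and the translation $\langle 0,0,-1,1\rangle\delta$ identifying the dynamics with alt.~d-$\Pain{II}$, then obtain the root variables via the period map and the coordinate change from the divisor correspondences. The only caveat is that the actual Picard-lattice identification \eqref{eq:Charlier-Sakai-Pic} is not the naive relabeling $\mathcal{H}_{x}=\mathcal{H}_{f}$, $\mathcal{F}_{i}=\mathcal{E}_{i}$ you anticipate but a genuinely quadratic change of basis (e.g.\ $\mathcal{H}_{x}=\mathcal{H}_{f}+\mathcal{H}_{g}-\mathcal{K}_{16}$, which is why $x(f,g)$ mixes $f$ and $g$); your procedure of matching the surface root bases would, however, produce this automatically.
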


The techniques used to establish this result are the same as in Section~\ref{sec:Laguerre-weight} and so here we only give a brief summary.

\subsection{Identification with alt.~d-$\Pain{II}$ equation} % (fold)
\label{sub:identification_with_alt_d_pain_ii_equation}

Recurrence \eqref{eq:Charlier-xy-rec} defines two half-step maps, $\widetilde{\psi}_{1}^{n}:(x_{n},y_{n})\to (x_{n+1},y_{n})$ and 
$\widetilde{\psi}_{2}^{n}:(x_{n},y_{n})\to (x_{n},y_{n-1})$. As usual, we put $x:= x_{n}$, $\overline{x}:=x_{n+1}$, $\underline{x}:=x_{n-1}$, and similarly
for $y$. We can then define the forward map 
$\widetilde{\psi}:=(\widetilde{\psi}_{2}^{n+1})^{-1} \circ \widetilde{\psi}_{1}^{n}:(x,y)\to (\overline{x},\overline{y})$
and the backward map 
$\widetilde{\psi}^{-1}:=(\widetilde{\psi}_{1}^{n-1})^{-1} \circ \widetilde{\psi}_{2}^{n}:(x,y)\to (\underline{x},\underline{y})$. These are the maps 
we study. 

Standard computation shows that the base points of the maps $\widetilde{\psi}$ and $\widetilde{\psi}^{-1}$ are 
 \begin{equation}\label{z-points}
	 \begin{aligned}
		    &z_{1}(x=\alpha,y=n),\quad z_{2}(x=\alpha,y=n+1-\beta),\\
		    &z_{3}(x=0,Y=0)\leftarrow z_{4}(U_{3}=xy=\alpha n,V_{3} = Y = 0),\\
		    &z_{5}(X=0,Y=0)\leftarrow z_{6}(U_{5}=Xy=0,V_{5}=Y=0)\leftarrow z_{7}(U_{6}=Xy^{2}=-1,V_{6}=0)\\
			&\qquad \leftarrow z_{8}(U_{7}=Xy^{3} + y=\beta-n,V_{7}=Y = 0),
	 \end{aligned}
 \end{equation}
and their configuration and the resulting surface are shown on Figure~\ref{fig:Charlier-pt-conf}. 

\begin{figure}[ht]
	\begin{center}		
	\begin{tikzpicture}[>=stealth,basept/.style={circle, draw=red!100, fill=red!100, thick, inner sep=0pt,minimum size=1.2mm}]
	\begin{scope}[xshift=0cm,yshift=0cm]
	\draw [black, line width = 1pt] (-0.2,0) -- (3.2,0)	node [pos=0,left] {\small $H_{y}$} node [pos=1,right] {\small $y=0$};
	\draw [black, line width = 1pt] (-0.2,3) -- (3.2,3) node [pos=0,left] {\small $H_{y}$} node [pos=1,right] {\small $y=\infty$};
	\draw [black, line width = 1pt] (0,-0.2) -- (0,3.2) node [pos=0,below] {\small $H_{x}$} node [pos=1,above] {\small $x=0$};
	\draw [black, line width = 1pt,dashed] (1,-0.2) -- (1,3.2) node [pos=0,below] {\small $H_{x}$} node [pos=1,above] {\small $x=\alpha$};
	\draw [black, line width = 1pt] (3,-0.2) -- (3,3.2) node [pos=0,below] {\small $H_{x}$} node [pos=1,above] {\small $x=\infty$};
	\node (z1) at (1,1) [basept,label={[xshift = 7pt, yshift=-15pt] \small $z_{1}$}] {};
	\node (z2) at (1,2) [basept,label={[xshift = 7pt, yshift=-15pt] \small $z_{2}$}] {};
	\node (z3) at (0,3) [basept,label={[xshift = 7pt, yshift=-15pt] \small $z_{3}$}] {};
	\node (z4) at (-0.5,2.5) [basept,label={[xshift = 7pt, yshift=-15pt] \small $z_{4}$}] {};
	\node (z5) at (3,3) [basept,label={[xshift = -7pt, yshift=-15pt] \small $z_{5}$}] {};
	\node (z6) at (3,2.3) [basept,label={[right] \small $z_{6}$}] {};
	\node (z7) at (3.5,1.8) [basept,label={[right] \small $z_{7}$}] {};
	\node (z8) at (3.5,1.3) [basept,label={[right] \small $z_{8}$}] {};
	\draw [red, line width = 0.8pt, ->] (z4) -- (z3);
	\draw [red, line width = 0.8pt, ->] (z6) -- (z5);
	\draw [red, line width = 0.8pt, ->] (z7) -- (z6);
	\draw [red, line width = 0.8pt, ->] (z8) -- (z7);
	\end{scope}
	\draw [->] (6.5,1.5)--(4.55,1.5) node[pos=0.5, below] {$\operatorname{Bl}_{z_{1}\cdots z_{8}}$};
	\begin{scope}[xshift=8.5cm,yshift=0cm]
	\draw [line width = 1pt] (-0.2,0) -- (3.7,0)	node [pos=0,xshift=-8pt, yshift=0pt] {\small $H_{y}$} {};
	\draw [blue, line width = 1pt] (0.3,3) -- (3.7,3)	node [pos=0,xshift=-25pt, yshift=2pt] {\small $H_{y}-L_{35}$} {};
	\draw [red, line width = 1pt] (0,-0.2) -- (0,2.7) node [pos=0,xshift=-5pt, yshift=-7pt] {\small $H_{x}-L_{3}$};
	\draw[red, line width = 1pt] (1,0.2)--(2,1.2) node [pos=1,xshift=3pt, yshift=5pt] {\small $L_{1}$};
	\draw[red, line width = 1pt] (1,1.2)--(2,2.2) node [pos=1,xshift=3pt, yshift=5pt] {\small $L_{2}$};
	\draw[red, line width = 1pt] (-0.2,0.4)--(1.4,0.4) node [pos=0,xshift=-20pt, yshift=0pt] {\small $H_{y} - L_{1}$};
	\draw[red, line width = 1pt] (-0.2,1.4)--(1.4,1.4) node [pos=0,xshift=-20pt, yshift=0pt] {\small $H_{y} - L_{2}$};
	\draw[blue, line width = 1pt] (1.5,-0.2)--(1.5,3.2) node [pos=0,xshift=3pt, yshift=-7pt] {\small $H_{x} - L_{12}$};
	\draw[blue, line width = 1pt] (3.2,-0.2)--(4.2,0.8) node [pos=0, xshift=15pt, yshift=-7pt] {\small $H_{x} - L_{56}$};
	\draw[blue, line width = 1pt] (-0.2,2.2)--(0.8,3.2) node [pos=0,xshift=-20pt, yshift=-2pt] {\small $L_{3}-L_{4}$};
	\draw[red, line width = 1pt] (0.1,2.9)--(1.1,1.9) node [pos=1,xshift=-5pt, yshift=-3pt] {\small $L_{4}$};	
	% \draw[blue, line width = 1pt] (3.9,0.1)--(2.9,1.1) node [pos=1,left] {\small $L_{3}- L_{4}$};
	\draw[blue, line width = 1pt] (3.2,3.2)--(4.2,2.2) node [pos=0,xshift=0pt, yshift=5pt] {\small $L_{5}- L_{6}$};
	\draw[blue, line width = 1pt] (4,0.3)--(4,2.7) node [pos=1,xshift=15pt, yshift=7pt] {\small $L_{6}- L_{7}$};
	\draw[blue,  line width = 1pt] (3.5,1.5)--(5,1.5) node[pos = 0, xshift=-10pt, yshift=7pt]  {\small $L_{7}- L_{8}$};
	\draw[red, line width = 1pt ] (4.7,2)--(4.7,0.5) node[pos=1,xshift=3pt, yshift=-7pt] {\small $L_{8}$};
	\end{scope}
	\end{tikzpicture}
	\end{center}
	\caption{The Sakai Surface for the generalized Charlier recurrence} 
	\label{fig:Charlier-pt-conf}
\end{figure}	

We then immediately see that we get a $D_{6}^{(1)}$ Sakai surface,
with the surface root basis shown on Figure~\ref{fig:Charlier-surface-roots}.
\begin{figure}[ht]
	\begin{equation}\label{eq:Charlier-surface-rb}
		\raisebox{-32.1pt}{\begin{tikzpicture}[
			elt/.style={circle,draw=black!100,thick, inner sep=0pt,minimum size=2mm}]
			\path
			(-1,1)
			node (d0) [elt, label={[xshift=-10pt, yshift = -10 pt] $\delta_{0}$} ] {}
			(-1,-1)
			node (d1) [elt, label={[xshift=-10pt, yshift = -10 pt] $\delta_{1}$} ] {}
			( 0,0)
			node   (d2) [elt, label={[xshift=0pt, yshift = -25 pt] $\delta_{2}$} ] {}
			( 1,0)
			node   (d3) [elt, label={[xshift=0pt, yshift = -25 pt] $\delta_{3}$} ] {}
			( 2,0)
			node   (d4) [elt, label={[xshift=0pt, yshift = -25 pt] $\delta_{4}$} ] {}
			( 3,1)
			node   (d6) [elt, label={[xshift=10pt, yshift = -10 pt] $\delta_{6}$} ] {}
			( 3,-1) node (d5) [elt, label={[xshift=10pt, yshift = -10 pt] $\delta_{5}$} ] {};
			\draw [black,line width=1pt ] (d0) -- (d2) -- (d1)  (d2) -- (d3) --(d4) (d6) -- (d4) -- (d5);
			\end{tikzpicture}} \qquad
		\begin{alignedat}{2}
			\delta_{0} &= \mathcal{H}_{x} - \mathcal{L}_{56}, &\qquad  \delta_{4} &=\mathcal{H}_{y} - \mathcal{L}_{35},\\
			\delta_{1} &= \mathcal{L}_{7}-\mathcal{L}_{8}, &\qquad  \delta_{5} &= \mathcal{H}_{x} - \mathcal{L}_{12},\\
			\delta_{2} &= \mathcal{L}_{6}-\mathcal{L}_{7}, &\qquad  \delta_{6} &= \mathcal{L}_{3} - \mathcal{L}_{4}.\\
			\delta_{3} &= \mathcal{L}_{5}-\mathcal{L}_{6}, &\qquad
		\end{alignedat}
	\end{equation}
	\caption{The Surface Root Basis for the generalized Charlier recurrence}
	\label{fig:Charlier-surface-roots}
\end{figure}

The surface root basis \eqref{eq:Charlier-surface-rb} corresponds to the change of basis between the Sakai surface on Figure~\ref{fig:Charlier-pt-conf}
and the alt.~d-$\Pain{II}$ surface on Figure~\ref{fig:Sakai-pt-conf} given by 
	\begin{equation}\label{eq:Charlier-Sakai-Pic}
		\begin{aligned}
				\mathcal{H}_{x} &=\mathcal{H}_{f} + \mathcal{H}_{g} - \mathcal{K}_{16}, &\qquad 
				\mathcal{H}_{f} &= \mathcal{H}_{x} + \mathcal{H}_{y} - \mathcal{L}_{13}, \\
				\mathcal{H}_{y} &=\mathcal{H}_{f} + \mathcal{H}_{g} - \mathcal{K}_{67}, &\qquad 
				\mathcal{H}_{g} &= \mathcal{H}_{x} + \mathcal{H}_{y} - \mathcal{L}_{15}, \\
				\mathcal{L}_{1} &=\mathcal{H}_{f} + \mathcal{H}_{g} - \mathcal{K}_{167}, &\qquad  
				\mathcal{K}_{1} &= \mathcal{H}_{y}  - \mathcal{L}_{1},\\
				\mathcal{L}_{2} &=\mathcal{K}_{8}, &\qquad  
				\mathcal{K}_{2} &= \mathcal{L}_{6},\\		
				\mathcal{L}_{3} &=\mathcal{H}_{g} - \mathcal{K}_{6}, &\qquad  
				\mathcal{K}_{3} &= \mathcal{L}_{7},\\		
				\mathcal{L}_{4} &=\mathcal{K}_{5}, &\qquad  
				\mathcal{K}_{4} &= \mathcal{L}_{8},\\		
				\mathcal{L}_{5} &=\mathcal{H}_{f} - \mathcal{K}_{6}, &\qquad  
				\mathcal{K}_{5} &= \mathcal{L}_{4},\\		
				\mathcal{L}_{6} &=\mathcal{K}_{2}, &\qquad  
				\mathcal{K}_{6} &= \mathcal{H}_{x} + \mathcal{H}_{y} - \mathcal{L}_{135},\\		
				\mathcal{L}_{7} &=\mathcal{K}_{3}, &\qquad  
				\mathcal{K}_{7} &= \mathcal{H}_{x} - \mathcal{L}_{1},\\		
				\mathcal{L}_{8} &=\mathcal{K}_{4}, &\qquad  
				\mathcal{K}_{8} &= \mathcal{L}_{2}.
		\end{aligned}
	\end{equation}

Then the symmetry roots are given by 
\begin{equation}\label{eq:Charlier-symmetry-rb}
	\begin{aligned}
		\alpha_{0} &= 2\mathcal{H}_{x} + 2\mathcal{H}_{y} - 2\mathcal{L}_{1} - \mathcal{L}_{345678},&\qquad
		\alpha_{2} &= \mathcal{H}_{x} + 2\mathcal{H}_{y} -  \mathcal{L}_{125678},\\ 
		\alpha_{1} &= \mathcal{L}_{1} - \mathcal{L}_{2},&\qquad
		\alpha_{3} &= \mathcal{H}_{x} - \mathcal{L}_{34}.
	\end{aligned}
\end{equation}
	
The map $\tilde{\psi}$ induces the evolution $\tilde{\psi}_{*}: \operatorname{Pic}(\mathcal{X}) \to \operatorname{Pic}(\overline{\mathcal{X}})$ given by 
	\begin{equation}
		\begin{aligned}
			\mathcal{H}_{x}&\mapsto 3 \overline{\mathcal{H}}_{x} + 2 \overline{\mathcal{H}}_{y} - \overline{\mathcal{L}}_{12}
			-  2\overline{\mathcal{L}}_{34}  - \overline{\mathcal{L}}_{56}, &\qquad 
			\mathcal{L}_{4}&\mapsto \overline{\mathcal{L}}_{8},\\
			\mathcal{H}_{y}&\mapsto \overline{\mathcal{H}}_{x} + \overline{\mathcal{H}}_{y} - \overline{\mathcal{L}}_{34}, &\qquad 
			\mathcal{L}_{5}&\mapsto \overline{\mathcal{H}}_{x} + \overline{\mathcal{H}}_{y} - \overline{\mathcal{L}}_{346},\\
			\mathcal{L}_{1}&\mapsto \overline{\mathcal{H}}_{x} + \overline{\mathcal{H}}_{y} - \overline{\mathcal{L}}_{234}, &\qquad 
			\mathcal{L}_{6}&\mapsto \overline{\mathcal{H}}_{x} + \overline{\mathcal{H}}_{y} - \overline{\mathcal{L}}_{345},\\
			\mathcal{L}_{2}&\mapsto \overline{\mathcal{H}}_{x} + \overline{\mathcal{H}}_{y} - \overline{\mathcal{L}}_{134}, &\qquad 
			\mathcal{L}_{7}&\mapsto \overline{\mathcal{H}}_{x} - \overline{\mathcal{L}}_{4},\\
			\mathcal{L}_{3}&\mapsto \overline{\mathcal{L}}_{7}, &\qquad 
			\mathcal{L}_{8}&\mapsto \overline{\mathcal{H}}_{x} - \overline{\mathcal{L}}_{3},\\
		\end{aligned}
	\end{equation}
	which then gives the translation 
	\begin{equation}\label{eq:Charlier-trans}
	\tilde{\psi}_*:\upalpha=\langle\alpha_0,\alpha_1,\alpha_2,\alpha_3\rangle\mapsto
	\tilde{\psi}_*(\upalpha)=\upalpha+ \langle 0,0,-1,1 \rangle\delta
	\end{equation}
	on the symmetry root lattice. Thus, we see that the dynamics is equivalent to that of the alt.~d-$\Pain{II}$ equation, and to prove 
	Theorem~\ref{thm:Charlier2Sakai-coord-change} it remains to find the change of variables inducing the change of basis \eqref{eq:Charlier-Sakai-Pic},
	which is done in the usual way.

% subsection identification_with_alt_d_pain_ii_equation (end)

% section generalized_charlier_polynomials_and_the_alt_d_pain_ii_equation (end)

\section{Conclusions} % (fold)
\label{sec:conclusions}
In this paper we illustrated the effectiveness of the algorithmic approach, recently proposed in \cite{DzhFilSto:2020:RCDOPWHWDPE}, for solving the 
discrete Painlev\'e identification problem using the algebro-geometric tools of Sakai's geometric theory of Painlev\'e equations. 
We emphasized the importance of understanding not just the type of the surface, but also the actual translation element, by considering 
two discrete Painlev\'e equations, that appeared in applied problems, and that are \emph{non-equivalent} but at the same time are 
regularized on the \emph{same $D_{6}^{(1)}$ surface family}. We also explained how the geometric approach 
allows us to quickly and completely understand each dynamic, decompose it as a composition of elementary B\"acklund transformations, and also 
match the parameters of the problem with the canonical Painlev\'e parameters, i.e., the root variables.

A very interesting problem is to construct a natural degeneration scheme for weights for orthogonal polynomial ensembles 
that corresponds to the degeneration cascade for Sakai's classification scheme for discrete Painlev\'e equations. The examples we considered in this paper 
show that any such scheme has to be more refined than just a geometric classification, and it should indeed include the class of translation elements. 
It is interesting and important to collect more examples of this type to better understand how such problem can be approached. 
% section conclusions (end)

\section*{Acknowledgements} % (fold)
\label{sec:acknowledgements} 
XL is supported by the National Natural Science Foundation of China (No. 12301309). DJZ is supported by the National Natural Science Foundation of China (No. 12271334, 12326428) and Science and Technology Innovation  Plan of Shanghai (No. 20590742900) from Science and Technology Commission of Shanghai Municipality. GF acknowledges the support of the 
National Science Center (Poland) grant OPUS 2017/25/B/BST1/00931. 
% section acknowledgements (end)

\section*{Statements and Declarations} 
On behalf of all authors, the corresponding author states that there is no
conflict of interest and data sharing is not applicable to this article as no datasets were generated or analysed during the current study.

\appendix 

\section{Discrete Painlev\'e Equations in the d-$\dPain{2A_{1}^{(1)}/D_{6}^{(1)}}$ Family} % (fold)
\label{sec:dP-2A1-std}

% \section{Discrete $\operatorname{Painlev\acute{e}}$ equations in the d-P$(2A_1^{(1)}/D_6^{(1)})$ family}

Let us very briefly recall some main ingredients of Sakai's classification scheme \cite{Sak:2001:RSAWARSGPE} for discrete Painlev\'e equations. Each such 
equation describes a discrete dynamical system on a certain family of rational algebraic surfaces obtained by blowing up eight points, called the \emph{base points}, 
on $\mathbb{P}^{1}_{\mathbb{C}} \times  \mathbb{P}^{1}_{\mathbb{C}}$. For each family, the configuration of these base points is constrained to stay on 
a collection of irreducible rational (except for the elliptic case) curves -- these curves are the irreducible components of the 
(unique) anti-canonical divisor $-K_{\mathcal{X}}$ that is fixed for each family. Modulo gauging, the locations of the base points within that configuration 
are the parameters for the family; there exist canonical such parameters known as the \emph{root variables}. 
It turns out that the intersection configuration of these irreducible components is described by some \emph{affine Dynkin diagram} 
$\mathcal{D}_{1}$ (in our case, $D_{6}^{(1)}$) and the type of this diagram is known as the \emph{geometric type} of the family. Symmetries of this family of surfaces 
form an affine Weyl group defined by some other Dynkin diagram $\mathcal{D}_{2}$ (in our case, $2A_{1}^{(1)}$) further extended by some diagram automorphisms. The type 
of the diagram $\mathcal{D}_{2}$ is known as the \emph{symmetry type} of the family. The dynamics itself is generated by some composition of elementary symmetries and 
corresponds to a translation element in extended affine Weyl group $\widetilde{W}(\mathcal{D}_{2})$. The conjugacy class of that element is then the type of our 
discrete Painlev\'e equation, it is a complete invariant. 

Thus, we see that for each surface there are infinitely many non-equivalent discrete Painlev\'e dynamics. Moreover, each equation has many different coordinate realizations,
since there are different ways to define point configurations of the given geometric type $\mathcal{D}_{1}$. The standard and most studied examples of discrete 
Painlev\'e equations correspond to some particularly simple point configurations and ``short'' translations. Recent comprehensive survey paper \cite{KajNouYam:2017:GAPE}
gave standard examples of such configurations for each type, together with natural degenerations, and we use it as our main reference. 

For the d-$\dPain{2A_{1}^{(1)}/D_{6}^{(1)}}$ family, the standard example in \cite{KajNouYam:2017:GAPE} is equation (8.29) in Section 8.1.20. Based on the action 
of this dynamics on the root variables we label it as ${[1\overline{1}\overline{1}1]}$. However, there is also another well-known and non-equivalent example of 
discrete Painlev\'e equation of this type, known as alt.~d-$\Pain{II}$, in \cite{Sak:2001:RSAWARSGPE}. We use a sightly different form of this equation as given by
equations (2.35) (2.36) in \cite{Sak:2007:PDPETLF} and label it as ${[001\overline{1}]}$. In this appendix we describe the 
point configurations for both of these equations (and fix some minor typos in \cite{KajNouYam:2017:GAPE}).

\begin{remark} To see that equations ${[1\overline{1}\overline{1}1]}$ and ${[001\overline{1}]}$ are indeed non-equivalent one can, for example, 
	compute the length of the corresponding translation vectors on the lattice. But probably the simplest way is to 
	look at the Jordan block structure of the induced linear map on the Picard lattice. For equation ${[1\overline{1}\overline{1}1]}$
	it is $J(-1,1)^{\oplus2}\oplus J(1,1)^{\oplus5}\oplus J(1,3)$ and for equation ${[001\overline{1}]}$ it is 
	$J(-1,1)^{\oplus3}\oplus J(1,1)^{\oplus4}\oplus J(1,3)$.
\end{remark}

\subsection{The point configuration}\label{app:point-conf}
Recall that the Picard lattice of a rational algebraic surface $\mathcal{X}$ obtained by blowing up 
$\mathbb{P}^{1}_{\mathbb{C}} \times  \mathbb{P}^{1}_{\mathbb{C}}$ at eight points is generated by the 
classes of coordinate lines $\mathcal{H}_{1}$ and $\mathcal{H}_{2}$ and the classes $\mathcal{E}_{i}$
of the central fibers of the blowup,
\begin{equation*}
	\operatorname{Pic}_{\mathcal{X}} = \operatorname{Span}_{\mathbb{Z}}\{\mathcal{H}_{1},\mathcal{H}_{2},\mathcal{E}_{1},\ldots,\mathcal{E}_{8}\}.
\end{equation*}
This lattice is equipped with the symmetric bilinear product  (\emph{intersection form}) defined on the generators by 
$\mathcal{H}_{1}\bullet \mathcal{H}_{2} = 1$, 
$\mathcal{H}_{1}\bullet \mathcal{H}_{1} = \mathcal{H}_{2}\bullet \mathcal{H}_{2} = \mathcal{H}_{i}\bullet \mathcal{E}_{j} = 0$, and 
$\mathcal{E}_{i}\bullet \mathcal{E}_{j} = -\delta_{ij}$. In this lattice we have the anti-canonical divisor class 
$\mathcal{K}_{\mathcal{X}} = 2 \mathcal{H}_{1} + 2 \mathcal{H}_{2} - \mathcal{E}_{1} - \cdots -\mathcal{E}_{8}$
which for the $D_{6}^{(1)}$-type surface should decompose in the irreducible $-2$ components as
$\mathcal{K}_{\mathcal{X}} = \delta = \delta_{0} + \delta_{1} + 2 \delta_{2} + 2 \delta_{3} + 2\delta_{4} + \delta_{5} + \delta_{6}$. 
In \cite{KajNouYam:2017:GAPE} this decomposition is achieved by the choice of the \emph{surface root basis} $R = \{\delta_{i}\}$ 
shown on Figure~\ref{fig:KNY-surface-roots}.
\begin{figure}[ht]
	\begin{equation}\label{eq:std-surface-rb}
		\raisebox{-32.1pt}{\begin{tikzpicture}[
			elt/.style={circle,draw=black!100,thick, inner sep=0pt,minimum size=2mm}]
			\path
			(-1,1)
			node (d0) [elt, label={[xshift=-10pt, yshift = -10 pt] $\delta_{0}$} ] {}
			(-1,-1)
			node (d1) [elt, label={[xshift=-10pt, yshift = -10 pt] $\delta_{1}$} ] {}
			( 0,0)
			node   (d2) [elt, label={[xshift=0pt, yshift = -25 pt] $\delta_{2}$} ] {}
			( 1,0)
			node   (d3) [elt, label={[xshift=0pt, yshift = -25 pt] $\delta_{3}$} ] {}
			( 2,0)
			node   (d4) [elt, label={[xshift=0pt, yshift = -25 pt] $\delta_{4}$} ] {}
			( 3,1)
			node   (d6) [elt, label={[xshift=10pt, yshift = -10 pt] $\delta_{6}$} ] {}
			( 3,-1) node (d5) [elt, label={[xshift=10pt, yshift = -10 pt] $\delta_{5}$} ] {};
			\draw [black,line width=1pt ] (d0) -- (d2) -- (d1)  (d2) -- (d3) --(d4) (d6) -- (d4) -- (d5);
			\end{tikzpicture}} \qquad
		\begin{alignedat}{2}
			\delta_{0} &= \mathcal{E}_1-\mathcal{E}_2, &\qquad  \delta_{4} &=\mathcal{E}_6-\mathcal{E}_7,\\
			\delta_{1} &= \mathcal{E}_3-\mathcal{E}_4, &\qquad  \delta_{5} &= \mathcal{E}_5-\mathcal{E}_6,\\
			\delta_{2} &= \mathcal{H}_1-\mathcal{E}_{13}, &\qquad  \delta_{6} &= \mathcal{E}_7-\mathcal{E}_8.\\
			\delta_{3} &= \mathcal{H}_2-\mathcal{E}_{56}, &\qquad
		\end{alignedat}
	\end{equation}
	\caption{The Surface Root Basis for the standard  d-$\dPain{2A_1^{(1)}/D_6^{(1)}}$ point configuration}
	\label{fig:KNY-surface-roots}
\end{figure}

Let us now describe the corresponding point configuration. Take $(q,p)$ to be the coordinates in the $\mathbb{C}\times \mathbb{C}$ affine chart
and let $Q = 1/q$ and $P = 1/p$ be the coordinates at infinity. Looking at the surface root basis, we see that there are two points $p_{1}$
and $p_{3}$ on the vertical line that we can, using the M\"obius group action on coordinates, take to be $q=\infty$ (or $Q=0$). 
There are also two degeneration cascades (infinitely close points) $p_{2}\to p_{1}$ and $p_{4}\to p_{3}$. There is also a longer degeneration 
cascade $p_{8}\to p_{7}\to p_{6} \to p_{5}$. Using M\"obius group action again, we can arrange the points to be $p_{1}(\infty,1)$, 
$p_{3}(\infty,0)$, $p_{5}(0,\infty)$, and the only remaining gauge action is the rescaling in the $q$-coordinate (that we later use to normalize the root variables). 
We then get the point configuration 
shown on Figure~\ref{fig:KNY-pt-conf}. Note that one reason for choosing
this point normalization is that they are located on the polar divisor of the standard symplectic form $\omega = dp\wedge dq$. 

\begin{figure}[ht]
	\begin{center}		
	\begin{tikzpicture}[>=stealth,basept/.style={circle, draw=red!100, fill=red!100, thick, inner sep=0pt,minimum size=1.2mm}]
	\begin{scope}[xshift=0cm,yshift=0cm]
	\draw [black, line width = 1pt] (-0.2,0) -- (3.2,0)	node [pos=0,left] {\small $H_{p}$} node [pos=1,right] {\small $p=0$};
	\draw [black, line width = 1pt] (-0.2,3) -- (3.2,3) node [pos=0,left] {\small $H_{p}$} node [pos=1,right] {\small $p=\infty$};
	\draw [black, line width = 1pt] (0,-0.2) -- (0,3.2) node [pos=0,below] {\small $H_{q}$} node [pos=1,xshift = -7pt, yshift=5pt] {\small $q=0$};
	\draw [black, line width = 1pt] (3,-0.2) -- (3,3.2) node [pos=0,below] {\small $H_{q}$} node [pos=1,xshift = 7pt, yshift=5pt] {\small $q=\infty$};
	\node (p1) at (3,0) [basept,label={[xshift = -7pt, yshift=-3pt] \small $p_{3}$}] {};
	\node (p2) at (3.5,0.5) [basept,label={[xshift = 10pt, yshift=-7pt] \small $p_{4}$}] {};
	\node (p3) at (3,1) [basept,label={[xshift = -7pt, yshift=-3pt] \small $p_{1}$}] {};
	\node (p4) at (3.5,1.5) [basept,label={[xshift = 10pt, yshift=-7pt] \small $p_{2}$}] {};
	\node (p5) at (0,3) [basept,label={[xshift = 7pt, yshift=-15pt] \small $p_{5}$}] {};
	\node (p6) at (0.7,3) [basept,label={[yshift=0pt] \small $p_{6}$}] {};
	\node (p7) at (1.2,3.5) [basept,label={[yshift=-3pt] \small $p_{7}$}] {};
	\node (p8) at (1.9,3.5) [basept,label={[yshift=-3pt] \small $p_{8}$}] {};
	\draw [red, line width = 0.8pt, ->] (p2) -- (p1);
	\draw [red, line width = 0.8pt, ->] (p4) -- (p3);
	\draw [red, line width = 0.8pt, ->] (p6) -- (p5);
	\draw [red, line width = 0.8pt, ->] (p7) -- (p6);
	\draw [red, line width = 0.8pt, ->] (p8) -- (p7);
	\end{scope}
	\draw [->] (7,1.5)--(5,1.5) node[pos=0.5, below] {$\operatorname{Bl}_{p_{1}\cdots p_{8}}$};
	\begin{scope}[xshift=8.5cm,yshift=0cm]
	\draw[blue, line width = 1pt] (-0.5,3)--(3.5,3) node [pos=1,right] {\small $H_p- E_{56}$};
	\draw[red, line width = 1pt] (0,-0.2) -- (-1,0.8) node [pos=0,below] {\small $H_{q}-E_{5}$};	
	\draw[blue, line width = 1pt] (3,0.3)--(3,3.3) node [pos=1,above] {\small $ H_q- E_{13}$};
	\draw[red, line width = 1pt] (-0.7,0) -- (2.7,0) node [pos=0,left] {\small $H_{p}-E_{3}$};	
	\draw[blue, line width = 1pt] (2.2,-0.2)--(3.2,0.8); \node[blue] at (2.2,-0.5) {\small $ E_3- E_4$};
	\draw[red, line width = 1pt] (3.2,-0.2) -- (2.2,0.8) node [pos=1,left] {\small $E_{4}$};	
	\draw[blue,  line width = 1pt] (2.5,1.5)--(4,1.5);\node[blue] at (4.8,1.5) {\small $E_1- E_2$};
	\draw[red, line width = 1pt ] (3.7,2)--(3.7,0.5); \node[red] at (3.7,0.2) {\small $E_2$};
	\draw[blue, line width = 1pt] (-0.8,0.3)--(-0.8,2.7) node [pos=1,above left] {\small $E_5- E_6$};
	\draw[blue, line width = 1pt] (0,3.2)--(-1,2.2) node [pos=0,above]  {\small $E_6- E_7$};
	\draw[blue, line width = 1pt] (-0.6,2.9)--(0.4,1.9) node [pos=1,below] {\small $E_7- E_8$};
	\draw[red, , line width = 1pt] (-0.5,1.8)--(0.5,2.8) node [pos=1,right] {\small $E_8$};
	\end{scope}
	\end{tikzpicture}
	\end{center}
	\caption{The model Sakai Surface for the d-P$(2A_1^{(1)}/D_6^{(1)})$  example} 
	\label{fig:KNY-pt-conf}
\end{figure}

The parameterization of this point configuration in terms of root variables $a_{0},\ldots,a_{3}$ normalized by $a_{0}+a_{1} = a_{2}+a_{3} = 1$
is given in \cite{KajNouYam:2017:GAPE},
     \begin{equation*}
     p_{12}:\left(\frac{1}{\varepsilon},1-a_1\varepsilon \right)_2,\quad p_{34}:\left(\frac{1}{\varepsilon},-a_2\varepsilon \right)_2,\quad 	p_{5678}:\left(\varepsilon,-\frac{t}{\varepsilon^2}+\frac{1-a_1-a_2}{\varepsilon}\right)_4,
     \end{equation*}
	 or, explicitly,
     \begin{equation}\label{p-points}
     \begin{split}
     p_1(Q=0,p=1)&\leftarrow p_2(u_1=Q=0,v_1=q(p-1)=-a_1),\\
     p_3(Q=0,p=0)&\leftarrow p_4(u_3=Q=0,v_3=qp=-a_2),\\
     p_5(q=0,P=0)&\leftarrow p_6(u_5=q=0,v_5=PQ=0)\leftarrow p_7\left(u_6=q=0,v_6=\frac{1}{q^{2} p}=-\frac{1}{t}\right)\\
     &\leftarrow p_8\left(u_7=q=0,v_7=\frac{q^{2} p + t}{q^{3} p t}=\frac{a_1+a_2-1}{t^2}\right).
     \end{split}
     \end{equation}
The parameter $t$ that appears in the $p_{5}$ cascade is related to the independent variable of the differential Painlev\'e equation $\Pain{III}$ for which
this surface family, after we remove the polar divisor, is the Okamoto space of initial conditions. 
	 
\subsection{The extended affine Weyl symmetry group} % (fold)
\label{sub:the_extended_affine_weyl_symmetry_group}

Consider now the symmetry root basis $R^{\perp} = \{\alpha_{j}\}$ where the symmetry roots $\alpha_{j}$ 
should satisfy the condition $\delta_{i}\bullet \alpha_{j} = 0$. We get a reducible affine Dynkin diagram 
of type $2A_{1}^{(1)}$ (here $\mathcal{E}_{i\cdots j} = \mathcal{E}_{i}+\cdots+\mathcal{E}_{j}$), as shown in Figure \ref{fig:standard-symmetry-roots}.
\begin{figure}[ht]
	\begin{equation}\label{eq:std-symm-rb}
	\begin{tikzpicture}[baseline=-0.6cm]		
		\node (a0) at (0,0) [circle, thick, draw=black!100, inner sep=0pt,minimum size=1.3ex] {};
		\node (a1) at (1,0) [circle, thick, draw=black!100, inner sep=0pt,minimum size=1.3ex] {};
		\draw[thick, double distance = .4ex] (a0) -- (a1);
		\node[below] at (a0.south) {$\alpha_{0}$};
		\node[below] at (a1.south) {$\alpha_{1}$};
		\node (a2) at (0,-0.8) [circle, thick, draw=black!100, inner sep=0pt,minimum size=1.3ex] {};
		\node (a3) at (1,-0.8) [circle, thick, draw=black!100, inner sep=0pt,minimum size=1.3ex] {};
		\draw[thick, double distance = .4ex] (a2) -- (a3);
		\node[below] at (a2.south) {$\alpha_{2}$};
		\node[below] at (a3.south) {$\alpha_{3}$};
	\end{tikzpicture}\qquad \qquad 
	\begin{aligned}
		\alpha_{0} &= 2\mathcal{H}_q+\mathcal{H}_p-\mathcal{E}_{345678},
		&\quad \alpha_{1}&= \mathcal{H}_p-\mathcal{E}_{1} - \mathcal{E}_{2},\\
		\alpha_{2} &= \mathcal{H}_p-\mathcal{E}_{3} - \mathcal{E}_{4},& \quad 
		\alpha_{3}&= 2\mathcal{H}_q+\mathcal{H}_p-\mathcal{E}_{125678}.	\\
		-\mathcal{K}_{\mathcal{X}} &= \delta = \alpha_{0} + \alpha_{1} = \alpha_{2} + \alpha_{3}.	
	\end{aligned}								
	\end{equation}
	\caption{The Symmetry Root Basis for the standard  d-$\dPain{2A_1^{(1)}/D_6^{(1)}}$ point configuration}
	\label{fig:standard-symmetry-roots}
\end{figure}

The symmetry sub-lattice is $Q = \operatorname{Span}_{\mathbb{Z}}\{R^{\perp}\} = \operatorname{Span}_{\mathbb{Z}}\{\alpha_{j}\}$, and the
root variables $a_{i}$ are defined using the \emph{Period Map} $\chi:Q\to \mathbb{C}$, $a_{i} = \chi(\alpha_{i})$, together with the normalization
$\chi(\delta) = a_{0} + a_{1} = a_{2} + a_{3} = 1$, see example of detailed computation of the Period Map in \cite{DzhTak:2018:SASGTDPE,DzhFilSto:2020:RCDOPWHWDPE}.

The affine Weyl group $W\left(2A_{1}^{(1)}\right)$ is defined in terms of generators $w_{i} = w_{\alpha_{i}}$ and relations that 
are encoded by the affine Dynkin diagram $2A_{1}^{(1)}$,
\begin{equation*}
	W\left(2A_{1}^{(1)}\right) = W\left(\raisebox{-18pt}{\begin{tikzpicture}[
			elt/.style={circle,draw=black!100,thick, inner sep=0pt,minimum size=1.3ex},scale=0.8]
		\path 	(0,0) 	node 	(a0) [elt, label={[xshift=0pt, yshift = -20 pt] $\alpha_{0}$} ] {}
		        (1,0) node 	(a1) [elt, label={[xshift=0pt, yshift = -20 pt] $\alpha_{1}$} ] {}
		        (0,-1) 	node  	(a2) [elt, label={[xshift=0pt, yshift = -20 pt] $\alpha_{2}$} ] {}
		        (1,-1) 	node  	(a3) [elt, label={[xshift=0pt, yshift = -20 pt] $\alpha_{3}$} ] {};
		\draw [black,thick, double distance = .4ex] (a0) -- (a1)  (a2) -- (a3); 
	\end{tikzpicture}} \right)
	=
	\left\langle w_{0},\dots, w_{3}\ \left|\ 
	\begin{alignedat}{2}
    w_{i}^{2} = e,\quad  w_{i}\circ w_{j} &= w_{j}\circ w_{i}& &\text{ when 
   				\raisebox{-0.1in}{\begin{tikzpicture}[
   							elt/.style={circle,draw=black!100,thick, inner sep=0pt,minimum size=1.5mm}]
   						\path   ( 0,0) 	node  	(ai) [elt] {}
   						        ( 0.5,0) 	node  	(aj) [elt] {};
   						\draw [black] (ai)  (aj);
   							\node at ($(ai.south) + (0,-0.2)$) 	{$\alpha_{i}$};
   							\node at ($(aj.south) + (0,-0.2)$)  {$\alpha_{j}$};
   							\end{tikzpicture}}}
	\end{alignedat}\right.\right\rangle. 
\end{equation*} 
This group is realized via actions on $\operatorname{Pic}(\mathcal{X})$ given by reflections in the 
roots $\alpha_{i}$, 
\begin{equation}\label{eq:root-refl}
	w_{i}(\mathcal{C}) = w_{\alpha_{i}}(\mathcal{C}) = \mathcal{C} - 2 
	\frac{\mathcal{C}\bullet \alpha_{i}}{\alpha_{i}\bullet \alpha_{i}}\alpha_{i}
	= \mathcal{C} + \left(\mathcal{C}\bullet \alpha_{i}\right) \alpha_{i},\qquad \mathcal{C}\in \operatorname{Pic(\mathcal{X})}.
\end{equation}
Next we need to extend this group by the automorphisms of the Dynkin diagram (that corresponds to some re-labeling of the 
symmetry/surface roots) $\operatorname{Aut}(2A_1^{(1)})\simeq \operatorname{Aut}(D_{6}^{(1)})\simeq \mathbb{D}_{4}$, where
$\mathbb{D}_{4}$ is the usual dihedral group of the symmetries of a square. Generators of $\operatorname{Aut}(2A_1^{(1)})$ can be 
realized as compositions of reflections in some other roots in $\operatorname{Pic}(\mathcal{X})$. We take as generators
\begin{equation*}
	\sigma_{1} = w_{\mathcal{E}_{1} - \mathcal{E}_{3}}\circ w_{\mathcal{E}_{2} - \mathcal{E}_{4}},\qquad
	\sigma_{2} = w_{\mathcal{E}_{1} - \mathcal{E}_{7}}\circ w_{\mathcal{E}_{2} - \mathcal{E}_{8}}\circ
	w_{\mathcal{E}_{3} - \mathcal{E}_{5}}\circ w_{\mathcal{E}_{4} - \mathcal{E}_{6}}\circ 
	w_{\mathcal{H}_{q} - \mathcal{E}_{5} - \mathcal{E}_{6}}\circ w_{\mathcal{H}_{q} - \mathcal{E}_{3} - \mathcal{E}_{4}}
\end{equation*}
that permute the surface and the symmetry roots as follows (here we use the standard cycle notation for permutations):
$\sigma_{1} = (\alpha_{0}\alpha_{3})(\alpha_{1}\alpha_{2}) = (\delta_{0}\delta_{1})$, 
$\sigma_{2} = (\alpha_{0} \alpha_{1}) = (\delta_{0} \delta_{6})(\delta_{1}\delta_{5})(\delta_{2}\delta_{4})$.
The resulting group $\widetilde{W}(2A_1^{(1)})=\operatorname{Aut}(2A_1^{(1)})\ltimes W(2A_1^{(1)})$ is called an  
extended affine Weyl symmetry group and its action on $\operatorname{Pic}(\mathcal{X})$ can 
be further extended to an action on point configurations by elementary birational maps (which lifts to 
isomorphisms $w_{i}: \mathcal{X}_{\mathbf{b}}\to \mathcal{X}_{\overline{\mathbf{b}}}$ on the family of Sakai's surfaces),
this is known as a \emph{birational representation}. We describe it in the following Lemma.

\begin{lemma}\label{lem:bir-rep2A1}
	The action of the generators of the extended affine Weyl group $\widetilde{W}(2A_1^{(1)})$ on some initial point configuration
	\begin{equation*}
	    \left(\begin{matrix}
	    a_0 & a_1\\
	    a_2 & a_3
	    \end{matrix};\ t;\ q,p\right)
	\end{equation*}	
	described using the root variables in the affine $(q,p)$ chart is given by the following birational maps:
	\begin{align*}
	    w_0&:
	    \left(\begin{matrix}
	    -a_0 &2a_0+a_1\\
	    a_2 & a_3
	    \end{matrix};\ t;\ \left(\frac{1}{q} - \frac{a_{0}}{q^{2} p + a_{2} q + t}\right)^{-1}, \right. \\
		&\hskip2in \left.\left(p - \frac{a_{0}(qp + a_{2})}{q^{2} p + a_{2} q + t}\right) \left(1 - \frac{a_{0}q}{q^{2} p + a_{2} q + t}\right) \right),\\		
		w_1&:
		\left(\begin{matrix}
		a_0+2a_1 & -a_1\\  a_2 & a_3
		\end{matrix};\  t;\ q+\frac{a_1}{p-1},p\right),\\
		w_2&: 
		\left(\begin{matrix}
		a_0  & a_1\\
		-a_2 & 2a_2+a_3
		\end{matrix};\  t;\ q+\frac{a_2}{p},p\right),\\		
		w_3&:
		\left(\begin{matrix}
		a_0 &a_1\\
		a_2+2a_3 &-a_3
		\end{matrix};\ t;\ \frac{q (\Upsilon + a_{3} q)}{\Upsilon},
		1+\frac{\Upsilon\left((p-1)\Upsilon - a_{1} a_{3}\right)}{(\Upsilon + a_{3}q)^{2}}\right),
		\intertext{ where $\Upsilon = q^{2} (p-1) + (a_{1} - a_{3}) q  + t$,}
		\sigma_{1}&:
		\left(\begin{matrix}
		a_3 & a_2\\
		a_1 & a_0
		\end{matrix};\ -t;\ -q,1-p\right),\\		
		\sigma_{2}&:
		\left(\begin{matrix}
		a_1 & a_0\\
		a_2 & a_3
		\end{matrix};\quad t;\quad \frac{t}{q},-\frac{q(q p+a_{2})}{t}\right).
	\end{align*}
\end{lemma}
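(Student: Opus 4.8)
The plan is to establish the lemma one generator at a time, using the standard reconstruction of a birational self-map of $\mathbb{P}^{1}\times\mathbb{P}^{1}$ (modified at the eight base points) from its induced action on the Picard lattice, and then to observe that the defining relations of $\widetilde{W}(2A_{1}^{(1)})$ are automatic. First I would compute, for each reflection $w_{i}=w_{\alpha_{i}}$, the induced linear map $(w_{i})_{*}$ on $\operatorname{Pic}(\mathcal{X})$ straight from formula \eqref{eq:root-refl}, recording the images of the coordinate classes $\mathcal{H}_{q},\mathcal{H}_{p}$ and of the exceptional classes $\mathcal{E}_{1},\dots,\mathcal{E}_{8}$; for the diagram automorphisms $\sigma_{1},\sigma_{2}$ I would do the same using the explicit products of reflections in $(-1)$- and $(-2)$-classes fixed in Section~\ref{sub:the_extended_affine_weyl_symmetry_group}. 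This is pure rank-$10$ lattice arithmetic, and its restriction to the symmetry sublattice $Q=\operatorname{Span}_{\mathbb{Z}}\{\alpha_{j}\}$ already produces the claimed transformation of the root-variable array: for $w_{i}$ it is the reflection of $(a_{0},a_{1},a_{2},a_{3})$ in $\alpha_{i}$ (e.g.\ $(a_{0},a_{1})\mapsto(-a_{0},2a_{0}+a_{1})$ for $w_{0}$), the automorphism $\sigma_{1}=(\alpha_{0}\alpha_{3})(\alpha_{1}\alpha_{2})$ gives $\bigl(\begin{smallmatrix}a_{0}&a_{1}\\a_{2}&a_{3}\end{smallmatrix}\bigr)\mapsto\bigl(\begin{smallmatrix}a_{3}&a_{2}\\a_{1}&a_{0}\end{smallmatrix}\bigr)$ together with $t\mapsto-t$, and $\sigma_{2}=(\alpha_{0}\alpha_{1})$ swaps $a_{0}\leftrightarrow a_{1}$ with $t$ fixed.

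Next, from $g_{*}(\mathcal{H}_{q})$ and $g_{*}(\mathcal{H}_{p})$ one reads off the bidegrees of the two components of the birational map $g$ and the list of (possibly infinitely near) base points through which each component is forced to pass; together with the explicit base-point locations from \eqref{p-points} and the coordinate normalisation $p_{1}(\infty,1)$, $p_{3}(\infty,0)$, $p_{5}(0,\infty)$ (the remaining rescaling in $q$ being already tied to the chosen root-variable normalisation), this pins $g$ down up to finitely many undetermined constants. I would then write the generic map of the prescribed bidegree, impose the incidence conditions at each base point — for the cascade $p_{5}\leftarrow p_{6}\leftarrow p_{7}\leftarrow p_{8}$ this means passing to the successive blow-up charts $(u_{5},v_{5}),(u_{6},v_{6}),(u_{7},v_{7})$ — and solve the resulting linear system, which recovers exactly the formulas in the statement; this ``divisor method'' is carried out in detail in \cite{CarDzhTak:2017:FDOI2MADPE,DzhTak:2018:OSAOSGTODPE}. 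A convenient consistency check is that the $q\mapsto t/q$ piece of $\sigma_{2}$ coincides with the first component of the coordinate change in Theorem~\ref{thm:Laguerre-coord-change}.

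The genuinely routine cases are $w_{1}$ and $w_{2}$, where $g_{*}$ fixes $\mathcal{H}_{p}$ and sends $\mathcal{H}_{q}$ to $\mathcal{H}_{q}+\mathcal{H}_{p}-\mathcal{E}_{12}$, resp.\ $\mathcal{H}_{q}+\mathcal{H}_{p}-\mathcal{E}_{34}$, yielding the elementary shifts $q\mapsto q+a_{1}/(p-1)$ and $q\mapsto q+a_{2}/p$, together with the two involutions $\sigma_{1},\sigma_{2}$. The main obstacle will be the pair $w_{0},w_{3}$ of reflections in the long roots $\alpha_{0},\alpha_{3}$: here $g_{*}(\mathcal{H}_{q})$ has bidegree $(3,1)$ in $(q,p)$, the map genuinely sweeps through the whole four-point cascade at $p_{5},\dots,p_{8}$, and the incidence conditions at the infinitely near points $p_{7},p_{8}$ carry the dependence on $t$, so the bookkeeping of blow-up charts and of the M\"obius constants — including the correct signs of $a_{0}$ (resp.\ $a_{3}$) and of $t$ — is where essentially all the computational effort lies.

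Finally, once the six generators are confirmed, the relations defining $\widetilde{W}(2A_{1}^{(1)})$ — the involutivity $w_{i}^{2}=e$, the commutations read off the diagram, and the relations realising $\operatorname{Aut}(2A_{1}^{(1)})\simeq\mathbb{D}_{4}$ — already hold at the level of $\operatorname{Pic}(\mathcal{X})$ and lift to the birational representation by Sakai's construction \cite{Sak:2001:RSAWARSGPE}, so no further verification is needed; alternatively, they can be checked directly by composing the explicit maps, which also serves as an independent test of the formulas obtained above.
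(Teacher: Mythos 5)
Your proposal is correct and follows essentially the same route as the paper, which declares the computation standard and refers to exactly the divisor-class reconstruction method you describe (compute $(w_i)_*$ and $(\sigma_i)_*$ on $\operatorname{Pic}(\mathcal{X})$ from \eqref{eq:root-refl}, read off bidegrees and base-point passage conditions from the images of $\mathcal{H}_q,\mathcal{H}_p$, and solve for the remaining constants as in \cite{CarDzhTak:2017:FDOI2MADPE,DzhTak:2018:OSAOSGTODPE,DzhFilSto:2020:RCDOPHWDPE}). The only addition worth making is the paper's observation that $w_{0}=\sigma_{2}w_{1}\sigma_{2}$ and $w_{3}=\sigma_{1}w_{0}\sigma_{1}=\sigma_{1}\sigma_{2}w_{1}\sigma_{2}\sigma_{1}$: the two complicated reflections, which you correctly identify as the main computational burden (reconstruction of a map through the whole $p_{5},\dots,p_{8}$ cascade in the blow-up charts), can instead be obtained by composing the already verified simple maps $w_{1}$, $\sigma_{1}$, $\sigma_{2}$, which also settles the signs of $a_{0}$, $a_{3}$ and $t$ without any extra chart bookkeeping.
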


The proof is standard, see \cite{DzhFilSto:2020:RCDOPWHWDPE,DzhTak:2018:SASGTDPE} for similar computations explained in detail. 
Note that although expressions for $w_{0}$ and $w_{3}$ look quite complicated, they can be represented in terms of 
simpler generators as  $w_{0} = \sigma_{2} w_{1} \sigma_{2}$, $w_{3}  = \sigma_{1}  w_{0} \sigma_{1} = \sigma_{1} \sigma_{2} w_{1} \sigma_{2} \sigma_{1}$.

% subsection the_extended_affine_weyl_symmetry_group (end)

\subsection{Discrete Painlev\'e equations on the $D_{6}^{(1)}$ surface} % (fold)
\label{sub:discrete_painlev_e_equations_on_the_d__6_1_surface}

There are two natural simple examples.

\subsubsection{The $\mathbf{[1 \overline{1} \overline{1} 1]}$ Discrete Painlev\'e Equation} % (fold)
\label{ssub:dP-KNY}

One standard reference example of a discrete Painlev\'e equation on the $D_{6}^{(1)}$ surface is given in 
\cite{KajNouYam:2017:GAPE}\footnote{In \cite[Equations (8.29--8.30)]{KajNouYam:2017:GAPE} 
$a_{2}$ should in fact be $a_{0}$ and an additional normalization $a_{2}+a_{3}=1$ is missing.} as
\begin{equation}\label{eq:dP-KNY}
\overline{q}+q=-\frac{a_2}{p}-\frac{a_1}{p-1},\qquad p+ \underline{p} = 1+\frac{1-a_{2}-a_{1}}{q}-\frac{t}{q^2},
\end{equation}
with the root variables evolution and normalization as follows
\begin{equation}\label{eq:dP-KNY-evola}
\overline{a}_{0}=a_{0}-1,\quad\overline{a}_{1}=a_{1}+1,\quad \overline{a}_{2}=a_{2}+1,\quad \overline{a}_{3}=a_{3}-1,\qquad a_0+a_1=a_2+a_3=1.
\end{equation}
From the evolution of the root variables \eqref{eq:dP-KNY-evola} we can immediately see that the corresponding translation on the root lattice is
\begin{equation}\label{eq:dP-KNY-trans}
\varphi_*:\upalpha=\langle\alpha_0,\alpha_1,\alpha_2,\alpha_3\rangle\mapsto
\varphi_*(\upalpha)=\upalpha+ \langle 1,-1,-1,1 \rangle\delta,
\end{equation}
and that is why we denote this equation by ${[1 \overline{1} \overline{1} 1]}$. 
The decomposition of the Painlev\'e map $\varphi$ can be written using the generators of $\widetilde{W}(2A_1^{(1)})$ as 
(see \cite{DzhTak:2018:SASGTDPE} on detailed explanation on how to obtain such decompositions)
\begin{equation}\label{eq:dP-KNY-decomp}
\varphi=\sigma_{1} \sigma_{2} \sigma_{1} \sigma_{2} w_{2} w_{1} = \varphi_{2}^{-1}\circ \varphi_{1},
\end{equation}
where partial mappings $\varphi_{1} = w_{2} w_{1}: (q,p)\to (- \overline{q},p)$ and 
$\varphi_{2}= \sigma_{2} \sigma_{1} \sigma_{2} \sigma_{1}: (q,p)\to (-q,\underline{p})$ correspond, up to some gauge transformation, 
to half maps given in \eqref{eq:dP-KNY} (see \cite[Section2.9]{DzhFilSto:2020:RCDOPWHWDPE} for detailed explanations about such gauge ambiguities).
The full map then is 
\begin{equation}\label{eq:dP-KNY-map}
\varphi\left(\begin{matrix}
a_0 & a_1\\
a_2 & a_3
\end{matrix};\ t;\ q,p\right)=\left(\begin{matrix}
a_0-1 & a_1+1\\
a_2+1 & a_3-1
\end{matrix};\ t; \ 
\begin{aligned}
	\overline{q} &=-q-\frac{a_2}{p}-\frac{a_1}{p-1}, \\
 \overline{p} &= 1-p + \frac{1-\overline{a}_{1}-\overline{a_{2}}}{\overline{q}}-\frac{t}{\overline{q}^{2}}
\end{aligned} \right).
\end{equation}
Note that the evolution \eqref{eq:dP-KNY-map} induces the evolution $\varphi_{*}: \operatorname{Pic}(\mathcal{X}) \to \operatorname{Pic}(\overline{\mathcal{X}})$ given by 
	\begin{equation}
		\begin{aligned}
			\mathcal{H}_{q}&\mapsto 5 \overline{\mathcal{H}}_{q} + 2 \overline{\mathcal{H}}_{p} - \overline{\mathcal{E}}_{1234}
			- 2\overline{\mathcal{E}}_{5678}, &\qquad 
			\mathcal{E}_{4}&\mapsto 2\overline{\mathcal{H}}_{q} +  \overline{\mathcal{H}}_{p} - \overline{\mathcal{E}}_{15678},\\
			\mathcal{H}_{p}&\mapsto 2 \overline{\mathcal{H}}_{q} + \overline{\mathcal{H}}_{p} - \overline{\mathcal{E}}_{5678}, &\qquad 
			\mathcal{E}_{5}&\mapsto \overline{\mathcal{H}}_{q}  - \overline{\mathcal{E}}_{8},\\
			\mathcal{E}_{1}&\mapsto 2\overline{\mathcal{H}}_{q} +  \overline{\mathcal{H}}_{p} - \overline{\mathcal{E}}_{45678}, &\qquad 
			\mathcal{E}_{6}&\mapsto \overline{\mathcal{H}}_{q}  - \overline{\mathcal{E}}_{7},\\
			\mathcal{E}_{2}&\mapsto 2\overline{\mathcal{H}}_{q} +  \overline{\mathcal{H}}_{p} - \overline{\mathcal{E}}_{35678}, &\qquad 
			\mathcal{E}_{7}&\mapsto \overline{\mathcal{H}}_{q}  - \overline{\mathcal{E}}_{6},\\
			\mathcal{E}_{3}&\mapsto 2\overline{\mathcal{H}}_{q} +  \overline{\mathcal{H}}_{p} - \overline{\mathcal{E}}_{25678}, &\qquad 
			\mathcal{E}_{8}&\mapsto \overline{\mathcal{H}}_{q}  - \overline{\mathcal{E}}_{5},\\
		\end{aligned}
	\end{equation}
	which gives the expected translation \eqref{eq:dP-KNY-trans} on the symmetry root lattice.

% subsubsection the (ssub:dP-KNY)

\subsubsection{The $\mathbf{[00 \overline{1}1]}$ Discrete Painlev\'e Equation} % (fold)
\label{ssub:dP-Sakai}

A different version of a simple discrete Painlev\'e equation on the $D_{6}^{(1)}$ surface was given in \cite{Sak:2001:RSAWARSGPE}, where this equation 
was called \emph{alt.~d-$\Pain{II}$} equation\footnote{Some minor typos were later corrected in \cite[Remark 2.2]{Sak:2007:PDPETLF}.}. A slightly different
version of the equation was given in \cite[Equations (2.35--2.36)]{Sak:2007:PDPETLF}, this is the version we consider. 
 Using notation from \cite{Sak:2007:PDPETLF}, this equation is given 
as a birational map, written in affine coordinates $(f,g)$ by
\begin{equation}\label{eq:dP-Sakai-map}
\psi\left(\begin{matrix}
a_0 & a_1\\
b_0 & b_1
\end{matrix};\ s;\ f,g\right)=\left(\begin{matrix}
a_0 & a_1\\
b_{0}+1 & b_{1}-1
\end{matrix};\ s; \overline{f}=b_{0} - a_{0} - f - g - \frac{s}{g},\overline{g} = \frac{s}{g} - \frac{b_{0}s}{g \overline{f}}\right),	
\end{equation}
where parameters $a_{i}$ and $b_{i}$ are again the root variables normalized by $a_{0} + a_{1} = b_{0} + b_{1} = 1$. Note that $b_{0}$ and
$b_{1}$ in \cite{Sak:2007:PDPETLF} are the same root variables as $a_{2}$ and $a_{3}$ in \cite{KajNouYam:2017:GAPE}, 
$b_{0} = \chi(\alpha_{2}) = a_{2}$ and $b_{1} = \chi(\alpha_{3}) = a_{3}$, but the base point configuration for this mapping is very 
different from the one shown on Figure~\ref{fig:KNY-pt-conf}. The corresponding translation on the symmetry 
root lattice is
\begin{equation}\label{eq:dP-Sakai-trans}
\psi_*:\upalpha=\langle\alpha_0,\alpha_1,\alpha_2,\alpha_3\rangle\mapsto
\psi_*(\upalpha)=\upalpha+ \langle 0,0,-1,1 \rangle\delta,
\end{equation}
and that is why we denote this equation by ${[00\overline{1}1]}$. To see this we need to look at the geometry of this mapping, 
which is quite different from our model surface on Figure~\ref{fig:KNY-pt-conf}. The base points are 
 \begin{equation}\label{pi-points}
	 \begin{aligned}
		    &\pi_{1}(f=b_{0}-1,g=0),\\
		    &\pi_{2}(F=0,g=0)\leftarrow \pi_{3}(u_2=F=0,v_2=fg=-s)\\
			&\phantom{\pi_{2}(F=0,p=0)}\leftarrow \pi_{4}(u_{3}=F=0,v_{3}=f(fg+s)=a_{0}s),\\
		    &\pi_{5}(f=0,G=0),\\
		    &\pi_{6}(F=0,G=0)\leftarrow \pi_{7}(u_{6}=F=0,v_{6}=fG=-1)\\
			&\phantom{\pi_{6}(F=0,G=0)}\leftarrow \pi_{8}(u_{7}=F=0,v_{7}=f(1+fG)=a_{0} - b_{0}),
	 \end{aligned}
 \end{equation}
(where as usual $F = 1/f$, $G = 1/g$) and their configuration and the resulting surface are shown on Figure~\ref{fig:Sakai-pt-conf}.
\begin{figure}[ht]
	\begin{center}		
	\begin{tikzpicture}[>=stealth,basept/.style={circle, draw=red!100, fill=red!100, thick, inner sep=0pt,minimum size=1.2mm}]
	\begin{scope}[xshift=0cm,yshift=0cm]
	\draw [black, line width = 1pt] (-0.2,0) -- (3.2,0)	node [pos=0,left] {\small $H_{g}$} node [pos=1,right] {\small $g=0$};
	\draw [black, line width = 1pt] (-0.2,3) -- (3.2,3) node [pos=0,left] {\small $H_{g}$} node [pos=1,right] {\small $g=\infty$};
	\draw [black, line width = 1pt] (0,-0.2) -- (0,3.2) node [pos=0,below] {\small $H_{f}$} node [pos=1,above] {\small $f=0$};
	\draw [black, line width = 1pt] (3,-0.2) -- (3,3.2) node [pos=0,below] {\small $H_{f}$} node [pos=1,above] {\small $f=\infty$};
	\node (p1) at (1,0) [basept,label={[xshift = 0pt, yshift=-3pt] \small $\pi_{1}$}] {};
	\node (p2) at (3,0) [basept,label={[xshift = -7pt, yshift=-3pt] \small $\pi_{2}$}] {};
	\node (p3) at (3.5,0.5) [basept,label={[xshift = 0pt, yshift=-3pt] \small $\pi_{3}$}] {};
	\node (p4) at (4.2,0.5) [basept,label={[xshift = 0pt, yshift=-3pt] \small $\pi_{4}$}] {};
	\node (p5) at (0,3) [basept,label={[xshift = 7pt, yshift=-15pt] \small $\pi_{5}$}] {};
	\node (p6) at (3,3) [basept,label={[xshift = -7pt, yshift=-15pt] \small $\pi_{6}$}] {};
	\node (p7) at (3.5,2.5) [basept,label={[xshift = 0pt, yshift=-15pt] \small $\pi_{7}$}] {};
	\node (p8) at (4.2,2.5) [basept,label={[xshift = 0pt, yshift=-15pt] \small $\pi_{8}$}] {};
	\draw [red, line width = 0.8pt, ->] (p3) -- (p2);
	\draw [red, line width = 0.8pt, ->] (p4) -- (p3);
	\draw [red, line width = 0.8pt, ->] (p7) -- (p6);
	\draw [red, line width = 0.8pt, ->] (p8) -- (p7);
	\end{scope}
	\draw [->] (6.5,1.5)--(4.55,1.5) node[pos=0.5, below] {$\operatorname{Bl}_{\pi_{1}\cdots \pi_{8}}$};
	\begin{scope}[xshift=8.5cm,yshift=0cm]
	\draw [blue, line width = 1pt] (-0.2,0) -- (3.7,0)	node [pos=0,left] {\small $H_{g}-K_{12}$} {};
	\draw [blue, line width = 1pt] (0.3,3) -- (3.7,3)	node [pos=0,left] {\small $H_{g}-K_{56}$} {};
	\draw [red, line width = 1pt] (0,-0.2) -- (0,2.7) node [pos=0.5,left] {\small $H_{f}-K_{5}$};
	\draw[red, line width = 1pt] (0.5,-0.2)--(1.5,0.8) node [pos=0,below] {\small $K_{1}$};
	\draw[red, line width = 1pt] (1.2,0.3)--(1.2,3.2) node [pos=1,xshift = 10pt, yshift=7pt] {\small $H_{f} - K_{1}$};
	\draw[blue, line width = 1pt] (3.2,-0.2)--(4.2,0.8) node [pos=0,below] {\small $K_{2}- K_{3}$};
	\draw[blue, line width = 1pt] (3.9,0.1)--(2.9,1.1) node [pos=1,left] {\small $K_{3}- K_{4}$};
	\draw[red, line width = 1pt] (2.7,0.3)--(3.7,1.3) node [pos=0,left] {\small $K_{4}$};	
	\draw[red, line width = 1pt] (-0.2,2.2)--(0.8,3.2) node [pos=0,left] {\small $K_{5}$};
	\draw[blue, line width = 1pt] (4,0.3)--(4,2.7) node [pos=0.5,right] {\small $ H_{f}- K_{26}$};
	\draw[blue, line width = 1pt] (3.2,3.2)--(4.2,2.2) node [pos=0,above] {\small $K_{6}- K_{7}$};
	\draw[blue, line width = 1pt] (3.9,2.9)--(2.9,1.9) node [pos=1,left] {\small $K_{7}- K_{8}$};
	\draw[red, line width = 1pt] (2.7,2.7)--(3.7,1.7) node [pos=0,left] {\small $K_{8}$};	
	\end{scope}
	\end{tikzpicture}
	\end{center}
	\caption{The Sakai Surface for the alt.~d-$\Pain{II}$ equation} 
	\label{fig:Sakai-pt-conf}
\end{figure}	

% The identification between two surfaces is given by the following Lemma.
\begin{lemma}\label{lem:KNY-Sakai}
	On the level of the Picard lattice the basis identification between the model d-P$(2A_1^{(1)}/D_6^{(1)})$ surface 
	on Figure~\ref{fig:KNY-pt-conf} and the surface for the alt.~d-$\Pain{II}$ equation on Figure~\ref{fig:Sakai-pt-conf} is given 
	by 
	\begin{equation}
		\begin{aligned}
				\mathcal{H}_{f} &=2\mathcal{H}_{q} + \mathcal{H}_{p} - \mathcal{E}_{1567}, &\qquad \mathcal{H}_{q} &= \mathcal{H}_{g}, \\
				\mathcal{H}_{g} &=\mathcal{H}_{q}, &\qquad  \mathcal{H}_{p} &= \mathcal{H}_{f} + 2 \mathcal{H}_{g}  - \mathcal{K}_{2678},\\					
				\mathcal{K}_{1} &=\mathcal{E}_{2}, &\qquad  \mathcal{E}_{1} &= \mathcal{H}_{g}  - \mathcal{K}_{2},\\		
				\mathcal{K}_{2} &=\mathcal{H}_{q}-\mathcal{E}_{1}, &\qquad  \mathcal{E}_{2} &= \mathcal{K}_{1},\\		
				\mathcal{K}_{3} &=\mathcal{E}_{3}, &\qquad  \mathcal{E}_{3} &= \mathcal{K}_{3},\\		
				\mathcal{K}_{4} &=\mathcal{E}_{4}, &\qquad  \mathcal{E}_{4} &= \mathcal{K}_{4},\\		
				\mathcal{K}_{5} &=\mathcal{E}_{8}, &\qquad  \mathcal{E}_{5} &= \mathcal{H}_{g}  - \mathcal{K}_{8},\\		
				\mathcal{K}_{6} &=\mathcal{H}_{q}-\mathcal{E}_{7}, &\qquad  \mathcal{E}_{6} &= \mathcal{H}_{g}  - \mathcal{K}_{7},\\		
				\mathcal{K}_{7} &=\mathcal{H}_{q}-\mathcal{E}_{6}, &\qquad  \mathcal{E}_{7} &= \mathcal{H}_{g}  - \mathcal{K}_{6},\\		
				\mathcal{K}_{8} &=\mathcal{H}_{q}-\mathcal{E}_{5}, &\qquad  \mathcal{E}_{8} &= \mathcal{K}_{5}.
		\end{aligned}
	\end{equation}
	
	This gives the following identification between the surface and the symmetry root bases:
	\begin{equation}
		\begin{aligned}
			\delta_{0} &= \mathcal{E}_{1} - \mathcal{E}_{2} &&= \mathcal{H}_{g} - \mathcal{K}_{12},\\
			\delta_{1} &= \mathcal{E}_{3} - \mathcal{E}_{4} &&= \mathcal{K}_{3} - \mathcal{K}_{4},\\
			\delta_{2} &= \mathcal{H}_{q} - \mathcal{E}_{13}&&= \mathcal{K}_{2} - \mathcal{K}_{3},\\
			\delta_{3} &= \mathcal{H}_{p} - \mathcal{E}_{56} &&=  \mathcal{H}_{f}-\mathcal{K}_{26},\\
			\delta_{4} &=\mathcal{E}_{6} - \mathcal{E}_{7} &&= \mathcal{K}_{6} - \mathcal{K}_{7},\\
			\delta_{5} &=\mathcal{E}_{5} - \mathcal{E}_{6} &&= \mathcal{K}_{7} - \mathcal{K}_{8},\\
			\delta_{6} &=\mathcal{E}_{7} - \mathcal{E}_{8} &&= \mathcal{H}_{g} - \mathcal{K}_{56},
		\end{aligned}\qquad 
		\begin{aligned}
		\alpha_{0} &= 2\mathcal{H}_{q} + \mathcal{H}_{p} - \mathcal{E}_{345678} &&= \mathcal{H}_{f} + \mathcal{H}_{g} - \mathcal{K}_{2345},\\
		\alpha_{1} &= \mathcal{H}_{p} - \mathcal{E}_{12} &&= \mathcal{H}_{f} + \mathcal{H}_{g} - \mathcal{K}_{1678},\\
		\alpha_{2} &= \mathcal{H}_{p} - \mathcal{E}_{34} &&= \mathcal{H}_{f} + 2\mathcal{H}_{g} - \mathcal{K}_{234678},\\
		\alpha_{3} &= 2\mathcal{H}_{q} + \mathcal{H}_{p} - \mathcal{E}_{125678} &&= \mathcal{H}_{f}  - \mathcal{K}_{15}.\\
		\end{aligned}
	\end{equation}
	
	The map \eqref{eq:dP-Sakai-map} induces the evolution $\psi_{*}: \operatorname{Pic}(\mathcal{X}) \to \operatorname{Pic}(\overline{\mathcal{X}})$ given by 
	\begin{equation}
		\begin{aligned}
			\mathcal{H}_{f}&\mapsto 3 \overline{\mathcal{H}}_{f} + 2 \overline{\mathcal{H}}_{g} - 2 \overline{\mathcal{K}}_{1}
			- \overline{\mathcal{K}}_{23} - 2\overline{\mathcal{K}}_{5} - \overline{\mathcal{K}}_{67}, &\qquad 
			\mathcal{K}_{4}&\mapsto \overline{\mathcal{H}}_{f} - \overline{\mathcal{K}}_{1},\\
			\mathcal{H}_{g}&\mapsto \overline{\mathcal{H}}_{f} + \overline{\mathcal{H}}_{g} - \overline{\mathcal{K}}_{15}, &\qquad 
			\mathcal{K}_{5}&\mapsto \overline{\mathcal{K}}_{4},\\
			\mathcal{K}_{1}&\mapsto \overline{\mathcal{K}}_{8}, &\qquad 
			\mathcal{K}_{6}&\mapsto \overline{\mathcal{H}}_{f} +  \overline{\mathcal{H}}_{g} -  \overline{\mathcal{K}}_{135},\\
			\mathcal{K}_{2}&\mapsto \overline{\mathcal{H}}_{f} +  \overline{\mathcal{H}}_{g} -  \overline{\mathcal{K}}_{157}, &\qquad 
			\mathcal{K}_{7}&\mapsto \overline{\mathcal{H}}_{f} +  \overline{\mathcal{H}}_{g} -  \overline{\mathcal{K}}_{125},\\
			\mathcal{K}_{3}&\mapsto \overline{\mathcal{H}}_{f} +  \overline{\mathcal{H}}_{g} -  \overline{\mathcal{K}}_{156}, &\qquad 
			\mathcal{K}_{8}&\mapsto \overline{\mathcal{H}}_{f} - \overline{\mathcal{K}}_{5},\\
		\end{aligned}
	\end{equation}
	which gives the expected translation \eqref{eq:dP-Sakai-trans} on the symmetry root lattice.
	
	Finally, the change of variables and parameter correspondence between these two different geometric realizations
	of the $D_{6}^{(1)}$ surface is given by 
	    \begin{equation}\label{eq:KNYtoSakai-coords}
   	 \left\{\begin{aligned}
   	 	f(q,p)&= q(p-1) - \frac{s}{q} + b_{0} - a_{0},\\
   		g(q,p)&= \frac{s}{q},\\
		s&=-t,\quad b_{0}=a_{2},\quad b_{1}=a_{3},\\  
   	 \end{aligned}\right.
    \text{ and conversely, }
    	\left\{\begin{aligned}
   	 	q(f,g)&= -\frac{t}{g},\\
   		p(f,g)&=1 - \frac{g}{t}(f + g + a_{0} - a_{2}),\\
		t &= -s,\quad a_{2} = b_{0}, \quad a_{3} = b_{1}.		
    	\end{aligned}\right.
	    \end{equation}
\end{lemma}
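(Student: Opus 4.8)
The plan is to apply the same algorithmic procedure as in Section~\ref{sec:Laguerre-weight}, this time to the abstract birational map \eqref{eq:dP-Sakai-map} itself. First I would resolve the indeterminacy of $\psi$ by blowing up $\mathbb{P}^1\times\mathbb{P}^1$ at the eight base points listed in \eqref{pi-points}, producing the surface of Figure~\ref{fig:Sakai-pt-conf}, and read off from it the surface root basis $\{\delta_i\}$ in terms of $\mathcal{H}_f,\mathcal{H}_g$ and the exceptional classes $\mathcal{K}_i$. Since this root configuration, like the standard one attached to Figure~\ref{fig:KNY-pt-conf}, is of type $D_6^{(1)}$, the two decompositions of the anti-canonical class, $-\mathcal{K}_{\mathcal X}=\delta_0+\delta_1+2\delta_2+2\delta_3+2\delta_4+\delta_5+\delta_6$, can be matched node by node, which produces a candidate lattice isomorphism between $\operatorname{Pic}$ of the two surfaces. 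One then verifies that the resulting change of basis respects the intersection form \eqref{eq:int-form} and that the two columns of the displayed formula are mutually inverse --- routine, but it has to be checked.

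The delicate point, exactly as in Section~\ref{sub:Laguerre-Pic-match}, is that the surface root basis pins down the identification only up to the Dynkin-diagram automorphism group $\operatorname{Aut}(D_6^{(1)})\simeq\mathbb{D}_4$, and one has to select the representative for which the \emph{dynamics}, and not merely the geometry, agrees. To do that I would compute the push-forward $\psi_*$ on the Picard lattice directly from \eqref{eq:dP-Sakai-map}, tracking the images of the coordinate lines $H_f,H_g$ and of the exceptional divisors under $\psi$, deduce the induced action on the symmetry roots $\alpha_i$, and compare it with the translation \eqref{eq:dP-KNY-trans} realized by the $(q,p)$-model. These differ --- $\psi$ lives in the class $[00\overline11]$ while the $(q,p)$-model realizes $[1\overline11\overline1]$ --- so the naive root-basis matching must be composed with the appropriate automorphism built from the generators $\sigma_1,\sigma_2$ of Section~\ref{sub:the_extended_affine_weyl_symmetry_group}; this is precisely what introduces the asymmetry one sees in the stated change of basis. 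With the correct basis in hand, the induced $\psi_*$ on $\operatorname{Pic}(\mathcal X_{f,g})$ and the induced root-variable evolution $\overline{b}_0=b_0+1,\ \overline{b}_1=b_1-1$ give \eqref{eq:dP-Sakai-trans}; matching the $\alpha_i$ in the two bases (equivalently, the period-map normalizations) forces $b_0=a_2,\ b_1=a_3$, and comparing the parameter appearing in the $p_5$-cascade of \eqref{p-points} with the one in the $\pi_2$-cascade of \eqref{pi-points} forces $s=-t$.

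Finally I would reconstruct the explicit change of coordinates from the lattice correspondence, following the divisor-chasing method of \cite{CarDzhTak:2017:FDOI2MADPE,DzhTak:2018:OSAOSGTODPE}. From $\mathcal{H}_q=\mathcal{H}_g$ one reads that, up to a M\"obius transformation, $q$ depends on $g$ alone; from $\mathcal{H}_p=\mathcal{H}_f+2\mathcal{H}_g-\mathcal{K}_{2678}$ one reads that $p$ is, up to M\"obius, a generic member of the pencil $|H_f+2H_g-K_{2678}|$ of $(1,2)$-curves through $\pi_2,\pi_6,\pi_7,\pi_8$, hence a polynomial of degree one in $f$ and two in $g$. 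The remaining M\"obius constants are then fixed one equation at a time by imposing the prescribed correspondences of exceptional curves from the statement --- for instance $\mathcal{K}_1=\mathcal{E}_2$ pins the image of $\pi_1$, $\mathcal{E}_1=\mathcal{H}_g-\mathcal{K}_2$ pins the image of the line $q=0$, and so on --- each such constraint being linear in the unknowns. Assembling them yields \eqref{eq:KNYtoSakai-coords}, and the inverse transformation follows either by the same argument run in the opposite direction or by direct algebraic inversion. The genuinely hard part is organizational rather than conceptual: choosing the automorphism-adjusted basis in the middle step and keeping the base-point and divisor bookkeeping consistent through the coordinate reconstruction; every individual computation along the way is mechanical.
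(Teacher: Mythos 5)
Your overall strategy is precisely the ``standard'' computation that the paper alludes to when it omits this proof, and it mirrors the worked example of Section~\ref{sec:Laguerre-weight}: resolve the base points \eqref{pi-points} to get Figure~\ref{fig:Sakai-pt-conf}, read off the surface root basis, match the two $D_{6}^{(1)}$ decompositions of the anti-canonical class by a lattice isomorphism compatible with the intersection form, compute $\psi_{*}$ directly from \eqref{eq:dP-Sakai-map}, and reconstruct the coordinate change by divisor chasing as in \cite{CarDzhTak:2017:FDOI2MADPE,DzhTak:2018:OSAOSGTODPE}. The first, third and fourth paragraphs of your plan would go through essentially as written, including the reading of $p$ as a coordinate on the pencil $|H_{f}+2H_{g}-K_{2678}|$ of $(1,2)$-curves and the linear determination of the remaining M\"obius constants.

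The middle step, however, contains a genuine conceptual error. You propose to compare the translation induced by $\psi_{*}$ with \eqref{eq:dP-KNY-trans} and, ``since these differ,'' to twist the naive root-basis matching by a Dynkin-diagram automorphism so that the dynamics agree. This cannot work and is not what the Lemma does: $[1\overline{1}\overline{1}1]$ and $[00\overline{1}1]$ are \emph{non-conjugate} in $\widetilde{W}(2A_{1}^{(1)})$ --- this is the central point of the whole paper, made precise in the Remark of Appendix~\ref{sec:dP-2A1-std} by the distinct Jordan structures $J(-1,1)^{\oplus2}$ versus $J(-1,1)^{\oplus3}$ of the induced maps on the Picard lattice --- so no composition with elements of $\operatorname{Aut}(2A_1^{(1)})\simeq\mathbb{D}_{4}$ (or indeed of the full symmetry group) can reconcile the two translation vectors. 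The identification in the Lemma matches only the \emph{geometry}: the residual $\mathbb{D}_{4}$-ambiguity is fixed by insisting that the $\delta_{i}$ correspond in the stated labelling and that the period-map normalization yield $b_{0}=a_{2}$, $b_{1}=a_{3}$ and $s=-t$; the translation $\langle 0,0,-1,1\rangle\delta$ of \eqref{eq:dP-Sakai-trans} is then simply computed and recorded, not engineered to equal anything. Relatedly, the ``asymmetry'' in the stated change of basis (e.g.\ $\mathcal{K}_{2}=\mathcal{H}_{q}-\mathcal{E}_{1}$ rather than an $\mathcal{E}_{i}$) is not produced by an automorphism twist but by the fact that the two point configurations are combinatorially different --- on the alt.~d-$\Pain{II}$ surface $\pi_{1}$ is a free point and $\pi_{2}\leftarrow\pi_{3}\leftarrow\pi_{4}$ is a cascade of length three, whereas the model surface has two cascades $p_{1}\leftarrow p_{2}$ and $p_{3}\leftarrow p_{4}$ of length two --- so the lattice isomorphism is necessarily of quadratic-transformation type rather than a permutation of exceptional classes. (Minor: your $[1\overline{1}1\overline{1}]$ should read $[1\overline{1}\overline{1}1]$.) If you replace the ``make the dynamics agree'' step by this correct selection criterion, the rest of your argument yields the Lemma.
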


The proof of this Lemma is standard and is omitted. 

\begin{remark} Note that alt.~d-$\Pain{II}$ map $\psi$ can be written using the generators of $\widetilde{W}(2A_1^{(1)})$ as 
$\psi=\sigma_{1} \sigma_{2} \sigma_{1} w_{2}$ and the corresponding mapping, when written in 
coordinates $(q,p)$ of the model surface on Figure~\ref{fig:KNY-pt-conf}, is given by 
\begin{equation}\label{eq:dP-Sakai-map-std}
 \psi\left(\begin{matrix}
 a_0 & a_1\\
 a_2 & a_3
 \end{matrix};\ t;\ q,p\right)=\left(\begin{matrix}
 a_0 & a_1\\
 a_{2}+1 & a_{3}-1
 \end{matrix};\ t; \overline{q}= - \frac{p t}{pq + a_{2}},\overline{p} = \frac{\overline{q}(\overline{q}-a_{1}) + t(p-1)}{\overline{q}^{2}} \right).
\end{equation} 	
Equations \eqref{eq:dP-Sakai-map} and \eqref{eq:dP-Sakai-map-std} are of course related by the change of 
variables \eqref{eq:KNYtoSakai-coords}.
\end{remark}

% subsubsection the (ssub:dP-Sakai)

% subsection discrete_painlev_e_equations_on_the_d__6_1_surface (end)

% \newpage

% \small
% \bibliographystyle{amsxport}
\bibliographystyle{amsalpha}
% \bibliography{/Users/adzham/Research/Bibliography/adzham,/Users/adzham/Research/Textbooks/textbooks}

% \bibliographystyle{plain}
\providecommand{\bysame}{\leavevmode\hbox to3em{\hrulefill}\thinspace}
\providecommand{\MR}{\relax\ifhmode\unskip\space\fi MR }
% \MRhref is called by the amsart/book/proc definition of \MR.
\providecommand{\MRhref}[2]{%
  \href{http://www.ams.org/mathscinet-getitem?mr=#1}{#2}
}
\providecommand{\href}[2]{#2}

\end{document}